\documentclass[11pt]{article}
\usepackage{fullpage}

\newcommand\bigone[1]{}
\newcommand\smallone[1]{#1}
\usepackage{amsfonts,amssymb,amsmath,amsthm,latexsym}
\usepackage{array,xspace}

\newcommand{\ignore}[1]{}

% \ii = sqrt{-1}
\newcommand{\ii}{\mathsf{i}}
% e = 2.7....
\newcommand{\ee}{\mathrm{e}}
\newcommand{\eps}{\varepsilon}

\newcommand{\polylog}{\mathop{\mathrm{polylog}}}

\newcommand{\etal}{{\em et al.}\xspace}

\def\makeletter#1{%
\expandafter \newcommand \csname b#1\endcsname {\mathbb{#1}}%
\expandafter \newcommand \csname c#1\endcsname {\mathcal{#1}}%
\expandafter \newcommand \csname t#1\endcsname {\widetilde{#1}}%
\expandafter \newcommand \csname ct#1\endcsname {\widetilde{\mathcal{#1}}}%
}
\def\makeletters(#1#2){\makeletter#1\ifx#2.\else\makeletters(#2)\fi}
\makeletters(QWERTYUIOPASDFGHJKLZXCVBNM.)

\def\makeSkob#1#2#3{%
\def\LLL{\left} \def\RRR{\right}
\expandafter \edef \csname #1\endcsname #2##1#3{\SkobInner}
\def\LLL{\bigl} \def\RRR{\bigr}
\expandafter \edef \csname #1A\endcsname #2##1#3{\SkobInner}
\def\LLL{\Bigl} \def\RRR{\Bigr}
\expandafter \edef \csname #1B\endcsname #2##1#3{\SkobInner}
\def\LLL{\biggl} \def\RRR{\biggr}
\expandafter \edef \csname #1C\endcsname #2##1#3{\SkobInner}
\def\LLL{\Biggl} \def\RRR{\Biggr}
\expandafter \edef \csname #1D\endcsname #2##1#3{\SkobInner}
\def\LLL{} \def\RRR{}
\expandafter \edef \csname #1O\endcsname #2##1#3{\SkobInner}
}

\def\SkobInner{\LLL(##1\RRR)} \makeSkob{s}[]
\def\SkobInner{\LLL[##1\RRR]} \makeSkob{sk}[]
\def\SkobInner{\LLL\lbrace##1\RRR\rbrace} \makeSkob{sfig}{}{}
\def\SkobInner{\LLL\lfloor##1\RRR\rfloor} \makeSkob{floor}[]
\def\SkobInner{\LLL\lceil##1\RRR\rceil} \makeSkob{ceil}[]
\def\SkobInner{\LLL\langle##1\RRR\rangle} \makeSkob{ip}<>
\def\SkobInner{\LLL|##1\RRR\rangle} \makeSkob{ket}|>
\def\SkobInner{\LLL|##1\RRR|} \makeSkob{abs}||
\def\SkobInner{\LLL\|##1\RRR\|} \makeSkob{norm}||
\def\SkobInner{\LLL\|##1\RRR\|_{\noexpand\mathrm F}} \makeSkob{normFrob}||
\def\SkobInner{\LLL\|##1\RRR\|_{\noexpand\mathrm{tr}}} \makeSkob{normtr}||

\newcommand{\midA}{\mathbin{\bigl|}}
\newcommand{\midB}{\mathbin{\Bigl|}}

\def \elem[#1]{[\![#1]\!]}
\def \bigfrac#1/{\left.#1\right/}
\def \bigfracR/#1.{\left/#1\right.}

\newcommand{\pfstart}{\begin{proof}} 
\newcommand{\pfsketch}{\begin{proof}[Proof sketch]}
\newcommand{\pfend}{\end{proof}} 
\newcommand{\itemstart}{\begin{itemize}\itemsep0pt}
\newcommand{\itemend}{\end{itemize}}
\newcommand{\descrstart}{\begin{description}\itemsep0pt}
\newcommand{\descrend}{\end{description}}
\newcommand{\enumstart}{\begin{enumerate}\itemsep0pt}
\newcommand{\enumend}{\end{enumerate}}

%Redefine if wanted
\newcommand{\CrossRef}[2]{#1~\ref{#2}}

\bigone{\newtheorem{thm}{Theorem}[chapter]}
\smallone{\newtheorem{thm}{Theorem}[section]}

\newcommand{\maketheorem}[2]{
\newtheorem{#1}[thm]{#2}
\expandafter\def \csname ref#1\endcsname ##1{\CrossRef{#2}{#1:##1}}
}

\maketheorem{lem}{Lemma}
\maketheorem{prp}{Proposition}
\maketheorem{cor}{Corollary}
\maketheorem{clm}{Claim}
\maketheorem{fact}{Fact}

\theoremstyle{definition}
\maketheorem{rem}{Remark}
\maketheorem{obs}{Observation}
\maketheorem{defn}{Definition}
\maketheorem{exm}{Example}
\maketheorem{assump}{Assumption}

\def \rf(#1:#2){\csname ref#1\endcsname{#2}}

%NB: redifined in fancy.draft
\def\mycommand#1#2{
\expandafter\newcommand \csname#1\endcsname {#2}%
}
\def\remycommand#1#2{
\expandafter\renewcommand \csname#1\endcsname {#2}%
}

%Stuff for the bibliography and hyperrefs
%NB: redifined in fancy.draft

% MARGIN NOTES

%\newif\ifnotes\notestrue
\newif\ifnotes\notesfalse

\ifnotes
\usepackage{xcolor}
\definecolor{mygrey}{gray}{0.50}
\newcommand{\notename}[2]{{\textcolor{mygrey}{\footnotesize{\bf (#1:} {#2}{\bf ) }}}}
\newcommand{\noteswarning}{{\begin{center} {\Large WARNING: NOTES ON}\end{center}}}

\else

\newcommand{\notename}[2]{{}}
\newcommand{\noteswarning}{{}}

\fi

\newcommand{\onote}[1]{{\notename{Oded}{#1}}}
\newcommand{\rnote}[1]{{\notename{Ronald}{#1}}}
\newcommand{\bnote}[1]{{\notename{Alexander}{#1}}}

\usepackage{algorithm}

\newcommand\draft[1]{}
\newcommand\release[1]{#1}
\draft{
\usepackage{easy-todo}

}
\release{
\usepackage{todonotes}

}

\draft{
\usepackage[notref,notcite]{showkeys}
}

\draft{

}
\release{
\usepackage{color}
\usepackage[leftbars, color]{changebar}
\setlength{\changebarsep}{2mm}
\setlength{\changebarwidth}{1pt}
\cbcolor{green}

}

\draft{
\usepackage{silence}
\WarningFilter{latex}{Marginpar on page}
\WarningFilter{pdftex}{}
\def\mycommand#1#2{%
\marginpar{\fbox{\sf{#1:} $#2$}}%
\expandafter\newcommand \csname#1\endcsname {#2}%
}
\def\remycommand#1#2{%
\marginpar{\fbox{\sf{#1:} $#2$}}%
\expandafter\renewcommand \csname#1\endcsname {#2}%
}
}

\draft{\newtheorem{res}{Research Problem}}
\release{}

%Stuff for the bibliography
\draft{}

\draft{
\newcommand{\texorpdfstring}[2]{#1}

\newcommand{\url}[1]{#1}
}

\release{\usepackage[breaklinks]{hyperref}}

\def\01{\{0,1\}}
\def\bool{\{0,1\}}
\def\cube{\bool^n}

 % inproduct, < , >

\newcommand{\Inf}{\mathop{\mathrm{Inf}}\nolimits}
\newcommand{\Exp}{\mathop{\bE}}
\newcommand{\Var}{\mathop{\mathrm{Var}}}

\newcommand{\Adv}{\mathop{\mathrm{ADV}^{\pm}}}

\newcommand{\pr}{\mathop{\mathrm{Pr}}}
\newcommand{\dd}{\mathrm{d}}

\newcommand{\reg}[1]{{\mathsf{#1}}}
\def \ket#1|#2>%
{\ifx&#1&%#1 is empty
|#2\rangle
\else
|#2\rangle_{\mathsf{#1}}
\fi}

\newcommand{\algcaption}[1]{
\refstepcounter{algorithm}
\begin{tabular}{>{\bf }rp{.75\textwidth}}
Algorithm \thealgorithm& #1\\\hline
\end{tabular}
}

\newcounter{subroutine}[algorithm]
\renewcommand{\thesubroutine}{\thealgorithm.\arabic{subroutine}}
\newcommand{\sbrtncaption}[1]{
\refstepcounter{subroutine}
\begin{tabular}{>{\bf }rp{.7\textwidth}}
\hline
Subroutine \thesubroutine& #1\\\hline
\end{tabular}
}

\newcommand{\negmedskip}{\vspace{-\medskipamount}}

\begin{document}

%\title{Better Quantum Junta Testers}
%\title{Quantum Property Testers: from Group Testing to Juntas}
\title{Efficient Quantum Algorithms\\ 
for (Gapped) Group Testing and Junta Testing}
\author{Andris Ambainis\thanks{Faculty of Computing, University of Latvia.
Supported by the European Commission FET-Proactive project QALGO, ERC Advanced Grant MQC and Latvian State Research programme NexIT project No.1.}
\and Aleksandrs Belovs\thanks{
Faculty of Computing, University of Latvia.  Supported by FP7 FET Proactive project QALGO.
Part of this work was done while at CSAIL, Massachusetts Institute of Technology, USA, supported by Scott Aaronson's Alan T. Waterman Award from the National Science Foundation.}
\and Oded Regev\thanks{Courant Institute of Mathematical Sciences, New York University. 
Supported by the Simons Collaboration on Algorithms and Geometry and by the National Science Foundation (NSF) under Grant No.~CCF-1320188. Any opinions, findings, and conclusions or recommendations expressed in this material are those of the authors and do not necessarily reflect the views of the NSF.}
\and Ronald de Wolf\thanks{CWI and University of Amsterdam, the Netherlands.
Partially supported by ERC Consolidator Grant QPROGRESS and by the European Commission FET-Proactive project Quantum Algorithms (QALGO) 600700.}
}
\maketitle

%%%% DON'T REMOVE %%%%%
\noteswarning
%%%% DON'T REMOVE %%%%%

\begin{abstract}
In the $k$-junta testing problem, a tester has to efficiently decide whether a given function $f:\01^n\rightarrow\01$ is a $k$-junta (i.e., depends on at most $k$ of its input bits) or is $\eps$-far from any $k$-junta. Our main result is a quantum algorithm for this problem with query complexity $\tO(\sqrt{k/\eps})$ and time complexity $\tO(n\sqrt{k/\eps})$. This quadratically improves over the query complexity of the previous best quantum junta tester, due to At\i c\i\ and Servedio.  Our tester is based on a new quantum algorithm for a gapped version of the combinatorial group testing problem, with an up to quartic improvement over the query complexity of the best classical algorithm.
For our upper bound on the \emph{time} complexity we give a near-linear time implementation of a shallow variant of the quantum Fourier transform over the symmetric group, similar to the Schur-Weyl transform.\rnote{6/7: changed the latter}  We also prove a lower bound of $\Omega(k^{1/3})$ queries for junta-testing (for constant~$\eps$).
%\rnote{30/6: commented out the ``other problems'' sentence}
%, and obtain quantum-classical separations for several other testing problems involving the Fourier spectrum of~$f$.
\end{abstract}

\newpage

\section{Introduction}
\mycommand{ggt}{GGT\xspace}
\mycommand{eggt}{EGGT\xspace}
\mycommand{qggt}{QGGT\xspace}

\subsection{Quantum property testing}

Many computational problems are too hard to solve perfectly in any reasonable amount of time (especially if $\mathsf{P} \ne\mathsf{NP}$, as seems likely). 
Accordingly, much of theoretical as well as practical computer science is about trying to efficiently solve those problems in a weaker sense.  Examples are trying to \emph{approximate} the optimal solution, trying to solve the problem fast \emph{on average}, trying to solve it fast \emph{on most instances}, etc.  A structured model for the latter is \emph{property testing}.  
Here our goal is to test whether a given (usually very large) object $f$ has a certain property~$\cal P$.  Typically the hardest instances of the problem are the ones that are on the boundary, ``just outside'' of the property, where one needs to look at a large part of~$f$ to decide if it is in or out of the property.
But, in many cases such instances tend to appear due to noise or other imperfections, and as such should not really be rejected. 
The setting of property testing excludes such instances: it assumes that the given instance $f$ either has the property~$\cal P$, or is at least somewhat ``far'' from~$\cal P$ (i.e., far from all instances that have property~$\cal P$, according to some suitable distance measure). 
This ``promise'' on the inputs makes many hard problems much easier, and many property testers have been found over the last two decades to efficiently test properties of very large objects, see for instance~\cite{goldreich:prop}.
%\bnote{I think the main reason for this promise is that we assume that the input has some statistical noise in it in the first place.  So the question is rather: ``Is it \emph{likely} that the input possesses $\cP$?.''  If the input is far, then it is unlikely, and we reject.  On the other hand, inputs that are close should be accepted.  In most cases, this happens automatically if we accept all positive instances and use few queries.}
%\rnote{2/3: added the following sentence in response to Alexander's note}
Note that a tester even allows us to conclude something about inputs that are outside of the promise: if a tester accepts input~$f$ with high probability, then $f$ must be close to at least one element that has the property~$\cal P$.

In this paper we focus on \emph{quantum} algorithms for property testing.  These are substantially less studied than classical algorithms, but quantum property testing has been receiving increasing attention in the last few years, both for testing properties of classical objects and for testing properties of quantum objects.  See~\cite{montanaro:quantumProperyTest} for a recent survey.

\subsection{Group testing}

We first develop a new quantum algorithm for a version of the (combinatorial) \emph{group testing} problem.\footnote{To avoid confusion: while we use the established term ``group testing'' here, this is \emph{not} a property testing problem in the sense described above, because we have no ``in the property or far from the property'' promise here.} 
Group testing was invented in World War II to efficiently identify ill soldiers~\cite{dorfman:grouptesting}.  Suppose $n$ soldiers have each given samples of their blood, and up to $k$ of them are ill.  One way to identify the ill ones is to separately test each of the $n$ blood samples.  However, blood tests are expensive, and if $k\ll n$ then something much more efficient can be done.  By combining parts of the blood samples of a subset $S$ of all soldiers and testing the combined sample, we can determine whether \emph{at least one} of the soldiers in $S$ has the disease, at the expense of only one blood test.  Using binary search we can then identify one ill soldier using $O(\log n)$ tests, and all $k$ ill soldiers using $O(k\log n)$ tests.\footnote{$O(k\log n)$ tests is essentially optimal for simple information-theoretic reasons: each test gives at most one bit of information, but by identifying the $k$ ill soldiers we learn $\log{\binom{n}{k}}=\Omega(k\log(n/k))$ bits of information. There is a large literature that optimizes the constant factor and other aspects of group testing algorithms,\rnote{30/6: added:} see for instance~\cite{du:combinatorialGroupTesting}.}

Here we consider a ``gapped'' decision version of the group testing problem, which in its simplest form is the following:
\begin{quote}
{\bf Gapped group testing (GGT).}
For some set $A\subseteq[n]$, define $f_A\colon 2^{[n]}\to\01$ by setting $f_A(S)=1$ iff $S$ intersects $A$.  Given the ability to query an $f_A$ where either $|A|\leq k$ or $|A|\geq k+d$, decide which is the case.
\end{quote}
Note that the function $f_A$ is like a blood test, where $A$ is the set of all ill soldiers and the input~$S$ is the set of the soldiers whose blood we include in the tested sample. The function outputs~1 exactly when at least one of the soldiers in $S$ is ill.

\rnote{2/7: changed}
%Garc\'{i}a-Soriano~\cite[Section~5.3]{garciasoriano:phd} already studied the classical query complexity of this problem for $d=1$, showing that it lies between $\Omega(k)$ and $O(k\log k)$.
Recently, Belovs~\cite{belovs:learningSymmetricJuntas} showed that if $|A|\le k$, then one can \emph{identify} $A$ with $O(\sqrt{k})$ quantum queries to~$f_A$. Clearly this algorithm can be used to solve the \ggt problem for $d=1$.
The randomized query complexity of this problem for $d=1$ is $\widetilde\Theta(k)$ (see \rf(sec:ggtrandcomplexity) for references and proofs about the randomized case), so we have a quadratic quantum improvement over classical.

Things get more interesting as $d$ grows.  In \rf(sec:groupTesting) we show a tight bound on the quantum query complexity of \ggt of $\Theta(\sqrt{k/d})$ for all $d\le k$, while the randomized complexity of this problem is $\widetilde\Theta(k)$ for $d\le\sqrt{k}$ and $\widetilde\Theta((k/d)^2)$ for $d\geq\sqrt{k}$.  We subsequently use this quantum algorithm as a subroutine for our junta testing algorithm, but we feel the \ggt problem is quite interesting by itself as well, and may find applications elsewhere. 
%Below we will see for instance that a modification of our algorithm solves the problem of testing the support size of a distribution.\rnote{2/7: remove this sentence?} 
We now mention several other reasons why our \ggt algorithm is interesting.

\paragraph{Fourth-power improvement.}
By considering our bounds on the complexity of the \ggt problem, we see that there is a \emph{quartic} (fourth-power) quantum improvement in query complexity for the regime $\sqrt{k}\le d\le k$.
Most speed-ups obtained by quantum algorithms are either exponential (mostly, for computational problems from algebra or number theory) or quadratic or less (for algorithms based on Grover's algorithm or its generalizations). 
In contrast, our algorithm provides a fourth-power speedup, which is quite surprising given that it is based on the OR function, for which the best-possible quantum speedup is only quadratic.

Only a few other examples of such speed-ups are known: a cubic speed-up for exponential congruences by van Dam and Shparlinski~\cite{vanDam:exponentialCongruences}, a quartic speed-up for finding counterfeit coins by Iwama~\etal~\cite{iwama:quantumCounterfeit}, 
a cubic improvement for monotonicity testing on the hypercube by Belovs and Blais~\cite{belovs&blais:monotonicity} (though that work was done after the work presented here),
and a quartic improvement for learning an ``exact-half junta'' by Belovs~\cite{belovs:learningSymmetricJuntas}.  
Among these problems, a matching lower bound 
is known only for the last one.

\rnote{30/6: added} Very recently, several new separations were found for \emph{total} Boolean functions: a quartic speed-up of bounded-error quantum algorithms over \emph{deterministic} classical algorithms~\cite{abblss:separations}, and a 2.5th-power speed-up of bounded-error quantum algorithms over bounded-error classical algorithms~\cite{bendavid:supergrover}. \bnote{05.07: Removed the end of the paragraph as irrelevant}\rnote{30/6: I put it back and added a few words}  
%Both are surprising because they are super-quadratic, but neither is a quartic speed-up of bounded-error quantum algorithms over bounded-error classical algorithms (unlike our result here).

\bnote{03.07: I tend to agree to one of the reviewers.  We say that this is ``unusual'', and then mention a half dozen of examples where it has been attained.  Removed ``unusual'' altogether.}\rnote{I guess it was unusual when we did it 1.5 years ago; serves us right for taking so long to finish this :) }

\paragraph{Direct use of adversary bound.}
We construct the algorithm from a feasible solution to the (dual) semidefinite program for the \emph{adversary bound}.
Reichardt \etal~\cite{hoyer:advNegative,reichardt:spanPrograms,lee:stateConversion} have shown that the adversary bound
characterizes quantum query complexity up to a constant factor.
This means that, in principle, any quantum query algorithm can be derived as a solution to this semidefinite program. 
However, the number of new quantum algorithms that have actually been obtained in this way (without any intermediate framework between the semidefinite program and the algorithm) remains fairly small.
Moreover, the majority of these algorithms are developed in the \emph{learning graph} methodology~\cite{belovs:learning}, while our algorithm is based on different ideas, borrowing from~\cite{belovs:learningSymmetricJuntas}.

\paragraph{Robustness.}
%\bnote{Do we change this caption?}\rnote{1/7: I'm OK with the current caption}
Our algorithm still works if the action of the input function $f_A(S)$ is not defined for some values of $S$.
For instance, if $|A|=k$ and $S$ intersects~$A$, the value of $f_A(S)$ can be anything: 0, 1, or even undefined.  The same is true if $|A|=k+d$ and $S\cap A = \emptyset$.  We say that such $S$ are \emph{irrelevant variables}.
This property turns out to be very useful in applications of the \ggt algorithm, in particular when we use it as a subroutine in our junta tester.

As far as we know, such a property of a quantum algorithm was not previously studied or even explicitly defined.  
%\bnote{Removed mentioning of amplitude amplification.}
%We can only mention that the quantum amplitude amplification algorithm was implicitly used to evaluate the OR function in this regime.\rnote{15/5: where?}\onote{5/18 I second Ronald that a reference is needed here. Also, I think what we want to say is slightly different. Suggested formulation: ``The only example we are aware of is the quantum amplitude amplification algorithm, which can be seen as providing a robust evaluation of the OR function.'' But I'm not sure if that's even accurate}
However, it turns out that this property is automatically satisfied for a quantum algorithm derived from the adversary bound if the latter satisfies some additional conditions.  We obtain a similar behavior for the tight composition property of the adversary bound in the presence of irrelevant variables.

\paragraph{Time-efficient implementation.}
Our algorithm is one of the few quantum algorithms derived from the adversary bound with a \emph{time-efficient} implementation, i.e., one that is efficient in total number of gates as well as in total number of queries
%\rnote{30/6: added in response to a referee} 
(in general, the time complexity of the adversary-derived algorithm can be exponentially large in the number of input bits to the problem).
Other examples are the formula-evaluation algorithm of Reichardt and \v{S}palek~\cite{reichardt:formulae} and the algorithm for $st$-connectivity of Belovs and Reichardt~\cite{belovs:learningClaws}. 

The time complexity of our algorithm is $\tO(n\sqrt{k/d})$, roughly $n$ times its query complexity.  This is probably the best one can hope for: the oracle takes an $n$-qubit input register, so it takes $\Omega(n)$ gates just to touch all those qubits. 
Thus, any algorithm trying to beat our running time can only afford to change a small fraction of the string given to the input oracle.  Also, realistic oracles will typically take time $\Omega(n)$ to answer the query.

%\paragraph{Fast QFT over symmetric group.}
%\paragraph{Low-weight QFT over the symmetric group.}
%The above algorithms are obtained from feasible solutions to the semidefinite program for the adversary bound, which allows us to upper bound their query complexity.  However, in general algorithms that are efficient in terms of query complexity need not be efficient at all in terms of \emph{time} complexity (i.e., the number of elementary quantum gates plus oracle queries needed to implement them).  A famous example is the ``hidden subgroup problem'' on a general non-Abelian group~$G$; Ettinger et al.~\cite{ettinger:hspQuery} showed this can be solved using a number of quantum queries that is polynomial in~$\log|G|$, but the time complexity of these algorithms is exponentially worse (and we do not know any better algorithms).  Fortunately, the feasible solutions for the adversary bound that we give here are very symmetric, enabling us to implement our algorithms \emph{time-efficiently}.
%, something not done in~\cite{belovs:learningSymmetricJuntas}.

\bnote{05.07: decapitated this section and re-wrote the ending}
The key to our time-efficient algorithm is an efficient, $\tO(n)$-time, implementation of the quantum Fourier transform (QFT) on the linear space which we denote by~$M^n$.
It is of dimension $2^n$ and has an orthonormal basis indexed by the set of all subsets of $[n]$.
The symmetric group $\bS_n$ acts naturally on this space by permuting its basis elements, hence $M^n$ can be considered as an $\bS_n$-module (a representation of $\bS_n$).

Most of the previous work in this direction focused on the \emph{regular} representation of $\bS_n$, called the QFT over the symmetric group.  The most efficient implementation of this kind is due to Kawano and Sekigawa~\cite{KawanoS14} and can be implemented in depth $\tO(n^3)$, improving over~\cite{Beals97,mrr:genericqft,KawanoS13}.
Our implementation, \onote{was: on contrary}on the other hand, is close to the efficient quantum Schur-Weyl transform of Bacon, Chuang and Harrow~\cite{bch:prl04,bch:soda07}, though their algorithm is defined for another group, namely a product of a general linear group and a symmetric group.  
\onote{was: Up to}To the best of our knowledge, this is the first ``algorithmic'' application of this transformation (\cite{bch:soda07} lists a number of applications of this transformation for quantum protocols).
\rnote{6/7: added}In order to ensure that the transform has all properties we need and to make our paper self-contained, we describe our construction in full detail instead of just referring to (slight modifications of) the construction of~\cite{bch:prl04,bch:soda07}.

\subsection{Junta testing}

\mycommand{atici}{At{\i}c{\i}\xspace}

Our main result is about \emph{junta testing}. 
Let $f:\01^n\rightarrow\01$ be a Boolean function, and $J\subseteq[n]$ be the set of (indices of) variables on which the function depends. We say that $f$ is a \emph{$k$-junta} if $|J|\leq k$. Such functions are often studied, for instance in learning theory if most of the features are irrelevant for the concept that needs to be learned (e.g., in biology if only a few genes determine some biological property). We say that $f$ is \emph{$\eps$-far from any $k$-junta} if the normalized Hamming distance between $f$ and $g$ is at least~$\eps$ for every $k$-junta $g$ (i.e., $f$ and $g$ differ on at least $\eps 2^n$ inputs). The \emph{$k$-junta testing problem} is:
\begin{quote}
{\bf $k$-junta testing.}
Given the ability to query an $f\colon\01^n\rightarrow\01$ that is either a $k$-junta or $\eps$-far from any $k$-junta, decide which is the case.
\end{quote}
We would like to test this efficiently.  The primary measure of efficiency is the number of ``queries,'' evaluations of~$f$, which are usually the most expensive part of an algorithm.  However, we will also consider time complexity later.

Junta testing has been well-studied in the last decade, see~\cite{blais:testingJuntasSurvey} for a recent survey. Classically, the best known tester is by Blais~\cite{blais:testingJuntas} and uses $O(k\log k + k/\eps)$ queries to~$f$, quadratically improving upon an earlier tester of~\cite{fischer:testingJuntas}. The best known classical \emph{lower} bound is $\Omega(k/\eps)$~\cite{chockler:testingJuntasLower}.\footnote{For the special case of \emph{non-adaptive} junta testers (i.e., ones that choose all queries in advance, so the next query will not depend on outcomes of earlier queries), Servedio \etal~\cite{stw:adaptivityhelps} very recently showed a slightly stronger lower bound, which is bigger than the upper bound of Blais's tester for appropriate values of~$\eps$. This shows that adaptivity helps (slightly) for classical junta testers.} 

The best \emph{quantum} tester, due to \atici and Servedio~\cite{atici:testingJuntas}, uses $O(k/\eps)$ queries. It is based on Fourier sampling. 
This quantum tester is better than Blais's classical tester by a $\log k$-factor (for constant~$\eps$), but does not beat the best known classical lower bound, leaving open the possibility of an equally efficient classical tester.
%\footnote{Montanaro and de Wolf~\cite[Section~2.1.3]{montanaro:quantumProperyTest} observed that the upper bound of~\cite{atici:testingJuntas} can be improved to $O(k/\sqrt{\eps})$ using amplitude amplification (see~\cite{chakraborty:quantumTestLinearity} for a similar improvement). This beats the classical lower bound for $\eps=o(1)$, but not for constant~$\eps$. } 

Our main result in this paper is a quantum tester with query complexity $\tO(\sqrt{k/\eps})$, which (up to logarithmic factors) quadratically improves over the previous best quantum junta tester and actually beats the known classical lower bound \rnote{30/6: added}for the first time. We also give a time-efficient implementation.
\medskip

\noindent
{\bf Main theorem (informal).}
There is a quantum $k$-junta tester that uses $\tO(\sqrt{k/\eps})$ queries and $\tO(n\sqrt{k/\eps})$ time (i.e., elementary quantum gates and query gates).

\medskip

Similarly to the \ggt problem, this time complexity is the best that one could reasonably expect given our query complexity, because each query to~$f$ involves an $n$-qubit input register.

Our junta tester is described in~\rf(sec:qalgs). The idea is the following (suppressing the dependence on~$\eps$ for simplicity). If $f$ is far from any $k$-junta, then it depends on some $K>k$ variables, and together those $K-k$ ``extra'' variables will have at least $\eps$ ``influence''\rnote{30/6: added:} (this will be quantified using the Fourier coefficients of~$f$). We divide those extra variables in sets depending on their influence, and show that one of the following cases holds: 
\begin{enumerate}
\item \label{point:odin}
For some integer $j\in\{0,\ldots,O(\log k)\}$ there are $d\geq k/2^j$ extra variables, each of influence at least $2^j/k$. We thus need to distinguish the case where there are at most $k$ influential variables from the case where there are at least $k+d$ variables, each of influence roughly $2^j/k$. Our quantum group tester, combined with a procedure to detect variables of influence roughly $2^j/k$, can distinguish these two cases using $O(\sqrt{k})$ queries.\footnote{In the setting of classical testers, Garc\'{i}a-Soriano~\cite[p.~111]{garciasoriano:phd} also noted ``a striking resemblance between group testing and junta testing.''}
\item There are many ($\gg k$) variables of very low influence ($<1/k$). In this case a random subset~$V$ of $1/k$ of all variables has influence $\Omega(1/k)$ with probability close to~1.  In contrast, if $f$ is a $k$-junta then with significant probability $V$ will not contain any relevant variable, and hence have 0 influence. We can distinguish those two cases using $O(\sqrt{k})$ queries. 
\end{enumerate}
We then put together the testers for these special cases in order to get an $\tO(\sqrt{k/\eps})$-query tester that covers all cases. 

Let us now briefly mention the lower bounds we obtain.
As already noted by \atici and Servedio~\cite{atici:testingJuntas} and explained in \rf(sec:lower), the classical lower bound approach for junta testing fails for quantum algorithms, because the corresponding instances can be easily solved quantumly in $O(\log k)$ queries.
Instead of this, in \rf(sec:lower) we describe a different approach using reduction from the problem of testing image size of a function.\rnote{30/6: should we keep it like this if we remove the `other results' stuff?}
\bnote{03.07: You are right.  Changed to ``image size of a function''}
This already gives a lower bound of $\Omega(k^{1/3})$ by the Aaronson-Shi lower bound for the collision problem~\cite{shi:collisionLower}.
We believe that the actual complexity of testing support size of a distribution is around $\Omega(\sqrt{k})$, but proving this seems to require techniques beyond the state of the art in quantum lower bounds.

\subsection{Some remarks on organization}
The main results of the paper---quantum algorithms for the \ggt problem and junta testing---are given in Sections~\ref{sec:groupTesting}--\ref{sec:qggtalgorithm}.  For both problems, we give two versions of the algorithm: first, a query-efficient algorithm, and then its time-efficient implementation.
Our organization of the paper is such that a reader only interested in the query-efficient algorithms can read Sections~\ref{sec:groupTesting} and~\ref{sec:qalgs}, skipping the more technical time-efficient implementations of~\rf(sec:qggtalgorithm).

For the \ggt problem, a quantum query-efficient algorithm is given in \rf(sec:ggtquery), and a time-efficient implementation of the same algorithm is given in Sections~\ref{sec:proofOfQGGT}--\ref{sec:QFT}.

For the junta testing problem, we actually give two quantum algorithms: first, a query algorithm with complexity $O(\sqrt{k/\eps}\log k)$ in \rf(thm:juntaMain), and then an algorithm with slightly worse query complexity $\tO(\sqrt{k/\eps})$ and time complexity $\tO(n\sqrt{k/\eps})$ in \rf(thm:juntaTimeEfficient).

%\bnote{Removed the end of this section.}

%The junta testing algorithm is described in \rf(sec:qalgs), and all technical difficulties of its time-efficient implementation are contained in the \ggt algorithm.

%The difference between the two algorithms for junta testing lies in the composition of the \ggt procedure and the procedure for detecting influential variables, mentioned in Point~\ref{point:odin} on page~\pageref{point:odin}.  
%For the query-efficient algorithm, we first compose the dual adversary bounds for both procedures, and then apply a general result that transforms the resulting dual adversary into a quantum query algorithm.  Here we use our new result for robust composition with irrelevant variables, \rf(prp:irrelevant)(b), proven in \rf(app:irrelevant).
%
%For the time-efficient algorithm, we first transform the adversary bound for the \ggt problem into a quantum algorithm, and then compose it (as a quantum algorithm) with the procedure for detecting influential variables.  We have to prove not only that our \ggt algorithm has low time complexity, but also that it works if some input variables are ill-defined.
%To show the latter, we have to go through the implementation of the algorithm in \rf(sec:proofOfQGGT), and verify that it still works in these robust settings.
%
%\rf(sec:other) contains the complementary results mentioned in \rf(sec:otherresults).  \rf(sec:lower), which contains the proof of a lower bound for quantum junta testing, can be read independently.

\section{Preliminaries}
\label{sec:prelim}

We use $[n]$ to denote the set $\{1,2,\ldots,n\}$, and $2^A$ to denote the set of subsets of $A$.  A $k$-subset is a subset of size $k$.
All matrices in this paper have real entries.
If $A$ is a matrix, $A\elem[i,j]$ denotes the element at row $i$ and column~$j$.
A \emph{projector} always stands for an orthogonal projector.
We use $\Pi_S$ to denote the projector onto a subspace $S$.
We use $\log$ and $\ln$ to denote logarithms in base~2 and~$\ee$, respectively.
The notation $\bF_q$ stands for a finite field with $q$ elements. 

We assume familiarity with basic probability theory.  
Let $\cB(k,p)$ denote the binomial distribution: $\Pr[\cB(k,p)=i] = \binom ki p^i(1-p)^{k-i}$.
We use $\cH_n(k,m)$ to denote the hypergeometric distribution, i.e., the distribution of $\absA|A\cap [k]|$ when $A$ is sampled from all $m$-subsets of $[n]$ uniformly at random.
By $X\sim \cB$, we denote that $X$ is sampled from probability distribution $\cB$.

%\bnote{todo} Our main algorithmic tools will be Fourier analysis, amplitude amplification, and the adversary method. 

\subsection{Quantum algorithms}
\label{sec:quantumAlgorithms}

Let us define quantum query algorithms.
For a more complete treatment see~\cite{buhrman:querySurvey}.
A quantum query algorithm is defined as a sequence of unitary transformations alternating with oracle calls:
\begin{equation}
\label{eqn:queryAlgDef}
U_0\to O_x\to U_1\to O_x \to \cdots \to U_{T-1} \to O_x\to U_T.
\end{equation}
Here the $U_i$s are arbitrary unitary transformations that are independent of the input.  The input oracle $O_x$ is the same throughout the algorithm, and is the only way the algorithm accesses the input string $x=(x_j)$.
The input oracle decomposes in the following way:
\begin{equation}
\label{eqn:OxDef}
O_x = \bigoplus\nolimits_{j\in[n]} O_{x,j},
\end{equation}
where $O_{x,j}$ is some unitary transformation that only depends on the symbol $x_j$.
In this paper $x$ will be a Boolean string, and we adopt the following convention: $O_{x,j} = I$ if $x_j=0$, and $O_{x,j}=-I$ if $x_j=1$, where $I$ is the identity operator.

The computation starts in a predefined state $\ket|0>$.  After all the operations in~\rf(eqn:queryAlgDef) are performed, some predefined output register is measured.  We say that the algorithm \emph{computes} a function $F$ (with bounded error) if, for any $x$ in the domain, the result of the measurement is $F(x)$ with probability at least~$2/3$.  The number $T$ is the \emph{query complexity} of the algorithm.  The smallest value of $T$ among all algorithms computing $f$ is the quantum query complexity of $F$, and is denoted by $Q(F)$.

We will also be interested in \emph{time complexity} (also known as \emph{gate complexity}) of the algorithm.  It is defined as the total number of elementary quantum gates (from some fixed universal set of gates) required to implement all the unitary transformations $U_0,\ldots,U_T$.

One of our main algorithmic tools is amplitude amplification.  This is encapsulated in the following result of Brassard~\etal~\cite[Section~2]{brassard:amplification}, which generalizes Grover's quantum search algorithm~\cite{grover:search}.

\begin{lem}[Amplitude amplification]
\label{lem:amplampl}
Let $\cal A$ be some quantum procedure and $S$ some set of basis states on the algorithm's output space. Suppose that the probability that measuring the state ${\cal A}\ket|0>$ gives a basis state in~$S$ is at least $p$. Then there exists another procedure ${\cal B}$, which invokes ${\cal A}$ and ${\cal A}^{-1}$ $O(1/\sqrt{p})$ many times (we sometimes call such an invocation a ``round of amplitude amplification''), such that the probability that measuring the state ${\cal B}\ket|0>$ gives a basis state in~$S$ is at least~$9/10$. 
\rnote{30/6: added in response to referee}
If, in contrast, the probability of obtaining a basis state in~$S$ when measuring ${\cal A}\ket|0>$ was~0, this probability will still be~0 when measuring ${\cal B}\ket|0>$.
\end{lem}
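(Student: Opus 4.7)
The plan is to prove this via the standard Grover-style rotation argument, viewing the iteration of amplitude amplification as a rotation in a suitable two-dimensional subspace. First, I would set $\ket|\psi> = {\cal A}\ket|0>$ and decompose
\[
\ket|\psi> = \sin\theta \ket|\psi_G> + \cos\theta\ket|\psi_B>,
\]
where $\ket|\psi_G>$ is the unit vector obtained by projecting $\ket|\psi>$ onto $\linspan(S)$ and normalizing, $\ket|\psi_B>$ is the analogous unit vector orthogonal to $\linspan(S)$, and $\theta \in [0,\pi/2]$ satisfies $\sin^2\theta = \normA|\Pi_S\ket|\psi>|^2 \ge p$. Next, I would introduce the two reflections $R_S = I - 2\Pi_S$ (reflecting about the ``bad'' subspace) and $R_0 = {\cal A}(2\ketbra{0}{0} - I){\cal A}^{-1}$ (reflecting about $\ket|\psi>$), and set $Q = -R_S R_0$. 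Each application of $Q$ uses one invocation of $\cal A$ and one of ${\cal A}^{-1}$.

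The key computation is that $Q$ preserves the two-dimensional real subspace $W$ spanned by $\ket|\psi_G>$ and $\ket|\psi_B>$, and acts on $W$ as a rotation by $2\theta$; hence $Q^m\ket|\psi> = \sin((2m+1)\theta)\ket|\psi_G> + \cos((2m+1)\theta)\ket|\psi_B>$. Choosing $m$ so that $(2m+1)\theta$ is as close as possible to $\pi/2$ would make the probability of observing $S$ close to one; since $\theta \ge \arcsin\sqrt{p} \ge \sqrt{p}$, the right choice is $m = O(1/\sqrt{p})$.

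The main obstacle is that the algorithm is only given a lower bound $p$ on the true success probability $p' \ge p$, so the optimal $m$ is unknown. I would handle this via the standard randomized-iteration trick of Brassard et al.: pick $m$ uniformly at random from $\{0,1,\ldots,M-1\}$ where $M = \ceilA\lceil 1/\sqrt{p}\rceil$. A short trigonometric averaging argument (using $\sum_{m=0}^{M-1}\cos(2(2m+1)\theta) = \sin(4M\theta)/(2\sin 2\theta)$) shows that the expected success probability of $Q^m\ket|\psi>$ is at least $1/2 - O(\sqrt{p})$, hence bounded below by a positive constant. Repeating this randomized procedure $O(1)$ times independently and returning the first successful outcome boosts the probability past $9/10$; the total number of invocations of $\cal A$ and ${\cal A}^{-1}$ remains $O(1/\sqrt{p})$.

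Finally, for the ``zero probability'' clause, I would observe that if $\Pi_S\ket|\psi> = 0$, then $\ket|\psi>$ lies entirely in the $+1$-eigenspace of $R_S$, so $R_S$ acts trivially on $\ket|\psi>$ and thus on the subspace $W' = \linspan\sfig{\ket|\psi>}$. The reflection $R_0$ also preserves $\ket|\psi>$ (it fixes it), so every power $Q^m$ keeps the state inside $\linspan\sfig{\ket|\psi>}\subseteq (\Pi_S)^\perp$. Consequently, the measurement outcome lies outside $S$ with probability $1$, regardless of the random choice of $m$, establishing the final claim.
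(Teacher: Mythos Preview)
The paper does not prove this lemma; it is quoted directly from Brassard~\etal~\cite{brassard:amplification} as a known tool, so there is no ``paper's proof'' to compare against. Your sketch is the standard Grover-rotation argument and is essentially correct, including the final clause about zero initial success probability.

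One small gap worth noting: your averaging bound ``expected success probability $\ge 1/2 - O(\sqrt{p})$'' is not quite right as stated. From
\[
\frac{1}{M}\sum_{m=0}^{M-1}\sin^2\bigl((2m+1)\theta\bigr)=\frac12-\frac{\sin(4M\theta)}{4M\sin(2\theta)},
\]
with $M=\lceil 1/\sqrt{p}\rceil$ and $\sin\theta=\sqrt{p'}\ge\sqrt{p}$, one gets $4M\sin(2\theta)\ge 8\cos\theta=8\sqrt{1-p'}$, which blows up when the \emph{true} success probability $p'$ is close to $1$ (this can happen even if the given lower bound $p$ is tiny). The standard fix is a two-case analysis: if $p'\ge 1/4$ then the $m=0$ run (a bare measurement of ${\cal A}\ket|0>$) already succeeds with constant probability; if $p'<1/4$ then $\cos\theta>\sqrt{3}/2$ and your averaging bound gives a constant as intended. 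Including one un-amplified trial among your $O(1)$ repetitions handles both cases simultaneously and keeps the total cost at $O(1/\sqrt{p})$.
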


For time-efficient implementation of our algorithm, we need the following two results.
\begin{thm}[Phase Estimation~\cite{kitaev:phaseEstimation, cleve:phaseEstimation}]
\label{thm:estimation}
Assume a unitary $U$ is given as a black box.  There exists a quantum algorithm that, given an eigenvector $\ket|psi>$ of $U$ with eigenvalue $\ee^{\ii\phi}$, outputs a real number $w$ such that $|w-\phi|\le\delta$ with probability at least $9/10$.  Moreover, the algorithm uses $O(1/\delta)$ controlled applications of $U$ and $U^{-1}$ and $\frac{1}{\delta}\polylog(1/\delta)$ other elementary operations.
\end{thm}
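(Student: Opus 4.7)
The plan is to carry out the standard Kitaev--Cleve phase estimation procedure and then balance precision against success probability. I would introduce a control register of $t=\ceilA[\log_2(2\pi/\delta)]+O(1)$ qubits prepared in the uniform superposition $2^{-t/2}\sum_{j=0}^{2^t-1}\ket|j>$ via a layer of Hadamard gates. The key subroutine is the controlled power $\sum_j \ketbra{j}{j}\otimes U^j$: writing $j=\sum_{\ell=0}^{t-1}j_\ell 2^\ell$ in binary, this factors into $t$ gates ``controlled-$U^{2^\ell}$,'' each of which I would implement as $2^\ell$ successive controlled applications of the black-box $U$. Since $\ket|psi>$ is an eigenvector of $U$ with eigenvalue $\ee^{\ii\phi}$, this operation takes the joint state to $\sB(2^{-t/2}\sum_{j=0}^{2^t-1}\ee^{\ii j\phi}\ket|j>)\otimes\ket|psi>$, i.e., the control register factors out and holds, up to a known normalization, the QFT image of the integer closest to $2^t\phi/(2\pi)$.

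Next I would apply the inverse QFT to the control register and measure, obtaining $y\in\{0,\dots,2^t-1\}$, and output $w=2\pi y/2^t$. A direct computation gives a closed form for $\pr[y]$ as a squared Dirichlet kernel sharply peaked at the integer nearest $2^t\phi/(2\pi)$; using $\absO|\sin\theta|\ge 2\absO|\theta|/\pi$ on $\absO|\theta|\le\pi/2$ and summing the resulting tail, one checks that $\pr\skO[\absO|w-\phi|>\delta]\le 1/10$ once we take $O(1)$ extra qubits beyond $\ceilA[\log_2(2\pi/\delta)]$. A cleaner alternative boosting route is to take the median of a constant number of independent runs, which only affects constants in the final bounds.

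For the complexity bookkeeping, the controlled uses of $U$ (and $U^{-1}$) sum to $\sum_{\ell=0}^{t-1}2^\ell=2^t-1=O(1/\delta)$, matching the stated query count; the inverse QFT on $t$ qubits uses $O(t^2)=O(\log^2(1/\delta))$ elementary gates; and each of the $O(1/\delta)$ controlled uses of $U$ contributes $O(1)$ wiring gates to implement the conditioning. The total non-oracle cost is therefore $\frac{1}{\delta}\polylog(1/\delta)$ as claimed. The main technical obstacle is the concentration estimate for the measurement distribution: one must quantify exactly how many extra qubits beyond $\ceilA[\log_2(2\pi/\delta)]$ suffice to drive the tail below $1/10$. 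This is a routine but careful Dirichlet-kernel calculation with no conceptual subtleties; the remainder of the proof is by now textbook (see, e.g., Nielsen--Chuang).
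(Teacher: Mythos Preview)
The paper does not prove this theorem; it is stated in the preliminaries as a known result with citations to Kitaev and Cleve~\etal, so there is no ``paper's own proof'' to compare against. Your sketch is the standard textbook argument (essentially the proof from the cited references and Nielsen--Chuang) and is correct as outlined, with the resource counts matching the stated bounds.
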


\begin{lem}[Effective Spectral Gap Lemma~\cite{lee:stateConversion}]\label{lem:effective}
Let $\Pi_1$ and $\Pi_2$ be two orthogonal projectors in the same vector space (not necessarily \emph{pairwise} orthogonal), and $R_1 = 2\Pi_1-I$ and $R_2 = 2\Pi_2-I$ be the reflections about their images.
For $\delta \ge 0$, let $P_\delta$ be the projector on the span of all eigenvectors of $R_2R_1$ that have eigenvalues $\ee^{\ii\theta}$ with $|\theta|\le \delta$.  Then, for any vector $w$ in the kernel of $\Pi_1$, we have
\[ \|P_\delta \Pi_2 w \|\le \frac{\delta}{2}\|w\|. \]
\end{lem}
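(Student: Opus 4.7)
The plan is to apply Jordan's lemma to jointly decompose the ambient space with respect to the pair $(\Pi_1, \Pi_2)$. Concretely, there is an orthogonal direct sum $V = \bigoplus_i V_i$ such that each $V_i$ is invariant under both $\Pi_1$ and $\Pi_2$ and has dimension one or two. Since each $V_i$ is then invariant under $R_1$ and $R_2$, hence under $R_2 R_1$, it is also invariant under the spectral projector $P_\delta$. Writing $w = \sum_i w_i$ with $w_i \in V_i$ and using the mutual orthogonality of the $V_i$'s, it will suffice to prove that $\|P_\delta \Pi_2 w_i\| \le (\delta/2)\|w_i\|$ in each block; squaring and summing then yields the full bound.

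The heart of the argument is the two-dimensional blocks. In such a $V_i$, both $\Pi_1$ and $\Pi_2$ are rank-one projectors onto distinct lines $L_1, L_2$ meeting at some angle $\alpha_i \in (0, \pi/2)$. Since a composition of two planar reflections is a rotation by twice the angle between the axes, $R_2 R_1$ restricted to $V_i$ has eigenvalues $\ee^{\pm 2\ii\alpha_i}$, so $P_\delta$ restricted to $V_i$ equals the identity when $2\alpha_i \le \delta$ and vanishes otherwise. Any $w_i \in \ker \Pi_1 \cap V_i$ lies along $L_1^\perp$, and since $L_1^\perp$ makes angle $\pi/2 - \alpha_i$ with $L_2$, a direct computation gives $\|\Pi_2 w_i\| = \|w_i\|\sin\alpha_i$. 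Using $\sin x \le x$ for $x \ge 0$, in the retained blocks we obtain
\[ \|P_\delta \Pi_2 w_i\| = \|\Pi_2 w_i\| = \|w_i\|\sin\alpha_i \le \|w_i\|\sin(\delta/2) \le (\delta/2)\|w_i\|, \]
while the term simply vanishes in blocks that $P_\delta$ annihilates.

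What remains is a short case check for the one-dimensional blocks. If $\Pi_1 = \Pi_2 = I$ on $V_i$, then $w_i \in \ker \Pi_1$ forces $w_i = 0$; if $\Pi_1 = \Pi_2 = 0$, then $\Pi_2 w_i = 0$; if exactly one of the two projectors is $I$ and the other is $0$, then $R_2 R_1 = -I$ on $V_i$, so the corresponding eigenvalue $-1$ has phase $\pi$ and the block is excluded from the image of $P_\delta$ for all $\delta < \pi$ (and the bound is trivial for $\delta \ge \pi$ since $\|P_\delta \Pi_2 w\| \le \|w\|$ in any case). Consequently every one-dimensional block contributes zero to $P_\delta \Pi_2 w$, and combining with the 2D estimate finishes the proof. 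The main obstacle is really just bookkeeping: one must invoke Jordan's lemma carefully so that each block is simultaneously invariant under $\Pi_1$, $\Pi_2$, and $P_\delta$, and verify that no degenerate 1D configuration escapes the case analysis; once the geometric picture of two lines meeting at angle $\alpha_i$ is in place, everything reduces to the elementary inequality $\sin x \le x$.
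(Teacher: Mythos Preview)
The paper does not prove this lemma; it is quoted from~\cite{lee:stateConversion} and used as a black box. Your argument via Jordan's lemma is the standard proof and is correct: the simultaneous block decomposition reduces everything to the two-dimensional geometry, where the eigenvalues of $R_2R_1$ are $\ee^{\pm 2\ii\alpha_i}$ and $\|\Pi_2 w_i\| = \|w_i\|\sin\alpha_i$, and the one-dimensional cases are handled cleanly. There is nothing to compare against in the paper itself.
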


\subsection{Adversary bound}
Here we describe the dual adversary bound, the main tool for the construction of our algorithms.

Let $F\colon\cD\to\bool$, with $\cD\subseteq\cube$, be a partial Boolean function.
The (dual) adversary bound, $\Adv(F)$, is defined as the optimal value of the following semi-definite optimization problem:
\begin{subequations}
\label{eqn:advOrig}
\begin{alignat}{3}
&\mbox{\rm minimize} &\quad& \max_{z\in \cD}\sum\nolimits_{j\in [n]} X_j\elem[z,z]  \label{eqn:advOrigObjective} \\
& \mbox{\rm subject to}&& \sum\nolimits_{j: x_j\ne y_j} X_j\elem[x, y] = 1 &\quad& \text{\rm for all $x,y\in\cD$ with $F(x)\ne F(y)$;} \label{eqn:advOrigCondition} \\
&&& X_j\succeq 0 && \mbox{\rm for all $j\in [n]$,} \label{eqn:advOrigSemidefinite}
\end{alignat}
\end{subequations}
where $X_j$ are $\cD\times\cD$ positive semi-definite matrices.  
Recall that $Q(F)$ denotes the bounded-error quantum query complexity of $F$.
Then, we have the following important result.
\begin{thm}[\cite{hoyer:advNegative,reichardt:advTight,lee:stateConversion}]
\label{thm:adv}
For every $F$, $Q(F) = \Theta(\Adv(F))$.
\end{thm}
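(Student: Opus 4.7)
The theorem asserts two inequalities: the lower bound $Q(F) = \Omega(\Adv(F))$ (the negative-weights adversary method) and the upper bound $Q(F) = O(\Adv(F))$ (the tight algorithmic characterization). The plan is to establish the two directions separately; they are genuinely different arguments.

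For the lower bound, suppose an algorithm of the form \rf(eqn:queryAlgDef) computes $F$ using $T$ queries, and let $\{X_j\}$ be any feasible dual adversary solution with objective value $W$. I would build an adversary matrix $\Gamma$ from the $X_j$'s and define a progress function $\Phi_t$ from the pairwise overlaps $\langle\psi_x^t|\psi_y^t\rangle$ of the algorithm's states on different inputs $x,y\in\cD$, weighted by $\Gamma$. The argument then proceeds in three steps: (i) $\Phi_0$ is large, since initially all $|\psi_x^0\rangle$ agree; (ii) using the block-diagonal structure of $O_x$ in \rf(eqn:OxDef) together with the feasibility constraint \rf(eqn:advOrigCondition), each query changes $\Phi$ by at most a quantity proportional to the adversary objective $\max_z\sum_j X_j\elem[z,z]$; and (iii) at the end, correctness forces $|\langle\psi_x^T|\psi_y^T\rangle|$ to be small for every pair with $F(x)\neq F(y)$, so $\Phi_T$ must be small. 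Combining these yields $T\ge\Omega(W)$, and minimizing over feasible $W$ gives $Q(F)\ge\Omega(\Adv(F))$.

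For the upper bound, I would turn a feasible dual solution with value $W$ into a quantum algorithm by following the span-program / state-conversion construction. Package the SDP variables into two projectors $\Pi_1,\Pi_2$ on an auxiliary Hilbert space, where $\Pi_2$ is tied to the oracle action $O_x$ of \rf(eqn:OxDef) and $\Pi_1$ encodes the combinatorial structure of the $X_j$'s, and pick a distinguished start vector $w$ (independent of $x$). The algorithm runs phase estimation (\rf(thm:estimation)) on the unitary $R_2R_1=(2\Pi_2-I)(2\Pi_1-I)$ with precision $\delta\asymp 1/W$ applied to $w$. The Effective Spectral Gap Lemma (\rf(lem:effective)) is the central tool: on $1$-inputs the feasibility condition \rf(eqn:advOrigCondition) forces $w$ to have large overlap with the $+1$-eigenspace of $R_2R_1$ (giving phase~$0$ with noticeable probability), while on $0$-inputs $w$ can be taken in the kernel of $\Pi_1$, and the lemma bounds the weight placed on eigenvalues $\ee^{\ii\theta}$ with $|\theta|\le\delta$ by $O(\delta\|w\|)=O(\|w\|/W)$. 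Hence phase estimation distinguishes the two cases with bias $\Omega(1/W)$; amplitude amplification (\rf(lem:amplampl)) boosts this to $2/3$ at the cost of $O(\sqrt{W})$ rounds of phase estimation, each using $O(W)$ queries---for a total of $O(W)$ queries.

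The main obstacle is the upper bound: passing from an abstract SDP feasible solution to an actual unitary circuit. Unlike the lower bound, which is essentially a bookkeeping argument on overlaps, the converse needs a concrete construction of $\Pi_1,\Pi_2$ together with a spectral analysis of $R_2R_1$ on both $0$- and $1$-inputs. The technical subtlety is that $F$ is a \emph{partial} function with domain $\cD\subsetneq\cube$; one must ensure the projectors retain the right structure on the restricted input set, which is precisely what the state-conversion framework of Lee et al.~\cite{lee:stateConversion} provides.
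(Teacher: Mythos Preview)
The paper does not prove \rf(thm:adv); it is quoted as a known result with citations to~\cite{hoyer:advNegative,reichardt:advTight,lee:stateConversion} and then used as a black box. The closest the paper comes is the special-case implementation in \rf(sec:proofOfQGGT) (Claim~\ref{clm:correct}), which carries out the upper-bound direction for the particular dual solution constructed in \rf(thm:combGroupTestbasic).

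Your outline of both directions is broadly correct and, for the upper bound, matches the structure of \rf(sec:proofOfQGGT) and of~\cite{lee:stateConversion}. But the final accounting is wrong. You say phase estimation distinguishes the two cases ``with bias $\Omega(1/W)$'' and then apply $O(\sqrt W)$ rounds of amplitude amplification, each costing $O(W)$ queries, ``for a total of $O(W)$ queries''; that product is $O(W^{3/2})$, not $O(W)$. The actual construction avoids this entirely. With the right normalization (in the paper, the scaling $\gamma=C_1\sqrt W$ in~\rf(eqn:psia)), the start state $\ket|0>$ has $\Omega(1)$ overlap with the $+1$-eigenspace on one side, while on the other side the Effective Spectral Gap Lemma gives $\|P_{2\delta}\ket|0>\|\le \delta\|w\|=O(1/C)$ because $\|w\|=O(W)$ and $\delta=1/(CW)$. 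So the bias is already $\Omega(1)$, phase estimation with precision $\Theta(1/W)$ suffices by itself, and no amplitude amplification is needed. Relatedly, you conflate two different vectors: the \emph{fixed} start state ($\ket|0>$ in the paper) and the \emph{input-dependent} witness $w=\psi_A$ used to invoke \rf(lem:effective). The witness lies in the kernel of $\Pi_1$ and satisfies $\Pi_2 w=\ket|0>$; it is not itself the start vector.
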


Because of \rf(thm:adv), one may come up with a solution to the adversary bound instead of explicitly constructing a quantum algorithm. This is how we construct the algorithm 
in \rf(sec:groupTesting). The following ``unweighted adversary bound'' is a useful special case (and precursor) of the general adversary lower bound:

\mycommand{fin}{F^{-1}(0)}
\mycommand{fip}{F^{-1}(1)}
\begin{thm}[\cite{ambainis:adv}]\label{thm:unweightedadv}  
Suppose there is a non-empty relation $R\subseteq \fip \times\fin$ that satisfies
\itemstart
\item[(i)] for each $x \in \fip$ appearing in $R$, there are at least $m$ distinct $y\in\fin$ such that $(x,y)\in R$;
\item[(ii)] for each $y \in\fin$ appearing in $R$, there are at least $m'$ distinct $x\in\fip$ such that $(x,y)\in R$;
\item[(iii)] for each $x \in \fip$ and each $j\in[n]$, there are at most $\ell$ distinct $y \in\fin$ such that $(x,y) \in R$ and $x_j\ne y_j$;
\item[(iv)] for each $y \in \fin$ and each $j\in[n]$, there are at most $\ell'$ distinct $x \in\fip$ such that $(x,y) \in R$ and $x_j\ne y_j$;
\itemend
Then, the bounded-error quantum query complexity of $F$ is $\Omega\left(\sqrt{\frac{mm'}{\ell\ell'}}\right)$.
\end{thm}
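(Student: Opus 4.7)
The plan is to prove the bound by exhibiting a witness to the \emph{spectral} (primal) form of the adversary bound, which by SDP duality has the same optimal value as the dual SDP~\rf(eqn:advOrig). Concretely, one seeks a symmetric matrix $\Gamma$ indexed by $\cD\times\cD$ that vanishes on $F$-monochromatic pairs; its value is $\|\Gamma\|/\max_j\|\Gamma\circ\Delta_j\|$, where $\Delta_j\elem[x,y]=1$ iff $x_j\ne y_j$ and $\circ$ denotes the entry-wise (Hadamard) product. I would take $\Gamma$ to be the symmetric $0/1$-indicator matrix of the relation~$R$; viewed in block form along the $\fip/\fin$ partition, $\Gamma$ has off-diagonal block equal to the $\fip\times\fin$ incidence matrix $M$ of~$R$. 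In particular $\|\Gamma\|=\|M\|$ and $\|\Gamma\circ\Delta_j\|=\|M\circ D_j\|$, where $D_j\elem[x,y]=1$ iff $x_j\ne y_j$.

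To lower-bound $\|M\|$, I would use conditions~(i) and~(ii). Let $u$ and $v$ be the $0/1$-indicator vectors of the rows and columns of $M$ that are not identically zero. Then $u^{T}Mv=|R|$. Condition~(i) forces every nonzero row of $M$ to contain at least $m$ ones, so $\|u\|^2\le |R|/m$; symmetrically, (ii) gives $\|v\|^2\le |R|/m'$. Hence
\[
\|M\|\ \ge\ \frac{u^{T}Mv}{\|u\|\,\|v\|}\ \ge\ \sqrt{mm'}.
\]

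To upper-bound $\|M\circ D_j\|$, I would use conditions~(iii) and~(iv), which force each row of $M\circ D_j$ to have at most $\ell$ ones and each column at most $\ell'$. For any nonnegative matrix $A$ with row sums bounded by $r$ and column sums bounded by $c$, bounding the row sums of $A^{T}A$ gives the standard estimate: the $i$-th such sum equals $\sum_{k} A\elem[k,i]\sum_{j'} A\elem[k,j']\le r\sum_{k} A\elem[k,i]\le rc$, so $\|A\|^2\le rc$. Applied with $r=\ell,\,c=\ell'$, this yields $\|M\circ D_j\|\le\sqrt{\ell\ell'}$ for every~$j$.

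Combining the two bounds shows that the spectral adversary value of $\Gamma$ is at least $\sqrt{mm'/(\ell\ell')}$, and by SDP duality this is a lower bound on $\Adv(F)$ up to a constant; \rf(thm:adv) then delivers $Q(F)=\Omega(\sqrt{mm'/(\ell\ell')})$. The main obstacle is the appeal to SDP duality, since~\rf(eqn:advOrig) is stated only in the dual minimization form; this equivalence is standard but not entirely transparent. An alternative self-contained route would be to convert the spectral witness into an explicit feasible solution to~\rf(eqn:advOrig) by taking $X_j=\beta\,\ketbra{\psi_j}{\psi_j}$ for vectors $\psi_j$ built from top singular vectors of $M$ and $M\circ D_j$, at the cost of some additional bookkeeping but yielding the same bound.
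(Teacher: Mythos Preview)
The paper does not prove this theorem; it is stated with a citation to~\cite{ambainis:adv} and used as a black box. Your argument via the spectral adversary is a correct and standard proof of the result.

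Two brief remarks on the technical worry you raise. First, the quantity $\|\Gamma\|/\max_j\|\Gamma\circ\Delta_j\|$ with an entrywise nonnegative $\Gamma$ is precisely the positive-weight adversary $\pAdv(F)$, and Ambainis's original paper already establishes $Q(F)=\Omega(\pAdv(F))$ directly via a progress-function argument, without any SDP duality or any appeal to \rf(thm:adv). So you can sidestep the duality issue entirely by citing that. Second, if you do want to route through~\rf(eqn:advOrig), the equivalence between the spectral form (with arbitrary real $\Gamma$) and the dual SDP~\rf(eqn:advOrig) is exactly the content of strong duality for the adversary SDP as established in~\cite{hoyer:advNegative,lee:stateConversion}; since $\pAdv(F)\le\Adv(F)$ trivially (your $\Gamma$ is a feasible primal witness), \rf(thm:adv) then gives the bound. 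Either way your two norm estimates, $\|M\|\ge\sqrt{mm'}$ via the test vectors and $\|M\circ D_j\|\le\sqrt{\ell\ell'}$ via the row/column-sum bound, are exactly the standard ingredients and are correct as written.
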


The adversary bound is also useful for function composition.  Assume 
$F\colon\cD\to\bool$, with $\cD\subseteq\cube$, and, for any $j\in [n]$,
let $G_j$ be a partial Boolean function on $m_j$ variables.
The composed Boolean function $F\circ(G_1,\ldots,G_n)$ on $\sum_{j=1}^n m_j$ variables is defined by
\begin{equation}
\label{eqn:composition}
(x_{11},\ldots,x_{1m_1},\ldots,x_{n1},\ldots,x_{nm_n}) \mapsto
F\sA[G_1(x_{11},\ldots,x_{1m_1}),\ldots,G_n(x_{n1},\ldots,x_{nm_n})] ,
\end{equation}
where the composed function is defined on the input $(x_{11},\ldots,x_{nm_n})$ iff the values of all $G_j$ on the right-hand side of~\rf(eqn:composition) are defined, and the corresponding $n$-tuple belongs to $\cD$.

\begin{thm}[\cite{reichardt:spanPrograms}]
\label{thm:composition}
We have
\[
\Adv\sA[ F\circ(G_1,\ldots,G_n) ] \le \Adv(F)\max_{j\in[n]} \Adv(G_j).
\]
\end{thm}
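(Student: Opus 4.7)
The plan is to construct an explicit feasible solution to the dual adversary SDP \rf(eqn:advOrig) for $F \circ (G_1, \ldots, G_n)$ whose objective value is at most $\Adv(F) \cdot \max_j \Adv(G_j)$. Let $\{X_j\}_{j \in [n]}$ be an optimal solution attaining $\Adv(F)$ and, for each $j \in [n]$, let $\{Y_{j,i}\}_{i \in [m_j]}$ be an optimal solution attaining $\Adv(G_j)$. For a composed input $\mathbf{x} = (x_{11}, \ldots, x_{nm_n})$ in the domain of $F \circ (G_1, \ldots, G_n)$, write $\mathbf{x}_j := (x_{j1}, \ldots, x_{jm_j})$ and $g(\mathbf{x}) := (G_1(\mathbf{x}_1), \ldots, G_n(\mathbf{x}_n)) \in \cD$, and similarly for $\mathbf{y}$.

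First, I would propose the Hadamard-product candidate
\[
Z_{j,i}[\mathbf{x}, \mathbf{y}] := X_j\elem[g(\mathbf{x}), g(\mathbf{y})] \cdot Y_{j,i}\elem[\mathbf{x}_j, \mathbf{y}_j].
\]
Positive semi-definiteness is immediate: the first factor is the pull-back of the PSD matrix $X_j$ along $\mathbf{x} \mapsto g(\mathbf{x})$, the second factor is the pull-back of the PSD matrix $Y_{j,i}$ along $\mathbf{x} \mapsto \mathbf{x}_j$, and Hadamard products of PSD matrices are PSD by Schur's theorem. Evaluating the SDP objective on the diagonal gives
\[
\sum_{j,i} Z_{j,i}\elem[\mathbf{x}, \mathbf{x}] = \sum_j X_j\elem[g(\mathbf{x}), g(\mathbf{x})] \sum_i Y_{j,i}\elem[\mathbf{x}_j, \mathbf{x}_j] \le \max_j \Adv(G_j) \cdot \sum_j X_j\elem[g(\mathbf{x}), g(\mathbf{x})] \le \Adv(F) \cdot \max_j \Adv(G_j),
\]
using the $G_j$-objective bound term-wise and then the $F$-objective bound.

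The main obstacle is verifying the linear constraint \rf(eqn:advOrigCondition) for pairs $(\mathbf{x}, \mathbf{y})$ with $F(g(\mathbf{x})) \ne F(g(\mathbf{y}))$. Swapping the order of summation,
\[
\sum_{(j,i):\, x_{ji} \ne y_{ji}} Z_{j,i}\elem[\mathbf{x}, \mathbf{y}] = \sum_j X_j\elem[g(\mathbf{x}), g(\mathbf{y})] \cdot \Bigl( \sum_{i:\, x_{ji} \ne y_{ji}} Y_{j,i}\elem[\mathbf{x}_j, \mathbf{y}_j] \Bigr).
\]
When $g(\mathbf{x})_j \ne g(\mathbf{y})_j$, the bracketed sum equals $1$ by the $G_j$-constraint, and pairing with the $F$-constraint then gives exactly $1$ in total, as desired. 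The difficulty is indices $j$ with $G_j(\mathbf{x}_j) = G_j(\mathbf{y}_j)$ but $\mathbf{x}_j \ne \mathbf{y}_j$: for such indices neither SDP pins down the bracketed sum, and spurious contributions can appear.

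To kill these spurious terms I would pre-process the inner adversary solutions so that the cross-sum $\sum_{i:\, x_{ji} \ne y_{ji}} Y_{j,i}\elem[\mathbf{x}_j, \mathbf{y}_j]$ vanishes whenever $G_j(\mathbf{x}_j) = G_j(\mathbf{y}_j)$. A clean way is to pass to the equivalent primal adversary formulation, in which the adversary matrix is supported only on pairs with distinct function values; composing the primal matrix $\Gamma^F$ with the primal matrices $\Gamma^{G_j}$ (so that the resulting matrix for $F \circ (G_1, \ldots, G_n)$ automatically vanishes on equal-output pairs) and estimating spectral norms yields the desired bound. Equivalently, one may convert each optimal adversary solution into a span program of the same witness size, compose these span programs combinatorially by substituting the program for $G_j$ into the $j$-th input wire of the program for $F$, and invoke the multiplicativity of span-program witness size under this composition; translating the composed span program back into an SDP solution of \rf(eqn:advOrig) then delivers the claimed feasible solution and completes the proof.
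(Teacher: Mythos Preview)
Your construction is exactly the one the paper uses (see \rf(app:irrelevant), where the paper proves the more general \rf(prp:irrelevant)(b) and explicitly says the argument is the same as for \rf(thm:composition)): take optimal solutions for $F$ and for each $G_j$, form the Hadamard-product matrices $Z_{j,i}$, and check PSD and the objective bound just as you did. You also correctly isolate the only obstacle, the spurious contributions from indices $j$ with $G_j(\mathbf x_j)=G_j(\mathbf y_j)$ but $\mathbf x_j\ne\mathbf y_j$.

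Where you diverge from the paper is in how to kill those terms. You propose to preprocess the \emph{inner} solutions $Y_{j,i}$ so that $\sum_{i:\,x_{ji}\ne y_{ji}} Y_{j,i}[\mathbf x_j,\mathbf y_j]=0$ whenever $G_j(\mathbf x_j)=G_j(\mathbf y_j)$. It is not clear this can be done while keeping each $Y_{j,i}\succeq 0$ and the diagonal unchanged; the ``mask'' matrix that is $1$ on different-output pairs and $0$ on same-output pairs is not PSD, so a direct Hadamard trick does not work, and your appeal to the primal formulation is too vague to count as a proof. Your alternative of passing through span-program composition is correct (it is essentially Reichardt's original argument) but abandons the direct SDP construction you set up.

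The paper's fix is simpler and stays entirely within the SDP: preprocess the \emph{outer} solution $X_j$ instead. One takes the Hadamard product of $X_j$ with (the blow-up of) the PSD matrix
\[
\begin{pmatrix}1&0&0&1\\0&1&1&0\\0&1&1&0\\1&0&0&1\end{pmatrix},
\]
whose rows and columns are indexed by the four classes $(F(z),z_j)\in\{0,1\}^2$. This leaves the diagonal of $X_j$ unchanged and, on pairs $(z,t)$ with $F(z)\ne F(t)$, zeros out $X_j[z,t]$ whenever $z_j=t_j$. Now in your identity
\[
\sum_{(j,i):\,x_{ji}\ne y_{ji}} Z_{j,i}[\mathbf x,\mathbf y]=\sum_j X_j[g(\mathbf x),g(\mathbf y)]\sum_{i:\,x_{ji}\ne y_{ji}} Y_{j,i}[\mathbf x_j,\mathbf y_j],
\]
every ``bad'' index $j$ (with $g(\mathbf x)_j=g(\mathbf y)_j$) contributes $0$ because the outer factor vanishes, and on the remaining indices the inner sum is $1$ by the $G_j$-constraint, so the total is $\sum_{j:\,g(\mathbf x)_j\ne g(\mathbf y)_j} X_j[g(\mathbf x),g(\mathbf y)]=1$ by the $F$-constraint. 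That one Hadamard step is the missing idea in your write-up.
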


In particular, this theorem together with \rf(thm:adv) implies
\begin{cor}[Tight composition result]
\label{cor:composition}
\(
Q \sA[ F\circ(G_1,\ldots,G_n) ] = O\sA[ Q(F) \max\limits_{j\in[n]} Q(G_j) ]  .
\)
\end{cor}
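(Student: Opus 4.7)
The plan is to obtain the corollary by directly chaining the two theorems immediately preceding it. By \rf(thm:adv), the bounded-error quantum query complexity and the dual adversary bound agree up to a constant factor for every (partial) Boolean function. This lets me pass freely between $Q(\cdot)$ and $\Adv(\cdot)$ at the cost of absorbing a constant into the big-$O$ notation.

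First I would apply \rf(thm:adv) to the composed function to get
\[
Q\sA[ F\circ(G_1,\ldots,G_n) ] = O\sA[\Adv\sA[ F\circ(G_1,\ldots,G_n)]].
\]
Next I would invoke the multiplicative composition inequality of \rf(thm:composition), namely
\[
\Adv\sA[ F\circ(G_1,\ldots,G_n) ] \le \Adv(F)\,\max_{j\in[n]} \Adv(G_j),
\]
to bound the right-hand side. Finally, I would apply \rf(thm:adv) once more, this time in the other direction, to each of the factors $\Adv(F)$ and $\Adv(G_j)$, converting them into $O(Q(F))$ and $O(Q(G_j))$ respectively. Combining these three steps and absorbing the resulting constants yields
\[
Q\sA[ F\circ(G_1,\ldots,G_n)] = O\sA[Q(F)\,\max_{j\in[n]} Q(G_j)],
\]
which is the claimed bound.

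There is essentially no obstacle here: the statement is a bookkeeping corollary that just composes the two cited theorems. The only thing worth being slightly careful about is that the constants hidden in the two applications of \rf(thm:adv) (one in each direction) are independent of $n$ and of the functions $F, G_1, \ldots, G_n$, so they can indeed be absorbed into a single big-$O$ constant; this follows from the fact that \rf(thm:adv) is a statement about individual functions with universal constants.
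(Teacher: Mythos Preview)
Your proposal is correct and matches the paper's approach exactly: the paper simply states that the corollary follows from combining \rf(thm:adv) with \rf(thm:composition), which is precisely the chaining you describe. There is nothing more to it.
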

That is, one can compose functions without the logarithmic overhead in query complexity that arises in the standard method of composition (which would reduce the error probability of the algorithm for the internal functions to $\ll 1/n$ by taking the majority-outcome of $O(\log n)$ independent runs of the algorithm).
Notice, though, that \rf(cor:composition) talks only about the \emph{query} complexity of the resulting function, not its \emph{time} complexity.
Accordingly, even if the functions $F$ and $G_j$ can be evaluated time-efficiently, this does not imply that the algorithm for the composed function from \rf(cor:composition) can be implemented time-efficiently.  In order to get a time-efficient implementation, it is usually better to compose the algorithms for~$F$ and $G_j$ using the standard method.

\subsection{Irrelevant Variables}
\label{sec:irrelevant}
Consider the following motivating example.  
Let $G_1,\ldots,G_n$ be partial Boolean functions. 
We define the ``robust conjunction'' of the functions $G_i$, 
\begin{equation}
\label{eqn:irrelExample}
H(x) = \widetilde\bigwedge_{i\in[n]} G_i(x) \; ,
\end{equation}
as the partial Boolean function $H$ given by the following:
if $G_1(x)=\cdots=G_n(x)=1$ (in particular, $x$ is in the domain of all $G_i$), then $H(x)=1$;
if there exists an $i$ for which $G_i(x)=0$ (in particular, $x$ is in the domain of $G_i$), then $H(x)=0$;
and otherwise $H(x)$ is not defined.
Note that in the second case $x$ may lie outside the domain of $G_j$ for some (or even all) $j\ne i$.

One interpretation of this expression, which we use in \rf(sec:qalgs), is as follows.
The function $H$ is some test for $x$, and the $G_i$ are sub-tests, which check for different possibilities of how $H$ can fail.
Thus, a positive input must satisfy all the sub-tests, whereas a negative $x$ has to fail at least one sub-test $G_i$
but might give an ambiguous answer on other tests $G_j$.  

Classically, the above is a non-issue, since we can always apply the algorithm for $G_i$ on an input~$x$,
even if that input is outside the domain of $G_i$---the algorithm's output must still be either $0$ or~$1$. 
Quantumly, the situation is more delicate: strictly speaking, we cannot apply the textbook Grover search to evaluate $H$ since the oracle in the definition of a quantum query algorithm is supposed to apply either $I$ or $-I$ on each input, but a quantum algorithm for $G_i$ on an input $x$ outside its domain may apply an arbitrary unitary transformation on its entire working space.

%\onote{5/18: paragraph requires work; I can't make sense of it}
In this particular case, $H$ can be evaluated using amplitude amplification, \rf(lem:amplampl), instead of the usual Grover search.  In \rf(prp:irrelevant) below, we extend this result to the case when the conjunction is replaced by an arbitrary partial Boolean function $F$.
Additionally, we show how to generalize the tight composition result, \rf(cor:composition), to this more general setting.  In order to do this, we have to make a number of definitions.

\begin{defn}[Irrelevant variables]
\label{defn:irrelevant}
Let $F\colon\cD\to\bool$ be a partial Boolean function with the domain $\cD\subseteq\cube$.
For each input $x\in\cD$, some input variables $j\in[n]$ may be called \emph{irrelevant}, the remaining variables called \emph{relevant}.
This can be done in an arbitrary way, as long as the following \emph{consistency condition} is satisfied: for any $x,y\in\cD$ such that $F(x)\ne F(y)$, there must exist a variable $j$ relevant to both $x$ and $y$ and such that $x_j\ne y_j$.
\end{defn}

\begin{defn}[Evaluation with irrelevant variables]
\label{defn:irrEvaluate}
Evaluation of the function $F$ with irrelevant variables is defined as in \rf(sec:quantumAlgorithms), with the difference that, for an input $z\in\cD$, the input oracle may malfunction on irrelevant variables, i.e., $O_{z,j}$ in~\rf(eqn:OxDef) may be an arbitrary unitary if $j$ is irrelevant for $z$.
\end{defn}

\begin{defn}[Composition with irrelevant variables]
\label{defn:irrCompose}
The composition $F\circ(G_1,\ldots,G_n)$ with irrelevant variables is defined as in~\rf(eqn:composition) but on a larger domain.  Namely, the right-hand side of~\rf(eqn:composition) is defined iff there exists $z\in\cD$ such that $z_j = G_j(x_{j1},\ldots,x_{jm_j})$ for all relevant~$j$.  In particular, the value of $G_j(x_{j1},\ldots,x_{jm_j})$ need not be defined for irrelevant $j$.  The value of the composed function on this input is then set to $F(z)$, and does not depend on the particular choice of $z$.
\end{defn}

We use the last two definitions as follows.
\rf(defn:irrCompose) is used in \rf(sec:qalgs) to get a \emph{query-efficient} algorithm for testing juntas, using \rf(cor:irrCompose) below.  Thus we save a logarithmic factor, as described after \rf(cor:composition).
\rf(defn:irrEvaluate) is used in \rf(sec:qggtalgorithm) to get a \emph{time-efficient} implementation of the algorithm from \rf(sec:qalgs) (again, see the discussion after \rf(cor:composition)).

The following proposition is a special case of the construction in~\cite{belovs:variations}.

\begin{prp}
\label{prp:irrelevant}
Let $(X_j)$ be a feasible solution to the adversary bound~\rf(eqn:advOrig) with objective value~$T$.  Call an input variable $j$ is \emph{irrelevant} for an input $z\in\cD$ iff $X_j\elem[z,z]=0$.  With this choice of irrelevant variables,
\itemstart
\item[(a)] 
There exists a quantum algorithm that evaluates the function $F$ in the sense of \rf(defn:irrEvaluate), using $O(T)$ queries.
\item[(b)] For arbitrary partial Boolean functions $G_j$, we have 
\[
\Adv\sA[ F\circ(G_1,\ldots,G_n) ] \le T\; \max_{j\in[n]} \Adv(G_j),
\]
where $F\circ(G_1,\ldots,G_n)$ is as in \rf(defn:irrCompose).
\itemend
\end{prp}

It is easy to see that this choice of irrelevant variables satisfies the consistency condition of \rf(defn:irrelevant).  Indeed, if $F(x)\ne F(y)$, then~\rf(eqn:advOrigCondition) implies the existence of $j$ with $X_j\elem[x,y]\ne 0$, and since $X_j\succeq 0$, both $X_j\elem[x,x]$ and $X_j\elem[y,y]$ are non-zero.

The proof of point~(a) is analogous to the proof of the upper bound of~\rf(thm:adv), and we will skip the details here.
However, we will prove in \rf(sec:proofOfQGGT) that our time-efficient implementation of the corresponding solution to the adversary bound has this property.

We give a short proof of point~(b) in \rf(app:irrelevant).
Point~(b) immediately gives the following variant of \rf(cor:composition).

\begin{cor}
\label{cor:irrCompose}
Let $(X_j)$ be a feasible solution to the adversary bound~\rf(eqn:advOrig) with objective value $T$.  We say that an input variable $j$ is irrelevant for an input $z\in\cD$ iff $X_j\elem[z,z]=0$.  For arbitrary partial Boolean functions $G_j$, we have
\[
Q \sA[ F\circ(G_1,\ldots,G_n) ] = O\sA[ T\; \max_{j\in[n]} Q(G_j) ] ,
\]
with the composition as in \rf(defn:irrCompose) with this choice of irrelevant variables.
\end{cor}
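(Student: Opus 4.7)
The proof proposal is essentially a three-line chaining argument, since Proposition \ref{prp:irrelevant}(b) already does the heavy lifting at the level of the adversary bound. The plan is to pass from quantum query complexity to the adversary bound on each $G_j$, invoke the composition inequality from Proposition \ref{prp:irrelevant}(b) to get a bound on $\Adv(F\circ(G_1,\ldots,G_n))$, and then pass back to quantum query complexity via Theorem \ref{thm:adv}.

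More concretely: first I would apply Theorem \ref{thm:adv} to each inner function, so that $\Adv(G_j) = \Theta(Q(G_j))$ for every $j \in [n]$. Taking the maximum, this gives $\max_j \Adv(G_j) = O(\max_j Q(G_j))$. Next, I would invoke Proposition \ref{prp:irrelevant}(b) with the given feasible solution $(X_j)$ of objective value $T$; because the choice of irrelevant variables in the statement of the corollary is \emph{exactly} the choice $X_j\elem[z,z]=0$ used in the proposition, the proposition applies verbatim to the composition $F\circ(G_1,\ldots,G_n)$ in the sense of Definition \ref{defn:irrCompose}, and yields
\[
\Adv\sA[ F\circ(G_1,\ldots,G_n) ] \le T \cdot \max_{j\in[n]} \Adv(G_j) = O\sA[ T \cdot \max_{j\in[n]} Q(G_j) ].
\]
Finally, I would apply Theorem \ref{thm:adv} once more in the other direction, to the composed function, which gives $Q(F\circ(G_1,\ldots,G_n)) = O(\Adv(F\circ(G_1,\ldots,G_n)))$ and combining with the above inequality yields the claim.

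There is really no main obstacle here beyond checking that Proposition \ref{prp:irrelevant}(b) genuinely applies. The one point that deserves a moment's attention is that Theorem \ref{thm:adv} is stated for partial Boolean functions, and the composition in Definition \ref{defn:irrCompose} is indeed a partial Boolean function on $\sum_j m_j$ variables (with a possibly larger domain than the standard composition), so the theorem can be applied to it without modification. All the work of handling irrelevant variables and avoiding a logarithmic blowup from error-reduction of the inner algorithms is absorbed into Proposition \ref{prp:irrelevant}(b), whose proof is deferred to \rf(app:irrelevant); once that proposition is in hand, the corollary is immediate.
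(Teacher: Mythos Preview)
Your proposal is correct and matches the paper's approach exactly: the paper simply states that the corollary follows immediately from Proposition~\ref{prp:irrelevant}(b), and your three-line chaining through Theorem~\ref{thm:adv} is precisely how that immediacy is realized.
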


\begin{exm}[AND]
\label{exm:and}
Let us return to the example in~\rf(eqn:irrelExample).  Consider the AND function on the domain $\cD = \{z\in\cube \mid |z|\ge n-1\}$, where $|z|$ is the Hamming weight.  A feasible solution to~\rf(eqn:advOrig) for this function is given by $X_j = \psi_j\psi_j^*$, where $\psi_j\in\bR^\cD$ is given by
%\rnote{15/5: added subscript $j$ to $\psi$}\rnote{15/5: the condition in the 2nd case below was $z_j=0$ before, which I think was wrong?} \bnote{Why??  It is equivalent, and we don't have to describe what $e_j$ is.}\onote{5/18: Ronald, I think the way it was written was equivalent to what you wrote. Remember that $z$ is taken from $\cD$. But I much prefer the way you wrote it. Alexander, do you mind using Ronald's version? it's obvious what $e_j$ means, and we used it before} \bnote{Did we?  Where?  It should be $(1^n)^{\oplus j}$, if we want to be consistent.  And yes, it is obvious: $e_j$ is the $j$th element of the standard basis, and $\oplus$ is the direct sum.}
%\onote{I like how you wrote it now! Thanks. (You're right we did not use $e_j$ before; I meant to erase that comment and forgot)}
\[
\psi_j\elem[z] = 
\begin{cases}
%n^{-1/4} ,& \text{if $z = 1^n$;}\\
%n^{1/4} ,& \text{if $z_j = 0$;}\\ % z = 1^n\oplus e_j$;}\\
%0,&\text{otherwise.}
n^{-1/4} ,& \text{if $|z|=n$;}\\
n^{1/4} ,& \text{if $|z|=n-1$ and $z_j = 0$;}\\
0,&\text{if $|z|=n-1$ and $z_j=1$.}
\end{cases}
\]
The objective value of this solution is $\sqrt{n}$.
Note that if $|z|=n-1$, then any variable $j$ with $z_j=1$ is irrelevant for this input.
This coincides with our definition of the ``robust conjunction'' at the beginning of this section.
\end{exm}

\subsection{Fourier analysis}
\mycommand{sbool}{\{\pm1\}}
\mycommand{hf}{\widehat{f}}
We use Fourier analysis for arbitrary real-valued functions $f\colon \cube \to \bR$.
If $f$ is Boolean, it is usually convenient to assume that its range is $\sbool=\{1,-1\}$ rather than \bool.
For a string $s\in\cube$, the corresponding \emph{character} is a Boolean function $\chi_s\colon\cube\to\sbool$ defined by $\chi_s(x) = (-1)^{s\cdot x}$, where $s\cdot x = \sum_{j} s_jx_j$ denotes the inner product of $s$ and $x$.
We will often use the corresponding subset $S\subseteq[n]$ instead of a string $s\in\cube$.

Every function $f\colon\cube\to\bR$, has a \emph{Fourier decomposition} as follows:
\[
f(x) = \sum_{s\in\cube} \hf(s)\chi_s(x),
\]
where $\hf(s) = 2^{-n} \sum_{x} f(x)\chi_s(x)$ is the \emph{Fourier coefficient}.
The set $\sfigA{s\mid \hf(s)\ne 0}$ is called the \emph{(Fourier) spectrum} of $f$.
\emph{Parseval's identity} says that rhis transformation respects the norm:
$\Exp_{x}\sk[f(x)^2] = \sum_s \hf(s)^2$.  In particular, for a Boolean $f\colon\cube\to\sbool$, we have $\sum_{s} \hf(s)^2 = 1$.

For a subset $S\subseteq[n]$, we define the \emph{influence} of $S$ on $f$ by
\begin{equation}
\label{eqn:InfSdef}
\Inf_S(f) = \sum_{T\colon T\cap S\ne\emptyset} \hf(T)^2.
\end{equation}
If $S$ consists of a single element $j\in[n]$ we write $\Inf_j(f)$, which is $\sum_{T\colon j\in T} \hf(T)^2$.  

An alternative (but equivalent) definition of influence for functions with range $\sbool$ is as follows.
Consider the following randomized procedure.  Generate $x\in\cube$ uniformly at random. Obtain $y\in\cube$ from $x$ by replacing, for each $j\in S$, $x_j$ by an independent uniformly random bit.  Then, the influence $\Inf_S(f)$ is precisely twice the probability that $f(x)\ne f(y)$.
Note that the influence $\Inf_j(f)$ equals the probability that $f(x)\ne f(x^{\oplus j})$ when $x$ is sampled from $\cube$ uniformly at random, where $x^{\oplus j}$ denotes $x$ with the $j$th bit flipped.  We repeatedly use the following two obvious properties of influence:
\itemstart
\item Monotonicity.  If $S\subseteq T$, then $\Inf_S(f)\le \Inf_T(f)$.
\item Subadditivity.  $\Inf_{S\cup T} (f) \le \Inf_S(f) + \Inf_T(f)$ for all $S,T\subseteq[n]$.
\itemend
The following lemma from~\cite{atici:testingJuntas} (implicit in the proof of their Theorem III.3)\rnote{30/6: added the previous} explains why influence is important in our junta testing algorithm.

\begin{lem}
\label{lem:weightonextravars}
If $f$ is $\eps$-far from any $k$-junta, then for all $W\subseteq[n]$ of size $|W|\leq k$ we have 
$\Inf_{[n]\setminus W} (f) \ge \eps$.
\end{lem}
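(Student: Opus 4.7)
My plan is to prove the contrapositive in a direct, constructive way: starting from any $W$ with $|W|\le k$, I will build an explicit $k$-junta $g$ depending only on the coordinates in $W$ and show that the Hamming distance between $f$ and $g$ is bounded above by $\Inf_{[n]\setminus W}(f)/2$. Combined with the $\eps$-farness hypothesis, this immediately yields $\Inf_{[n]\setminus W}(f)\ge 2\eps\ge\eps$.

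Concretely, write $\bar W = [n]\setminus W$ and, for each assignment $w\in\{0,1\}^W$, let
\[ p_w \;=\; \Pr_{y_{\bar W}}\bigl[f(w,y_{\bar W})=1\bigr]. \]
Define the $|W|$-junta $g\colon\cube\to\sbool$ by $g(x)=1$ if $p_{x_W}\ge 1/2$ and $g(x)=-1$ otherwise; this is the ``best $W$-junta approximation'' of $f$ in $L^1$. Since $|W|\le k$, $g$ is a $k$-junta, and the $\eps$-farness hypothesis gives
\[ \Pr_x[f(x)\ne g(x)] \;\ge\; \eps. \]
On the other hand, $\Pr_x[f(x)\ne g(x)]=\Exp_w\min(p_w,1-p_w)$.

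Next I use the probabilistic characterization of influence recalled in the paper: $\Inf_{\bar W}(f)$ equals twice the probability that $f(x)\ne f(y)$ when $x$ is uniform and $y$ is obtained from $x$ by re-randomizing the bits indexed by $\bar W$. Conditioning on the common value $x_W=y_W=w$, the bits $x_{\bar W}$ and $y_{\bar W}$ become independent uniform, so the conditional disagreement probability is exactly $2p_w(1-p_w)$. Hence
\[ \tfrac{1}{2}\Inf_{\bar W}(f) \;=\; \Exp_w\bigl[2p_w(1-p_w)\bigr] \;\ge\; \Exp_w\min(p_w,1-p_w), \]
where the inequality uses that $2p(1-p)\ge\min(p,1-p)$ for all $p\in[0,1]$ (simply because $\max(p,1-p)\ge 1/2$).

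Chaining the two bounds gives $\Inf_{\bar W}(f)\ge 2\Pr_x[f\ne g]\ge 2\eps\ge\eps$, as desired. There is no real obstacle here; the only thing worth flagging is the elementary pointwise inequality $2p(1-p)\ge\min(p,1-p)$, which is the bridge between the ``disagreement-on-random-resample'' quantity that defines influence and the ``best-majority-vote'' distance that bounds junta-closeness.
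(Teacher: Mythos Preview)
Your proof is correct. Both you and the paper build the same approximating $k$-junta---the ``majority vote'' over the coordinates outside $W$---and then bound $\Pr[f\ne g]$ by something controlled by $\Inf_{\bar W}(f)$, but the technical routes differ. The paper works on the Fourier side: it writes $g_{\mathrm{paper}}=\sum_{S\subseteq W}\hf(S)\chi_S$, takes $h=\mathrm{sign}(g_{\mathrm{paper}})$ (which coincides with your $g$), and uses Parseval to get $\Pr[f\ne h]\le \Exp_x[(f-g_{\mathrm{paper}})^2]=\sum_{S\not\subseteq W}\hf(S)^2=\Inf_{\bar W}(f)$. You instead stay entirely on the probabilistic side, using the ``resample and compare'' description of influence from the preliminaries to write $\Inf_{\bar W}(f)=\Exp_w[4p_w(1-p_w)]$ directly, and then the elementary inequality $2p(1-p)\ge\min(p,1-p)$. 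Your route avoids invoking Parseval and is in fact slightly sharper, yielding $\Inf_{\bar W}(f)\ge 2\eps$ rather than merely $\ge\eps$; the paper's route, on the other hand, is a one-line application of a standard Fourier identity once the right junta has been written down.
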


\begin{proof}
Define a (not necessarily Boolean) function $g\colon \01^n\rightarrow\mathbb{R}$ by $g(x)=\sum_{S\subseteq W}\widehat{f}(S)\chi_S(x)$. Let $h$ be the Boolean function that is the sign of~$g$. This $h$ only depends on the variables in~$W$, so it is a $k$-junta. Since $f$ is $\eps$-far from any $k$-junta, we have (using Parseval's identity)
%\begin{align*}
%\eps %& \leq d(f,h)\\
%& \leq \frac{1}{2^n}\sum_{x:f(x)\neq h(x)}1\\
%& \leq \frac{1}{2^n}\sum_{x:f(x)\neq h(x)}(f(x)-g(x))^2\\
%& \leq\Exp_x[(f(x)-g(x))^2]\\
%& =\sum_{s}(\widehat{f}(s)-\widehat{g}(s))^2\\
%& =\sum_{s\not\subseteq W}\widehat{f}(s)^2.
%\end{align*}
\[
\eps %& \leq d(f,h)\\
\leq \frac{\abs|\sfigA{x\mid f(x)\ne h(x)}|}{2^n} %\sum_{x:f(x)\neq h(x)} 1
\leq\Exp_x\skA[(f(x)-g(x))^2]
=\sum_{S}(\widehat{f}(S)-\widehat{g}(S))^2
=\sum_{S\not\subseteq W}\widehat{f}(S)^2.\qedhere
\]
\end{proof}

%\begin{rem}
%\label{rem:testInfluence}
%Note that \rf(alg:testInfluence) never returns a subset with $S\cap V=\emptyset$ if $\Inf_V(f)=0$.  This means that it can be used as a tester with 1-sided error that accepts if  $\Inf_V(f)\ge\delta$ and rejects if $\Inf_V(f)=0$.  The tester uses $O(\sqrt{1/\delta})$ queries and $O(n/\sqrt{\delta})$ other elementary operations.
%\end{rem}

\section{Gap version of group testing}
\label{sec:groupTesting}

Our junta testers, which we describe in the next section, work by generating a random subset $S\subseteq[n]$ and testing whether it intersects the set of influential variables of $f$.
In this section we study a more abstract problem, where we assume that we have an oracle that answers this intersection-question with certainty.
We believe this problem is of independent interest.
%Before presenting our improved quantum junta tester, we first construct a dual adversary for the following related problem: the group testing problem. 
%This problem is defined for functions from $2^{[n]}$ to $\{0,1\}$, so we switch from $z\in\{0,1\}^n$ to $S = \{i\in[n]\mid z_i=1\}$.  
%The functions under consideration will be defined by subsets of $[n]$, so we form $\cX$ and $\cY$ out of the corresponding subsets.

\mycommand{Inter}{\mathrm{Intersects}}
For each $A\subseteq [n]$, define the function $\Inter_A\colon 2^{[n]}\to \{0,1\}$ by
\begin{equation}
\label{eqn:fADefn}
\Inter_A(S) = 
\begin{cases}
1,&  \text{if }A\cap S \ne\emptyset; \\
0,& \mbox{otherwise.}
\end{cases}
\end{equation}

In the standard version of group testing~\cite{du:combinatorialGroupTesting} one is given oracle access to the function $\Inter_A$ for some $A$ with $|A|\le k$, and the task is to identify $A$.  
%\onote{11/1: modified text so we can mention previous work:}
A natural variant of this problem is to compute or approximate the cardinality of $A$ (a task already considered in the group testing literature~\cite{DamaschkeM10}), and a decision version of the latter is deciding whether that cardinality is $k$ or $k+d$ for some $d,k\geq 1$. 
We define this formally next.
%\onote{was: In this section, we study the following \onote{11/1: added ``natural''}natural decision versions of this problem.  We give two definitions of the problem, the second one being a relaxation of the first one.}

\begin{defn}[\eggt]
\label{defn:exactGroupTesting}
Let $k$ and $d$ be positive integers, $\cX$ consist of all subsets of $[n]$ having size exactly $k$, and $\cY$ consist of all subsets of $[n]$ having size exactly $k+d$.
In the \emph{exact gap version of the group testing (\eggt) problem} with parameters $k$ and $d$, one is given oracle access to the function $\Inter_A$ with $A\in\cX\cup\cY$, and the task is to decide whether $A\in\cX$ or $A\in\cY$.
\end{defn}

%\onote{11/1: added:}
We also study a relaxation of \eggt, in which we allow ``false negatives'' in the small-set case and ``false positives'' in the large-set case.
This will be convenient for our applications. Luckily, algorithms for solving 
\eggt often turn out to also solve this harder problem. 

\begin{defn}[\ggt]
\label{defn:GroupTesting}
Let $k$ and $d$ be positive integers.
Define two families of functions
\begin{equation}
\label{eqn:cXPrim}
\ctX = \sfig{f\colon 2^{[n]}\to\{0,1\} \midA
\exists A\in\cX\; \forall S\subseteq[n] : S\cap A = \emptyset \implies f(S)=0}
\end{equation}
and
\begin{equation}
\label{eqn:cYPrim}
\ctY = \sfig{f\colon 2^{[n]}\to\{0,1\} \midA
\exists B\in\cY \; \forall S\subseteq[n] : S\cap B \ne \emptyset \implies f(S)=1}.
\end{equation}
In the \emph{gap version of the group testing (\ggt) problem} with parameters $k$ and $d$, one is given oracle access to $f\in \ctX\cup\ctY$, and the task is to decide whether $f\in\ctX$ or $f\in\ctY$.
\end{defn}

It is easy to see that \eggt is a special case of the \ggt, where the implications in~\rf(eqn:cXPrim) and~\rf(eqn:cYPrim) are replaced by equivalences.
\bnote{03.07: Added the following:}
The \ggt problem also includes as a special case the problem of distinguishing a function $\Inter_A(f)$ with $|A|\le k$ from a function $\Inter_A(f)$ with $|A|\ge k+d$.
%\onote{11/1: added:}
%Notice also that for any $k$ and $d' \ge d \ge 1$, \ggt with parameters $k$ and $d'$ is a special case of \ggt with parameters $k$ and $d$ (i.e., increasing $d$ does not make the problem harder). 
%\onote{11/1: removing: Despite vast amount of research done on combinatorial group testing, the problems in Definitions~\ref{defn:exactGroupTesting} and~\ref{defn:GroupTesting} were not, up to our knowledge, studied previously.}

\subsection{Randomized complexity}\label{sec:ggtrandcomplexity}

In this section we show that the randomized query complexity of the gap version of group testing is $\widetilde\Theta\sA[\min\{k, 1+(k/d)^2\}]$.

The upper bound already follows from the existing literature on group testing: an $O\sA[1+(k/d)^2]$ bound appears in~\cite{ChengX14} (with a surprisingly elaborate analysis) and an $O\sA[k \log k]$ bound appears in~\cite{Cheng11} and independently in~\cite[Section~5.3]{garciasoriano:phd}. Strictly speaking those results only apply to the \eggt problem. Because of that, and also for completeness, we include in Section~\ref{sec:ggtupperbound} an upper bound for the more general \ggt problem. 

The only \emph{lower} bound for randomized complexity that we are aware of is $\Omega(k)$ for the special case $d=1$ due to Garc\'{i}a-Soriano~\cite[Section~5.3]{garciasoriano:phd}, who calls the problem ``relaxed group testing.'' We give two different lower bound proofs here for the general case of arbitrary~$d$.  The first (Section~\ref{sec:lowerboundeggt}) is a short reduction from the Gap Hamming Distance problem in communication complexity which applies only to \ggt.  The second (Section~\ref{sec:lowerboundggt}) is a substantially longer but self-contained proof which applies to \eggt as well.

\subsubsection{Upper bound}
\label{sec:ggtupperbound}

\begin{thm}
\label{thm:randomizedUpperBound}
For any $k,d \ge 1$, the randomized query complexity of the \ggt problem with parameters $k$ and $d$ is $O\sA[\min\{k \log k, 1+(k/d)^2\}]$.
\end{thm}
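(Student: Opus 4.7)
The plan is to present two algorithms, one for each of the two terms in the minimum: a non-adaptive sampling algorithm for the $O(1+(k/d)^2)$ bound, and an adaptive hash-and-search algorithm for the $O(k\log k)$ bound. On a given input the algorithm runs whichever of the two uses fewer queries, since $k$ and $d$ are known in advance.

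For the sampling bound, I would query $f$ on $m$ independent random subsets $S_i$ of $[n]$, each formed by including every element independently with probability $p = 1/k$, and decide according to whether the empirical fraction of queries with $f(S_i) = 0$ lies above or below a fixed threshold. Any $f \in \ctX$ with witness $A$ of size $k$ forces $\Pr[f(S_i)=0] \geq \Pr[S_i \cap A = \emptyset] = (1-1/k)^k$, which is a constant bounded below by $1/(2e)$; whereas any $f \in \ctY$ with witness $B$ of size $k+d$ forces $\Pr[f(S_i)=0] \leq \Pr[S_i \cap B = \emptyset] = (1-1/k)^{k+d}$. A short elementary estimate gives a gap between these two probabilities of $\Omega(\min\{1, d/k\})$, so by Hoeffding's inequality a correct decision with constant confidence is reached with $m = O(1 + (k/d)^2)$ queries.

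For the $O(k\log k)$ bound, I would first hash $[n]$ down to $[N]$ with $N = \Theta(k^3)$ via a uniformly random $h\colon[n]\to[N]$, and work with the induced function $g(T) := f\sA[\bigcup_{i\in T} h^{-1}(i)]$ on $2^{[N]}$. In the relevant regime $d \leq k$ the witness has size at most $2k$, so a birthday calculation shows that with probability $1 - O(1/k)$ the hash is injective on the witness, in which case $g$ inherits the $\ctX$ or $\ctY$ structure with a witness of the same size but over the much smaller ground set $[N]$. Then I would run the standard adaptive splitting tree on $g$: start with $T = [N]$; whenever $g(T)=1$ and $|T|>1$, split $T$ into two equal halves and recurse; abort and output ``$\ctY$'' as soon as more than $Ck\log N$ queries have been made, and otherwise output ``$\ctY$'' iff the number of singletons $\{j\}$ with $g(\{j\})=1$ exceeds $k$.

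Correctness hinges on two observations. In the $\ctX$ case every positive internal node $T$ of the tree must intersect the hashed witness $A'$ (the contrapositive of~\rf(eqn:cXPrim) transferred to $[N]$); since distinct nodes at the same depth are disjoint, there are at most $|A'|\leq k$ such nodes per depth and hence $O(k\log N)=O(k\log k)$ in total, so the abort is never triggered and the number of positive singletons found is at most $|A'|\leq k$. In the $\ctY$ case, for each $b \in B'$ every set on the root-to-$\{b\}$ path contains $b$ and therefore evaluates to $1$ by~\rf(eqn:cYPrim), so every element of $B'$ is discovered as a positive singleton whenever the tree is fully explored, yielding at least $k+d \ge k+1$ positive singletons. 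The main obstacle I expect is precisely the $\ctY$ side: the function $g$ may evaluate to $1$ on many sets disjoint from $B'$ (``spurious positives''), so the splitting tree need not be small a priori. The budget-based abort sidesteps this cleanly, because an abort can only occur when the instance is \emph{not} in $\ctX$, in which case outputting ``$\ctY$'' is correct.
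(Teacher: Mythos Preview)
Your argument is correct. The $O(1+(k/d)^2)$ part is essentially identical to the paper's: both sample random subsets with inclusion probability $1/k$, observe a gap of $\Omega(\min\{1,d/k\})$ between the acceptance probabilities in the two cases, and finish with a concentration bound.

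For the $O(k\log k)$ part you take a genuinely different route. The paper works directly on $[n]$: it maintains a growing partition of $[n]$ into sets on which $f$ returns~$1$, and at each step tries (up to $O(\log k)$ times) to split an ``active'' set into two \emph{random} halves on which $f$ returns~$1$ on both; a set that resists splitting is declared inactive. The key observation is that in the $\ctY$ case a set meeting $B$ in at least two points splits successfully with probability $\ge 1/2$, so with high probability every inactive set meets $B$ in at most one point. In the $\ctX$ case, having $k+1$ disjoint sets all returning~$1$ is impossible. A simple counting of steps gives $O(k\log k)$ queries.

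You instead first hash $[n]$ down to a universe of size $\Theta(k^3)$ (so the witness survives injectively with probability $1-O(1/k)$), then run a \emph{deterministic} halving tree with a global budget of $O(k\log k)$ queries, aborting to ``$\ctY$'' if the budget is exceeded. The hashing is what lets you replace the depth $\log n$ of a naive halving tree by $\log k$; the budget-abort is what neutralises the spurious positives in the $\ctY$ case. Both ingredients are used correctly. Compared to the paper, your approach trades the random-splitting-with-retry idea for a universe reduction plus a clean budget argument; the paper's version avoids the hashing step and the extra failure probability it introduces, but your version makes the worst-case query count in the $\ctX$ case completely deterministic (after hashing) and is arguably more transparent about why ``too many queries'' already certifies~$\ctY$.
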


\pfstart
We start with the easier upper bound of $O\sA[1+(k/d)^2]$.
Take $S\subseteq[n]$ by including each element independently at random with probability $1/k$.
If we are in the ``small'' case of \ggt (as in~\eqref{eqn:cXPrim}), the probability of
$f(S)=1$ is at most $1-(1-1/k)^k$. If, on the other hand, we are in the ``large'' case of \ggt (as in~\eqref{eqn:cYPrim}) then that probability is at least $1-(1-1/k)^{k+d}$.
As these two probabilities differ by $\Omega(\min\{1,d/k\})$, by a Chernoff bound, we can distinguish the two cases by repeating this procedure $O\sA[1+(k/d)^2]$ times. 

We now prove the upper bound $O\sA[k \log k]$. The algorithm maintains a partition of $[n]$, initially set to the trivial partition $\{[n]\}$. Each set in the partition can be either active or inactive, with the initial set $[n]$ being active. We maintain the invariant that for all sets $S$ in the partition, $f(S)=1$. (We can assume $f([n])=1$ as otherwise we are clearly in the ``small'' case.) At each step of the algorithm, we take an active set $S$ in the partition and repeat the following $10\log k$ times. We partition $S$ into $S_1$ and $S_2$ by 
taking each element of $S$ independently to be in either $S_1$ or $S_2$ with probability $1/2$. We then query $f$ on both $S_1$ and $S_2$. If $f$ returns $1$ on both, then we replace $S$ with $S_1$ and $S_2$, and this ends the loop for $S$. Otherwise, if we did not manage to split $S$ after $10\log k$ attempts, we declare $S$ to be inactive and move on to another set in the partition. If at any point the partition contains at least $k+1$ sets we stop and output ``large.'' Otherwise, if all (at most $k$) sets are inactive, we stop and output ``small.'' 
  
Notice that after each step we either add a set to the partition or declare a set inactive. There can therefore be at most $2k$ steps, and since each step involves at most $10\log k$ queries, the total number of queries is at most $20 k \log k$. The correctness in the ``small'' case 
is immediate from our invariant: the only way for the algorithm to output ``large'' is 
if there are $k+1$ sets in the partition on which $f$ returns $1$ but this cannot
happen in the small case. So consider the ``large'' case as in~\eqref{eqn:cYPrim} with 
some set $B$ of size at least $k+1$. We claim that with high probability, 
all inactive sets intersect $B$ in at most $1$ element, and hence it cannot happen that there are at most $k$ sets in the partition and all are inactive. To see why, notice that if a
set $S$ intersects $B$ in at least two elements, then there is probability $1/2$ 
that when we split $S$ into $S_1$ and $S_2$, these two elements would end up in a 
different set. In this case $f$ must answer $1$ on both $S_1$ and $S_2$ and 
$S$ would be split. Therefore the probability for such an $S$ to become inactive
is at most $2^{-10\log k} = k^{-10}$, which means that with high probability
this bad event will never happen.
\pfend

\subsubsection{Lower bound for \ggt via communication complexity}
\label{sec:lowerboundeggt}

\begin{thm}
\label{thm:randomizedLowerBoundCommunication}
For any $k,d \ge 1$, the randomized query complexity of the \ggt problem with parameters $k$ and $d$ is $\widetilde\Omega\sA[\min\{k, 1+(k/d)^2\}]$.
\end{thm}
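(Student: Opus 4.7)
The plan is a short reduction from the Gap Hamming Distance problem in communication complexity. Recall that $\mathrm{GHD}_{m,g}$ is the promise problem in which Alice has $x\in\{0,1\}^m$, Bob has $y\in\{0,1\}^m$, and they must decide whether $\Delta(x,y)\le m/2-g$ or $\Delta(x,y)\ge m/2+g$. The classical result of Chakrabarti and Regev, together with its now-standard extension to arbitrary gap, shows that the public-coin randomized communication complexity of this problem is $\Omega\sA[\min\{m,(m/g)^2\}]$.

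Given GGT parameters $k,d$ with $2k+d\le n$ (the complementary regime gives a trivial $\Omega(1)$ bound), I would set $m=2k+d$ and $g=d/2$, so that the GHD promise translates exactly to the GGT promise for the set $A:=\{i\in[m]:x_i\ne y_i\}$, regarded as a subset of $[m]\subseteq[n]$. Any randomized $T$-query GGT algorithm $\cA$ is then simulated as a communication protocol: using public randomness Alice and Bob fix $\cA$'s coin tosses so that both know each query set $S\subseteq[n]$ in common, and they evaluate
\[
\Inter_A(S)\;=\;\bigl[\,x|_{S\cap[m]}\ne y|_{S\cap[m]}\,\bigr]
\]
by running the standard public-coin randomized equality-testing protocol on their respective restrictions, which costs $O(\log T)$ bits per query to guarantee per-query error $\le 1/(100T)$. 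A union bound over the $T$ queries keeps the overall error below $1/3$, so the total communication is $O(T\log T)$. The GHD lower bound therefore forces
\[
T\log T \;=\; \Omega\sA[\min\{m,(m/g)^2\}] \;=\; \Omega\sA[\min\{k,\,1+(k/d)^2\}] ,
\]
using $m=\Theta(k+d)=\Omega(k)$ and $(m/g)^2=\Theta(1+(k/d)^2)$; rearranging yields the claimed $T=\widetilde\Omega\sA[\min\{k,\,1+(k/d)^2\}]$.

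The only delicate step, and the source of the $\log T$ loss, is the per-query simulation. This is possible precisely because $\Inter_A(S)$ depends only on whether two $|S|$-bit strings held one per player are equal---exactly the task solved by randomized equality testing in polylogarithmic communication. Note that the reduction produces an $A$ whose size can take any value in the promised range rather than the two exact sizes $k$ and $k+d$, so the argument yields a lower bound for \ggt but not for \eggt; this is precisely why the separate, longer combinatorial argument in the next subsection is required.
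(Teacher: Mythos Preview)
Your proof is correct and follows essentially the same approach as the paper: reduce from Gap Hamming Distance, simulate each query of a GGT algorithm by running a cheap randomized equality test on the two players' restrictions to the query set, and invoke the Chakrabarti--Regev lower bound. The only cosmetic difference is the parameter choice: you take $m=2k+d$ and $g=d/2$ so the GHD thresholds hit $k$ and $k+d$ exactly, whereas the paper takes $m=2k$ and $g=d$ (giving thresholds $k-d$ and $k+d$, which still land in $\ctX$ and $\ctY$); both work, and your closing remark that the argument proves the bound for \ggt\ but not \eggt\ matches the paper's observation.
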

\begin{proof}
Following the general approach to classical testing lower bounds of Blais \etal~\cite{blais:testingLowerViaCommunication}, we will show a reduction from the Gap Hamming Distance (GHD) problem in communication complexity. In this problem there are two parties, Alice and Bob. Alice receives a bit string $x \in \{0,1\}^n$ and Bob receives a bit string $y \in \{0,1\}^n$.
Their goal is to decide if the Hamming weight of $x \oplus y$ (i.e., the Hamming distance between $x$ and $y$) is greater than $n/2+g$ or at most $n/2-g$ (and can behave arbitrarily for values in between). They are allowed the use of shared randomness and, for each input $(x,y)$, need to output the correct answer with probability, say, at least~$2/3$. 
It was shown in~\cite{ChakrabartiR10} (see also~\cite{Sherstov12}) that for any $1 \le g < n/2$, any protocol solving this problem
must use at least $\Omega(\min\{n,n^2/g^2\})$ bits of communication (and this is tight). 

Now let $k,d \ge 1$. It clearly suffices to prove the theorem for $d < k$, so assume that this is the case. The result now follows from the observation that any algorithm solving \ggt with parameters $k$ and $d$ making $q \le k$ queries implies a protocol for GHD with parameters $n=2k$ and $g=d$ using $O(q \log k)$ bits of communication. 
In this protocol, Alice and Bob simulate the \ggt algorithm given oracle access to $\Inter_{A}$ where $A \subseteq [n]$ is the support of $x \oplus y$, using their shared randomness as coins to the algorithm. Whenever the algorithm performs a query $S$, Alice and Bob compute $\Inter_{A}(S)$ by running an Equality protocol with error probability less than $1/(10k)$ and communication $O(\log k)$ (see~\cite[Example~3.13]{kushilevitz&nisan:cc}) to check if the restrictions of $x$ and $y$ to $S$ are identical. 
It remains to notice that if the Hamming weight of $x \oplus y$ is at most $n/2-g=k-d$, then $\Inter_{A} \in \ctX$, whereas if it is greater than $n/2+g=k+d$, then $\Inter_{A} \in \ctY$. 
\end{proof}

\subsubsection{Direct lower bound for \eggt}
\label{sec:lowerboundggt}

\begin{thm}
\label{thm:randomizedLowerBound}
For any $k,d \ge 1$, the randomized query complexity of the \eggt problem with parameters $k$ and $d$ is $\widetilde\Omega\sA[\min\{k, 1+(k/d)^2\}]$.
\end{thm}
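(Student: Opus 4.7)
The plan is to apply Yao's minimax principle: fix hard distributions $\mu_0$ (uniform over $k$-subsets of $[n]$) and $\mu_1$ (uniform over $(k+d)$-subsets), with $n = \Theta(k+d)$ chosen so hypergeometric estimates behave nicely, and show that no deterministic decision tree of depth $q = o(\min\{k,(k/d)^2\})$ can distinguish the two distributions with constant advantage. A deterministic algorithm against these distributions produces a transcript $\tau = (S_1,b_1,\ldots,S_q,b_q)$ with $b_i = \Inter_A(S_i)$ and each $S_i$ a function of $b_1,\ldots,b_{i-1}$; it suffices to bound the total variation between the laws of $\tau$ under $\mu_0$ and under $\mu_1$ by a constant less than $1/3$.

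My main tool would be the KL chain-rule decomposition combined with a symmetry reduction. First I would average the algorithm over a uniformly random relabelling of $[n]$ (which preserves correctness against symmetric distributions) so that the conditional posterior on $A$ given any history depends only on the combinatorial type of the previous queries; this makes the conditional distribution of each $b_i$ a tractable hypergeometric Bernoulli. Then the chain rule gives that the transcript KL decomposes as $\sum_i \Exp[D_i]$, where $D_i$ is the conditional KL between the two Bernoulli distributions for $b_i$ given the history. A per-query hypergeometric estimate should then bound each $D_i$ by $O(1/\min\{k,(k/d)^2\})$: in the large-$d$ regime the bound $O((d/k)^2)$ comes from the worst-case query size $s = \Theta(n/k)$, where the two probabilities of $b_i=1$ differ by $\Theta(d/k)$ while the Bernoulli variance is $\Theta(1)$; in the small-$d$ regime the bound $O(1/k)$ comes from the fact that the posterior on $A$ remains nearly uniform on size-$k$ (respectively size-$(k+d)$) consistent subsets after any $o(k)$-query history, so that even a highly adaptive query can only extract $O(1/k)$ bits of distinguishing information on average. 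Summing over the $q$ queries and applying Pinsker's inequality then yields the claimed lower bound.

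The main obstacle will be making the per-query conditional KL bound robust under adaptivity, particularly in the small-$d$ regime: adaptive binary-search-style queries can concentrate the posterior on highly informative sub-configurations, and one has to show that the two posteriors under $\mu_0$ and $\mu_1$ remain coupled enough that a single query cannot extract more than $O(1/k)$ bits of distinguishing information on average, even though it may extract much more information about $A$ itself. I expect this to require a potential-function argument tracking something like the $\chi^2$-distance between the two posteriors on $A$ as the algorithm proceeds, together with a careful case analysis (based on the query size $s$ and the known structure of the current history) showing that this potential grows by at most $O(1/\min\{k,(k/d)^2\})$ per query under any choice of $S_i$. The logarithmic factor in $\widetilde\Omega$ should come from this argument either through the symmetrization step or through slack in the potential analysis.
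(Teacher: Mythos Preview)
Your high-level setup (Yao with uniform $k$- and $(k+d)$-subsets, transcript total variation) matches the paper, but the two arguments diverge at the crucial point, and your proposal has a real gap exactly where you flag the ``main obstacle.''

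The paper does \emph{not} attempt a per-query information bound under adaptivity. Instead it makes one structural observation that kills adaptivity outright: any query $S$ with more than $\Theta((n/k)\log k)$ \emph{fresh} elements returns $1$ with probability $\ge 1-1/k^2$ under either distribution (Claim~3.7), so such queries can be replaced by dummies at total cost $O(1/k)$ in output TV (Lemma~3.6). After this substitution the tree touches at most $m=\min\{\Theta(n),\Theta(n(k+d)/d^2)\}$ elements of the universe in total, and by symmetry the problem reduces to distinguishing the two hypergeometric laws $\cH_n(k,m)$ and $\cH_n(k+d,m)$, whose TV is then shown to be $1-\Omega(1)$ by a direct Kolmogorov-distance computation (Claim~3.8). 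No chain rule, no potential function, no per-query KL is needed.

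Your proposed route, by contrast, puts all the weight on the step you yourself call the main obstacle: bounding the conditional per-query KL by $O(1/\min\{k,(k/d)^2\})$ \emph{uniformly over adaptive histories}. This is not a matter of filling in routine details. A binary-search-style history can drive the two posteriors on $A$ to be supported on very different configurations, and you give no concrete mechanism (only ``a potential-function argument tracking something like the $\chi^2$-distance'') for why the next query's distinguishing KL stays small in that regime. Absent that, the proposal is a plan rather than a proof. If you want to push your approach through, the cleanest fix is precisely the paper's idea: first argue that large-fresh-set queries are near-deterministic and prune them; once every query has $O((n/k)\log k)$ fresh elements, your per-query KL bound becomes a straightforward hypergeometric estimate and the chain-rule sum closes.
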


%\onote{12/8: removing, and replacing with a lemma containing only the second item:
%\begin{lem}
%\label{lem:subsetSamplingOLD}
%Provided that $n$ is large enough,
%\itemstart
%\item[(a)] if $m\ge 16n (k+d)/d^2$ and $t=(k+d/2)\frac mn$, then 
%\[
%\Pr\skA[\cH_n(k,m)\ge t]\le \frac14\qquad\text{and}\qquad \Pr\skA[\cH_n(k+d,m)\le t]\le \frac14\;;
%\]
%\item[(b)] if $m\le n/2$ and $m\le n(k+d)/d^2$, then the total variation distance between $\cH_n(k,m)$ and $\cH_n(k+d,m)$ is $1-\Omega(1)$.
%\itemend
%\end{lem}
%}

\pfstart
\mycommand{sizelimit}{4\frac nk\ln k}
We apply Yao's principle.  Let $\cX$ and $\cY$ be the uniform probability distributions over all subsets of $[n]$ of sizes $k$ and $k+d$, respectively, and $\cD$ be the uniform mixture of $\cX$ and $\cY$.
Denote $m = \min\sfig{\floor[n/4], \floor[n(k+d)/d^2]}$.
Assume that $\cT$ is a deterministic decision tree of depth less than $m/(\sizelimit)$.
We prove that, if $n$ is large enough, the total variation distance between the outputs of $\cT$ on $\cX$ and $\cY$ is at most $1-\Omega(1)$.
As the success probability of a randomized algorithm with error probability $\leq 1/3$ can be amplified to be larger than that same $1-\Omega(1)$ in a constant number of repetitions, the theorem follows.

The main idea of the proof is this. We first show in Lemma~\ref{lem:TreeSubstitution} that queries to subsets of cardinality larger than $\sizelimit$ are going to return $1$ with very high probability. As a result, such queries are useless and we can avoid them altogether, resulting in a decision tree $\cT'$ in which all queries are to subsets of size at most $\sizelimit$. We then complete the proof by showing that since $\cT'$ looks at no more than $m$ elements of the universe $[n]$, it cannot distinguish $\cX$ and $\cY$ with high probability.

\mycommand{vertex}{\cV}
Let us introduce some notation.
Each vertex $\vertex$ of a deterministic decision tree $\cT$ is characterized by its query $S_\vertex \subseteq[n]$.  If $\vertex$ is at depth $t-1$, let $\vertex_1,\dots,\vertex_t$ be the vertices on the path from the root to $\vertex$, with $\vertex_1$ being the root and $\vertex_t = \vertex$.  
For $i\in[t-1]$, let $S_i$ be the corresponding query and $b_i\in\bool$ be the 
output of the oracle that leads from $\vertex_i$ to $\vertex_{i+1}$.
We say that the vertex $\vertex$ is \emph{used} on an input $A\subseteq [n]$ if it is visited when applying the decision tree, i.e., if and only if $\Inter_A(S_i) = b_i$ for all $i\in[t-1]$.  
Finally, the set of \emph{fresh} variables queried by a vertex~$\vertex$ consists of those
variables not queried by its ancestors, namely, $F_\vertex = S_\vertex\setminus \bigcup_{i\in[t-1]} S_i$.

\begin{lem}
\label{lem:TreeSubstitution}
For any decision tree $\cT$ of depth $T \le k/2$, there exists a decision tree $\cT'$ of the same depth such that
\itemstart
\item for any vertex $\vertex$ of $\cT'$, the set of fresh variables, $F_\vertex$, is of size at most $\sizelimit$; and
\item the probability that $\cT$ and $\cT'$ disagree on $A\sim \cD$ is at most $1/k$.
\itemend
\end{lem}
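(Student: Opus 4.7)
My plan is to build $\cT'$ from $\cT$ in one top-down pass, marking each vertex of $\cT$ as \emph{active} or \emph{inactive}. For a vertex $\vertex$ with ancestors $\vertex_1,\ldots,\vertex_{t-1}$ in $\cT$, I maintain
\[
U_\vertex^{\cT'} := \bigcup_{\vertex_i\text{ is an \emph{active} ancestor}}S_{\vertex_i},
\qquad
F_\vertex^{\cT'} := S_\vertex\setminus U_\vertex^{\cT'},
\]
and declare $\vertex$ active iff $|F_\vertex^{\cT'}|\le\sizelimit$. An active vertex in $\cT'$ keeps the original query $S_\vertex$; an inactive vertex performs a dummy (empty) query and has both of its children connected to the $1$-child of $\vertex$ in $\cT$. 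This preserves depth exactly, and by construction $|F_\vertex^{\cT'}|\le\sizelimit$ at every active vertex, while $F_\vertex^{\cT'}=\emptyset$ at every inactive one. That gives point (i).

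For point (ii), note that $\cT$ and $\cT'$ follow identical paths through active vertices (both branch on $\Inter_A(S_\vertex)$), so they can disagree on $A$ only at an inactive vertex $\vertex$ on the common path at which $\Inter_A(S_\vertex)=0$ (where $\cT'$ is hard-wired to $1$). Using $F_\vertex^{\cT'}\subseteq S_\vertex$ and a union bound,
\[
\Pr_{A\sim\cD}[\cT\ne\cT']\;\le\;\sum_{\vertex\text{ inactive}}\Pr_{A\sim\cD}\sk[\cT'\text{ visits }\vertex\text{ and }A\cap F_\vertex^{\cT'}=\emptyset].
\]
The decisive observation is a \emph{disjointness} property: the event ``$\cT'$ visits $\vertex$'' is measurable with respect to $A\cap U_\vertex^{\cT'}$, the event ``$A\cap F_\vertex^{\cT'}=\emptyset$'' is measurable with respect to $A\cap F_\vertex^{\cT'}$, and by construction $U_\vertex^{\cT'}$ and $F_\vertex^{\cT'}$ are disjoint.

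To convert disjointness into genuine independence, I would pass to the product Bernoulli measure $\cD_B$ on $2^{[n]}$ (with parameter $p=k/n$ for the $\cX$-component of $\cD$, and $p=(k+d)/n$ for the $\cY$-component, handling them separately and averaging). Under $\cD_B$ the two events become independent and
\[
\Pr_B[A\cap F_\vertex^{\cT'}=\emptyset]\le(1-p)^{|F_\vertex^{\cT'}|}\le e^{-p\sizelimit}\le e^{-4\ln k}=k^{-4}.
\]
Combined with the trivial identity $\sum_{\vertex\text{ at depth }t}\Pr_B[\cT'\text{ visits }\vertex]=1$ at each depth and the hypothesis $T\le k/2$, the Bernoulli-side sum is at most $Tk^{-4}\le k^{-3}/2$. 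The local central limit theorem (Stirling) gives $\Pr_B[|A|=k]=\Theta(1/\sqrt{k})$, hence $\Pr_\cD[\,\cdot\,]\le O(\sqrt{k})\Pr_B[\,\cdot\,]$, yielding a disagreement probability of $O(k^{-5/2})$, comfortably below~$1/k$ once $n$ (and hence $k$) is large enough.

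The main obstacle I anticipate is the Bernoulli-to-fixed-size transfer: one has to verify cleanly that the disjointness of $U_\vertex^{\cT'}$ and $F_\vertex^{\cT'}$ really yields independence under $\cD_B$ (this is where it matters that the \emph{ancestors used in the definition of $U_\vertex^{\cT'}$} are exactly the active ones, so that ``$\cT'$ visits $\vertex$'' depends only on variables in $U_\vertex^{\cT'}$) and that the local-limit factor $O(\sqrt{k})$ is comfortably absorbed by the per-vertex saving of $k^{-4}$. The remaining ingredients---the union bound, the depth bound, and the one-line exponential tail estimate---are essentially routine.
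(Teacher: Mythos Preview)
Your construction of $\cT'$ is the same as the paper's (the paper phrases it as a level-by-level sequence $\cT_0,\ldots,\cT_T$, but the resulting tree is identical to yours), and your union bound reducing the disagreement probability to a sum over inactive vertices is also the same. Where you diverge is in bounding each term: the paper proves a self-contained combinatorial claim (its Claim~3.4) showing that, \emph{conditioned on reaching $\vertex$}, the probability that $A\cap S_\vertex=\emptyset$ under the actual hypergeometric distribution $\cD$ is at most $1/k^2$. It does this by further conditioning on the ``minimal witness'' in $A\cap S_i$ for each ancestor with $b_i=1$, after which the residual distribution of $A$ is again uniform over subsets of a known ground set; the bound then follows from a direct ratio of binomial coefficients. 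Summing over the $T\le k/2$ levels gives $\le 1/(2k)$ with no asymptotic constants and no restriction on $k$.

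Your route---pass to the product Bernoulli measure, use the genuine independence of $\{A\cap U_\vertex^{\cT'}\}$ and $\{A\cap F_\vertex^{\cT'}\}$ there, and then transfer back by dividing by $\Pr_B[|A|=k]=\Theta(1/\sqrt{k})$---is correct and arguably more conceptual: it replaces the combinatorial conditioning trick by a one-line independence argument. Two small points to tidy up: (i) for the $\cY$-component the transfer factor is $O(\sqrt{k+d})$, not $O(\sqrt{k})$, but this is harmless since the per-vertex bound improves to $(1-(k+d)/n)^{4(n/k)\ln k}\le k^{-4(k+d)/k}$, which swamps any polynomial in $k+d$; (ii) your final inequality $O(k^{-5/2})\le 1/k$ needs $k$ larger than an absolute constant (taking $n$ large does not by itself force $k$ large), whereas the paper's argument gives $\le 1/(2k)$ for every $k$. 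This is a cosmetic gap---either handle small $k$ separately or note that the surrounding theorem only needs the lemma for $k$ bounded below.
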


\pfstart
Consider a sequence of decision trees
\[
\cT_0 = \cT,\;\; \cT_1,\dots, \cT_{T-1},\;\; \cT_T = \cT'
\]
defined in the following way.  Assume that $\cT_{t-1}$ is already defined.  For any vertex $\vertex$ of $\cT_{t-1}$ at depth $t-1$ such that $|F_\vertex|\ge\sizelimit$, replace it by a dummy query that always returns 1.  It is clear that $\cT'$ satisfies the first requirement.
From \rf(clm:substitution) below, it follows that the probability that $\cT_{t-1}$ and $\cT_t$ disagree on $A\sim \cD$ is at most $1/k^2$.  As $T \le k/2$, it follows \rnote{30/6: added}by a union bound that $\cT'$ satisfies the second requirement as well.
\pfend

\begin{clm}
\label{clm:substitution}
Assume that $\vertex$ is a vertex of $\cT_{t-1}$ at depth $t-1$.
If $|F_\vertex| \ge \sizelimit $, then $\pr\skA[A\cap S_\vertex = \emptyset \mid \vertex~\text{used on}~A] \le 1/k^2$.
\end{clm}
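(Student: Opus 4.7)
My plan is to condition on $T := A \cap U$, where $U := \bigcup_{i<t} S_i$ is the union of the variables queried along the path from the root to~$\vertex$.  Fix $m \in \{k, k+d\}$ (the argument is the same for both, and $\cD$ is their equal mixture).  The event ``$\vertex$ used on $A$'' depends on $A$ only through $A \cap U$; hence conditionally on $A \cap U = T$ (with $|A| = m$), the remainder $A \setminus U$ is uniformly distributed over $(m-|T|)$-subsets of $G := [n] \setminus U$.  Since $F_\vertex \subseteq G$ and $A \cap S_\vertex = \emptyset$ implies $A \cap F_\vertex = \emptyset$, it suffices to bound $\pr\sk[A \cap F_\vertex = \emptyset \mid \vertex \text{ used on } A]$ by $1/k^2$.

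Writing the conditioning out explicitly yields
\[
\pr\sk[\vertex \text{ used on } A \text{ and } A \cap F_\vertex = \emptyset] \;=\; \sum_{T \text{ good}} \binom{|G|-s}{m-|T|}/\binom{n}{m},
\]
where $s := |F_\vertex|$ and ``$T$ good'' means $T \subseteq U$ is consistent with the answer pattern on the path to~$\vertex$.  Dropping the good-$T$ restriction and applying Vandermonde's identity $\sum_{T \subseteq U}\binom{|G|-s}{m-|T|} = \binom{n-s}{m}$ gives
\[
\pr\sk[\vertex \text{ used on } A \text{ and } A \cap F_\vertex = \emptyset] \;\le\; \binom{n-s}{m}/\binom{n}{m} \;\le\; \exp(-ms/n) \;\le\; k^{-4},
\]
using $m \ge k$ and the hypothesis $s \ge (4n/k)\ln k$.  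Consequently, any lower bound of the form $\pr[\vertex \text{ used on } A] \ge k^{-2}$ immediately yields the claim by Bayes.

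The main obstacle is obtaining this lower bound on $\pr[\vertex \text{ used on } A]$ uniformly for all ``relevant'' vertices (those with $|F_\vertex| \ge (4n/k)\ln k$), since the positive-hit constraints $\sfigA{A \cap S_i \ne \emptyset}_{i \in I_1}$ in the definition of ``used'' can in principle be rare.  I expect this either to follow from positive correlation of the hit-events under the uniform measure on $m$-subsets (an FKG-type inequality for bases of the uniform matroid), or to be circumvented by a more global argument: since the per-vertex claim is ultimately only needed as input to \rf(lem:TreeSubstitution), and at most one vertex at depth~$t-1$ is used on any given~$A$, any pathological vertex with $\pr[\vertex \text{ used on } A] < k^{-2}$ contributes at most $\pr[\vertex \text{ used on } A]$ to the overall disagreement probability, so the aggregate $1/k^2$ bound still goes through even if the per-vertex conditional bound is loose.
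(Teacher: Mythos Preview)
Your proposal has a genuine gap. The joint-probability bound you derive,
\[
\pr\sk[\vertex\text{ used and }A\cap F_\vertex=\emptyset]\;\le\;\binom{n-s}{m}\bigg/\binom{n}{m}\;\le\;k^{-4},
\]
is correct but essentially vacuous: once you drop the ``good-$T$'' restriction and apply Vandermonde, you have recovered exactly the \emph{unconditional} event $A\cap F_\vertex=\emptyset$, so your inequality is just $\pr[X\wedge Y]\le\pr[Y]$. To pass to the conditional bound you need $\pr[\vertex\text{ used}]\ge k^{-2}$, and this simply fails---a vertex at depth $t-1$ can be reached with probability exponentially small in $t$. Your global workaround does not rescue the argument either: while each ``bad'' vertex contributes at most $\pr[\vertex\text{ used}]$ to the disagreement probability, the sum $\sum_{\text{bad }\vertex}\pr[\vertex\text{ used}]$ over the depth-$(t-1)$ vertices can be as large as~$1$, not $1/k^2$ (and the per-vertex $k^{-4}$ bound is useless here since there are up to $2^{t-1}$ vertices). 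The FKG idea is also unlikely to apply directly, since ``$\vertex$ used'' is a conjunction of both hit events $A\cap S_i\ne\emptyset$ and miss events $A\cap S_i=\emptyset$, hence not monotone.

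The underlying difficulty is that conditioning on all of $T=A\cap U$ can pin down too many elements of $A$: when $|T|$ is close to $|A|$ there are almost no ``free'' elements left to hit $F_\vertex$, and the conditional miss probability $\binom{|G|-s}{|A|-|T|}/\binom{|G|}{|A|-|T|}$ degenerates to near~$1$. The paper avoids this by using a \emph{coarser} conditioning: for each $i$ with $b_i=1$ it conditions only on the minimal element $c_i=\min(A\cap S_i)$. This fixes at most $|P|\le t-1\le k/2$ elements of $A$ (here the depth hypothesis $T\le k/2$ of \rf(lem:TreeSubstitution) is used), leaving at least $k/2$ elements of $A$ uniformly distributed over a set containing $F_\vertex$, and yielding the conditional bound $(1-|F_\vertex|/n)^{k/2}\le k^{-2}$ directly for every value of the conditioning.
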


\pfstart
Let $S_i$ and $b_i$ be defined as above the statement of \rf(lem:TreeSubstitution).
Let $P = \{i\mid b_i=1\}$ be the locations where the path expects an answer~$1$ from the oracle.
For each $A$ on which $\vertex$ is used and each $i\in P$, define $h_i(A) \in [n]$ as the minimal element of $A\cap S_i$.  

Assume that $(c_i)_{i\in P}$ is some possible list of the values of $h_i(A)$ (attained for at least one $A$).
Notice that if $A$ is chosen from $\cX$ and we condition on $\vertex$ being used, and on 
$h_i(A) = c_i$ for all $i \in P$, then the distribution of $A$ is uniform over all 
sets of the form $C \cup B$ where
\[
C = \{ c_i \}_{i \in P} \, ,
\]
and $B$ is a subset of $[n] \setminus D$ of cardinality $k-|C|$ for 
\[
D = \sfig{j \in [n] \midA \exists i: (b_i=0 \wedge j\in S_i)\vee(b_i=1 \wedge j\in S_i\wedge j\le c_i) }\,.
\]
Therefore, under the same conditional distribution, we have
\begin{align*}
\pr\skA[A\cap S_\vertex = \emptyset] &\le \pr\skA[A\cap F_\vertex = \emptyset]
= \bigfrac{\binom{n-|D|-|F_\vertex|}{k-|C|}}/{\binom{n-|D|}{k-|C|}} \\
&\le \s[1-\frac {|F_\vertex|}n]^{k-|C|}
\le \s[1-\frac {4\ln k}{k}]^{k/2} \le \ee^{-2\ln k} = 1/k^2 \,
\end{align*}
where we used $|C| \le |P| \le k/2$. 
The case $A\in\cY$ is similar, and we get the same upper bound.  
As the conditional probability is at most $1/k^2$ in all possible cases, 
the probability in the statement of the lemma is also at most~$1/k^2$.
\pfend

Therefore, in order to complete the proof of the theorem, it suffices to show 
that the total variation distance between the output distributions of $\cT'$ 
on $\cX$ and on $\cY$ is $1-\Omega(1)$. 
Notice that since the depth of $\cT'$ is at most $m/(\sizelimit)$ and since each
query is to a subset of cardinality at most $\sizelimit$, the total number of 
elements of the universe $[n]$ ``addressed'' by $\cT'$ is at most $m$. 
It therefore suffices to prove the following stronger statement. 
For any deterministic decision tree $\cB$ of depth at most $m$ that makes queries of the form
$j \in_? A$, the total variation distance between the output distributions of $\cB$ 
on $\cX$ and on $\cY$ is at most $1-\Omega(1)$.
%\bnote{Do you mean that you split a query $S_\cV$ into $|F_\cV|$ individual queries to the fresh elements?  This is not apparent at first.}
%\onote{Any deterministic procedure that queries at most $m$ elements can clearly be represented by a decision tree of depth at most $m$; it indeed boils down to what you're saying (splitting each vertex) but I don't think we need to say it explicitly. It's really just follows from 
%basic properties of decision trees.}

To show this, we first claim that due to the strong symmetry properties of $\cX$ and $\cY$,
we can assume without loss of generality that for all $j$, all vertices of $\cB$ at depth $j-1$ 
perform a query on element $j$. Indeed, we can obviously assume that the queries along each
path from the root to a leaf are distinct. Therefore, by applying a permutation on the 
universe $[n]$, we can ensure that the query at the root is $1$ and all other vertices
perform queries in $\{2,\ldots,n\}$. Crucially, this permutation leaves both $\cX$ and $\cY$ invariant. Similarly, we can now recur on each subtree of the root, 
and perform a permutation so as to make both vertices at depth $1$ query the element $2$.
Since $\cX$ is invariant under permutations of $\{2,\ldots,n\}$,
both when conditioned on containing $1$ and on not containing $1$, and similarly for $\cY$,
this modification again does not affect the output distribution. Continuing in the same
fashion we obtain the claim. 

As a result, we obtain that $\cB$ depends only on $A \cap \{1,\ldots,m\}$. 
Since the distribution of the latter when $A$ is chosen from $\cX$ 
and from $\cY$ is symmetric, it is enough to show that the total
variation distance between the distribution of $|A \cap \{1,\ldots,m\}|$ 
when $A$ is chosen from $\cX$ and that distribution when $A$ is chosen from $\cY$, 
is $1-\Omega(1)$. 
This is shown in the following technical claim, for which
we observe that the former distribution is precisely 
the hypergeometric distribution $\cH_n(k,m)$ and the latter 
distribution is $\cH_n(k+d,m)$.
\pfend

\begin{clm}
\label{clm:subsetSampling}
For all integers $k,d>1$, there exists $N$ such that, for all $n>N$, the total variation distance between $\cH_n(k,m)$ and $\cH_n(k+d,m)$ is $1-\Omega(1)$, where $m = \min\sfig{\floor[n/4], \floor[n(k+d)/d^2]}$.
\end{clm}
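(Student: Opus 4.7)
My plan is to reduce the statement to a one-point overlap argument: I will show that the outcome $X=0$ receives probability bounded below by a positive constant (depending only on $k$ and $d$) under both distributions, which suffices because the total variation distance between two probability measures $\mu$ and $\nu$ is $1-\sum_i\min\sfig{\mu(i),\nu(i)}$.

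The first step is to unify the two possible values of $m$ by noting that $m\le n/4$ in either case. This is immediate when $m=\floor[n/4]$; and when $m=\floor[n(k+d)/d^2]$ is the smaller of the two terms, the defining inequality $n(k+d)/d^2\le n/4$ again gives $m\le n/4$.

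The second step is to lower bound $\pr\skA[X=0]$ for $X\sim\cH_n(K,m)$ with $K\in\sfig{k,k+d}$. This probability is precisely
\[
\frac{\binom{n-K}{m}}{\binom{n}{m}} \;=\; \prod_{i=0}^{K-1}\frac{n-m-i}{n-i} .
\]
Each factor equals $1-m/(n-i)$, which I would bound below by $1/2$: for $n\ge 2(k+d)$ and $i<K\le k+d$ we have $n-i\ge n/2$, and combined with $m\le n/4$ this yields $m/(n-i)\le 2m/n\le 1/2$. The product is therefore at least $(1/2)^K\ge (1/2)^{k+d}$, uniformly in $K\in\sfig{k,k+d}$.

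Combining these, whenever $n\ge 2(k+d)$ both $\cH_n(k,m)$ and $\cH_n(k+d,m)$ assign probability at least $(1/2)^{k+d}$ to the value $0$, so their overlap at that value alone is at least $(1/2)^{k+d}$ and the total variation distance is at most $1-(1/2)^{k+d} = 1-\Omega(1)$, with the implicit constant depending only on $k$ and $d$. The argument is short and I do not foresee any genuine obstacle; the only mildly subtle point is noticing that $m\le n/4$ holds in both cases, which lets the factor-by-factor bound in the second step proceed uniformly and avoids any case analysis on which expression defines $m$.
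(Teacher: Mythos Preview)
Your argument is internally correct, and under a literal reading of the claim---where the $\Omega(1)$ may depend on $k$ and $d$ since those are fixed before $n\to\infty$---it does establish the statement, and far more simply than the paper does. But this reading is too weak for the application. The claim feeds into the proof of \rf(thm:randomizedLowerBound), where one reads: ``the success probability of a randomized algorithm with error probability $\le 1/3$ can be amplified to be larger than that same $1-\Omega(1)$ in a constant number of repetitions, [so] the theorem follows.'' With your overlap bound $(1/2)^{k+d}$, that ``constant number of repetitions'' becomes $\Theta(k+d)$, and the resulting query lower bound gets divided by $k+d$; for $d=1$ this collapses the claimed $\widetilde\Omega(k)$ lower bound to $\widetilde\Omega(1)$. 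So the $\Omega(1)$ here must be an \emph{absolute} constant, independent of $k$ and $d$, and the paper's proof is written to obtain exactly that.

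The paper first passes from the hypergeometric distributions to the binomials $\cB(k,p)$ and $\cB(k+d,p)$ with $p=\min\{1/4,(k+d)/d^2\}$. For $d>k$ it uses essentially your overlap-at-zero idea, but on the binomial side: $\Pr[\cB(d,p)=0]\ge(1-2/d)^d=\Omega(1)$, a constant independent of $k$ and $d$. For $d\le k$, however, the mass at zero is genuinely exponentially small in $k$ (e.g., for $d=1$ one has $p=1/4$ and $\Pr[\cB(k,1/4)=0]=(3/4)^k$), so no single-point overlap argument can succeed; instead the paper bounds the Kolmogorov distance between the two binomials via a case analysis using the median of the binomial distribution and Slud's inequality. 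Your approach therefore cannot be patched to yield the required uniform constant when $d$ is small relative to $k$, and a qualitatively different argument is needed in that regime.
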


\pfstart
\mycommand{distA}{\cB(k,p)}
\mycommand{distB}{\cB(k+d,p)}
Let $p=\min\sfig{1/4, (k+d)/d^2}$.
As is well known, $\cH_n(k, \floor[pn])$ converges to the binomial distribution $\distA$ as $n\to\infty$.
Thus, it suffices to estimate the total variation distance between $\distA$ and $\distB$.

If $d>k$, then 
\(
\Pr\skA[\cB(d,p)=0] \ge \s[1-2/d]^d \ge \ee^{-2}/2
\)
for large enough $d$.  Conditioned on the event that the last $d$ out of $k+d$ coin flips in $\distB$ are~0 the two distributions $\distA$ and $\distB$ become the same, hence the total variation distance between the (unconditioned) distributions $\distA$ and $\distB$ is at most $1-\Omega(1)$.
Thus we may further assume that $d\le k$.

Next,
\[
\frac{\Pr\skA[\distA=\ell]}{\Pr\skA[\distB=\ell]} = (1-p)^{-d} \frac{k(k-1)\cdots(k-\ell+1)}{(k+d)(k+d-1)\cdots(k+d-\ell+1)}
\]
is a non-increasing function in $\ell$, hence, the total variation distance between $\distA$ and $\distB$ is equal to the Kolmogorov distance between the two:
\begin{equation}
\label{eqn:kolmogorov}
\max_\ell \absB| \Pr\skA[\distB\ge\ell] - \Pr\skA[\distA\ge\ell] |\;.
\end{equation}

Let $\ell$ be the value for which the maximum in~\rf(eqn:kolmogorov) in attained. There are two cases.  \bnote{05.07: simplified} First, if $\ell>\ceil[(k+d)p]$, then 
\[
\Pr\skA[\distA\ge\ell]\le\Pr\skA[\distB\ge\ell]\le \frac12
\]
using the expression for the median of the binomial distribution~\cite{kaas:MedianBinomial}, hence~\rf(eqn:kolmogorov) is at most~$1/2$.

The second case is where $\ell\le \ceil[(k+d)p]$.
Defining $\ell' = \min \sfigA{k, kp + 3\sqrt{kp(1-p)} }$
and noting that $\ell \le \ell'$, we have that 
\bnote{03.07:  I guess that $\Omega(1)$ is $\int_3^{+\infty} \cN(p)\dd p$, but I am not sure.  I have left Slud's paper on my other computer.  Should we add this expression explicitly?}
\[
\Pr\skA[\distB\ge\ell] \ge
\Pr\skA[\distA\ge\ell] \ge
\Pr\skA[\distA\ge\ell'] = 
%\Pr\skB[\distA \ge \min \sfigA{k, kp + 2\sqrt{kp(1-p)} } ] = 
\Omega(1) \; ,
\]
where the last equality follows from Slud's inequality~\cite{slud:inequality}
(which applies here due to our choice of $\ell'$ and since $p \le 1/4$).
Again, it follows that~\rf(eqn:kolmogorov) is at most $1-\Omega(1)$.
\pfend

%%%%%%%%%%%%%%%%%%%%%%%%%%%%%%%%%%%%%%%%%%%%%%%%%%%%%%%%%%%%%%%%%%%%%%%%%%%%
\subsection{Quantum complexity}
\label{sec:ggtquery}
%%%%%%%%%%%%%%%%%%%%%%%%%%%%%%%%%%%%%%%%%%%%%%%%%%%%%%%%%%%%%%%%%%%%%%%%%%%%

The aim of this section is to show that the quantum query complexity of the \eggt \bnote{03.07: was \ggt}problem is $\Theta\sA[1 + \sqrt{k/d}]$.
Thus, when $\sqrt{k}\le d\le k$, the quantum algorithm provides a quartic improvement over the randomized one. We start with a lower bound for \eggt, which implies the same lower bound for \ggt.

\begin{prp}
The quantum query complexity of the \eggt problem with parameters $k$ and $d$ is $\Omega(1+\sqrt{k/d})$.
\end{prp}

\pfstart
Take $n=k+d$.  In this case $\cY$ contains only one element, namely $[n]$, and the corresponding function takes value~1 on every $S$ except $S=\emptyset$. Intuitively, one detects that $A\in\cX$ by finding an $i\not\in A$, so the problem becomes the unstructured search problem of size~$n$, where the $d$ elements of $[n]\setminus A$ are marked.  Unstructured search requires $\Omega\sA[\sqrt{n/d}]$ queries~\cite{boyer:groverTight}.

This intuitive argument can be made rigorous via the unweighted adversary lower bound (\rf(thm:unweightedadv)). We put the one input in $\cY$ in relation with all $\binom{n}{k}$ inputs in $\cX$, so $m=1$ and $m'=\binom{n}{k}$. For each fixed nonempty $S\subseteq[n]$, there are $\binom{n-|S|}{k}$ different $A\in \binom{[n]}{k}$ such that $A\cap S=\emptyset$ (these are the $A\in\cX$ where a query to $S$ returns value~0, showing that $A\neq[n]$). This number is maximized for $|S|=1$, so $\ell'\leq \binom{n-1}{k}$ (and $\ell=1$). The lower bound from \rf(thm:unweightedadv) is 
\[
\Omega\left(\sqrt{\frac{mm'}{\ell\ell'}}\right)=\Omega\left(\sqrt{\bigfrac\binom{n}{k}/\binom{n-1}{k}}\right)=\Omega\sA[\sqrt{n/d}]=\Omega(1+\sqrt{k/d}).\qedhere
\]
\pfend

The remaining part of this section is devoted to showing a matching upper bound on the quantum query complexity. In Section~\ref{sec:qggtalgorithm} we will show how to implement this algorithm time-efficiently.
\begin{thm}
\label{thm:combGroupTestbasic}
There exists a quantum algorithm that solves the \eggt problem with parameters $k$ and $d$ 
using $O \sA[\sqrt{1 + k/d}]$ queries.
\end{thm}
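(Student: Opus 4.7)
By \rf(thm:adv), it suffices to exhibit a feasible solution to the dual adversary SDP~\eqref{eqn:advOrig} for the partial Boolean function underlying EGGT whose objective value is $O(\sqrt{1+k/d})$. The input alphabet has size $2^n$ (one bit per subset $S\subseteq[n]$), so the SDP contains a $\cD\times\cD$ PSD matrix $X_S$ for each $S\subseteq[n]$, where $\cD=\cX\cup\cY$. I would take each $X_S=\psi_S\psi_S^*$ to be rank-one and construct the vectors $\psi_S\in\bR^\cD$ explicitly, adapting the construction of Belovs~\cite{belovs:learningSymmetricJuntas} (who handled the special case $d=1$).

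\noindent The construction exploits the natural action of $\bS_n$ on $[n]$, which permutes both the oracle's index set of subsets $S$ and the input domain $\cD$, leaving the EGGT function invariant. Averaging over this action lets us restrict attention to $\bS_n$-equivariant solutions, for which the value of $\psi_S[A]$ depends only on $|A|$, $|S|$ and $|A\cap S|$. In order that the rank-one product $\psi_S[A]\psi_S[B]$ contribute only for those $S$ with $\Inter_A(S)\ne\Inter_B(S)$, I would choose the support of $\psi_S$ so that $\psi_S[A]\ne 0$ forces $A\cap S=\emptyset$ (hence $\Inter_A(S)=0$), and symmetrize by adding a mirror family in which the roles of $\cX$ and $\cY$ are swapped.

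\noindent With this symmetric ansatz, the constraint~\eqref{eqn:advOrigCondition} for all pairs $(A,B)\in\cX\times\cY$ collapses to a finite linear system indexed by $c:=|A\cap B|\in\{\max(0,2k+d-n),\ldots,k\}$. A Vandermonde-type reduction turns each equation into the requirement that a sum of the form $\sum_s w(s)\sk[\binom{n-k}{s}-\binom{n-2k-d+c}{s}]$ equal $1$ for every admissible~$c$; equivalently, a certain polynomial in $c$ obtained from these binomial differences must be constant. Imposing this fixes the coefficients $w(s)$ (and the underlying coefficients of $\psi_S$) up to a single scaling parameter. The objective $\max_{z\in\cD}\sum_S\psi_S[z]^2$ can then be evaluated orbit-by-orbit under the stabilizer of $z$, and after optimizing the scaling it should come out to $O(\sqrt{1+k/d})$, with the factor $1/d$ emerging from the $d$-shift in the binomial argument.

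\noindent The main technical obstacle is simultaneously achieving (i) exact feasibility of the constraint system and (ii) the claimed objective value. These are coupled: the system is overdetermined if one uses too restrictive a support for $\psi_S$, but making the support too rich inflates the objective. Resolving this tension essentially requires choosing the free coefficients to be a discrete analogue of orthogonal polynomials (Hahn or Krawtchouk polynomials with respect to the hypergeometric measure), so that the equations decouple across degrees. This is precisely the sort of representation-theoretic calculation carried out in~\cite{belovs:learningSymmetricJuntas} for $d=1$, and I would generalize that argument to arbitrary~$d$, with the expected additional $d$-dependent savings flowing from the enlarged gap in the constraint.
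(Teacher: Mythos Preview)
Your high-level plan is right and matches the paper: solve the dual adversary SDP~\eqref{eqn:advOrig} with rank-one matrices $X_S=\psi_S\psi_S^*$, exploit the $\bS_n$-symmetry so that $\psi_S[A]$ depends only on $(|A|,|S|,|A\cap S|)$, and optimize a scaling to get objective $O(\sqrt{1+k/d})$. But the concrete ansatz you describe has a gap.

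You propose that ``$\psi_S[A]\ne 0$ forces $A\cap S=\emptyset$'' and then symmetrize by a mirror family with $\cX$ and $\cY$ swapped. As stated, this makes the product $\psi_S[A]\psi_S[B]$ nonzero only when $S$ avoids \emph{both} $A$ and $B$, i.e., precisely where $\Inter_A(S)=\Inter_B(S)=0$, so it contributes nothing to the constraint~\eqref{eqn:advOrigCondition}. Your later counting expression $\binom{n-k}{s}-\binom{n-2k-d+c}{s}$ implicitly assumes instead that $\psi_S[B]\ne 0$ whenever $B\cap S\ne\emptyset$; but with that support, the objective on the $\cY$ side becomes $\sum_s g(s)^2\bigl[\binom{n}{s}-\binom{n-k-d}{s}\bigr]$, which is far too large to balance against the $\cX$ side and get $O(\sqrt{k/d})$.

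The paper resolves this with an \emph{asymmetric} support that you are missing: $\psi_S[A]=\alpha_s$ if $A\in\cX$ and $A\cap S=\emptyset$, while $\psi_S[B]=\beta_s$ if $B\in\cY$ and $|B\cap S|=1$ (not merely $B\cap S\ne\emptyset$). This single-intersection condition is the crux. It keeps the number of contributing $S$ for each $B$ at $(k+d)\binom{n-k-d}{s-1}$, which is what makes the objective balance. It also makes the feasibility check dramatically simpler than you anticipate: with $\alpha_s\beta_s=\bigl((n-k)\binom{n-k-1}{s-1}\bigr)^{-1}$, the constraint sum for any pair $(A,B)$ telescopes via the elementary identity $T(a,b)=1+\tfrac{a}{b}T(a-1,b-1)=\tfrac{b+1}{b-a+1}$ and equals~$1$ for \emph{every} value of $|B\setminus A|$ simultaneously. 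No Vandermonde system, no Hahn or Krawtchouk polynomials are needed; the single-intersection support condition already decouples the equations. The $1/d$ saving then falls out of a beta-function estimate when $\alpha_s,\beta_s$ are chosen to equalize~\eqref{eqn:XSAA} and~\eqref{eqn:XSBB}.
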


\bnote{03.07: Removed, we don't: We will call this algorithm the Quantum Gap Group Testing (\qggt) algorithm.}

\pfstart
We prove the theorem by constructing a feasible solution to the semidefinite program~\rf(eqn:advOrig)
and then applying \rf(thm:adv).
For the \eggt problem, the adversary bound reads as follows:
\begin{subequations}
\label{eqn:advNew}
\begin{alignat}{3}
&\mbox{\rm minimize} &\quad& \max_{A\in \cX\cup\cY}\sum\nolimits_{S\subseteq[n]} X_S\elem[A,A]  \label{eqn:advNewObjective} \\
& \mbox{\rm subject to}&& \sum_{S\colon A\cap S=\emptyset \;\mathrm{xor}\; B\cap S=\emptyset} X_S\elem[A, B] = 1 &\quad& \text{\rm for all $A\in\cX$ and $B\in\cY$;} \label{eqn:advNewCondition} \\
&&& X_S\succeq 0 && \mbox{\rm for all $S\subseteq[n]$.} \label{eqn:advNewSemidefinite}
\end{alignat}
\end{subequations}

Our feasible solution to~\rf(eqn:advNew) is an adaptation of the solution given in~\cite{belovs:learningSymmetricJuntas} for the task of \emph{finding} the subset~$A$.
It is possible to give a solution to~\rf(eqn:advNew) in the style of~\cite{belovs:learningSymmetricJuntas}. 
%\onote{12/21: for my benefit, what is the ``style of [7]''? also, are the matrices there not of rank 1?}
However, below we give a more direct construction resulting in matrices $X_S$ of rank~1.
%\bnote{We had it already, I can send you the old version containing this solution.  The rank was not estimated there, but I think it was 2.}

%We will assume that $n$ is large enough compared to $k$ and $d$.
Clearly, we may assume that $n\ge k+d$.
Let $S\subseteq[n]$, and $s=|S|$.  If $s=0$ or $s>n-k-d+1$, we define $X_S=0$.
If $1\le s\le n-k-d+1$, we define $X_S = \psi\psi^*$, where $\psi$ is a vector indexed by sets $A\in\cX\cup\cY$ with entries
\begin{equation}
\label{eqn:psiA}
\psi\elem[A] = 
\begin{cases}
\alpha_s,& \text{if $A\in\cX$ and $A\cap S = \emptyset$;}\\
\beta_s,& \text{if $A\in\cY$ and $|A\cap S| = 1$;}\\
0,&\text{otherwise.}
\end{cases}
\end{equation}
Here $\alpha_s$ and $\beta_s$ are some positive real numbers satisfying
\begin{equation}
\label{eqn:XSElem}
 \alpha_s \beta_s =  \s[(n-k){\binom{n-k-1}{s-1}}]^{-1}.
\end{equation}
The values of $\alpha_s$ and $\beta_s$ depend on $d$ and will be chosen later in order to minimize the value of the objective function~\rf(eqn:advNewObjective).  Thus, ignoring repeated and zero entries, $X_S$ is essentially a $2\times2$ block matrix of the form
\[
X_S =
\begin{pmatrix}
\alpha_s^2 & \alpha_s \beta_s \\
\alpha_s\beta_s & \beta_s^2
\end{pmatrix}.
\]
The proof now follows from the two claims below.
\pfend

\begin{clm}
The matrices $X_S$ form a feasible solution to~\rf(eqn:advNewCondition) for the \eggt problem and any value of $d$.
\end{clm}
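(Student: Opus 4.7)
The semidefiniteness condition~\rf(eqn:advNewSemidefinite) is automatic since each $X_S = \psi\psi^*$ has rank at most one, so the entire task is to verify~\rf(eqn:advNewCondition) for arbitrary $A \in \cX$ and $B \in \cY$. The first step is to identify which $S$ contribute. From~\rf(eqn:psiA), having $\psi\elem[A] \ne 0$ forces $A \cap S = \emptyset$ (since $A \in \cX$), while $\psi\elem[B] \ne 0$ forces $|B \cap S| = 1$ (since $B \in \cY$). Thus the only $S$ with $X_S\elem[A,B] \ne 0$ are those satisfying both conditions, and these automatically meet the ``xor'' requirement in~\rf(eqn:advNewCondition). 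Note also that such $S$ must obey $s - 1 \le n - k - d$, so the choice $X_S = 0$ for $s > n-k-d+1$ costs nothing.

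Next I would group these $S$ by their size $s$. Setting $a = |A \cap B|$, the unique element of $B \cap S$ must lie in $B \setminus A$ ($k+d-a$ choices), while the remaining $s-1$ elements of $S$ must lie in $[n] \setminus (A \cup B)$, a set of size $n - 2k - d + a$. This gives a count of $(k+d-a)\binom{n-2k-d+a}{s-1}$, and substituting the product rule~\rf(eqn:XSElem) for $\alpha_s\beta_s$ reduces the constraint to showing
\[
\frac{k+d-a}{n-k} \sum_{t \ge 0} \frac{\binom{n-2k-d+a}{t}}{\binom{n-k-1}{t}} = 1,
\]
where I have set $t = s-1$.

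The final step is a binomial calculation. Using the elementary rewriting $\binom{j}{t}/\binom{m}{t} = \binom{m-t}{m-j}/\binom{m}{m-j}$ with $m = n-k-1$ and $j = n-2k-d+a$ (so $m-j = k+d-a-1$), followed by the hockey-stick identity $\sum_{u \ge 0} \binom{u}{r} = \binom{m+1}{r+1}$ applied to $u = (n-k-1)-t$, the inner sum evaluates to $(n-k)/(k+d-a)$, which cancels the prefactor exactly. Crucially, the final value is $1$ independently of the overlap $a$, so the constraint holds uniformly for every pair $(A,B) \in \cX \times \cY$. I do not foresee any real obstacle: the only mildly technical piece is the binomial identity, but it is routine once the combinatorial count has been set up correctly. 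The conceptual point is that the ansatz~\rf(eqn:XSElem) is designed precisely so that this $a$-dependence cancels.
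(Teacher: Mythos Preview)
Your proof is correct and follows essentially the same approach as the paper: both reduce the constraint to the same binomial identity (your parameter $a=|A\cap B|$ is just $k+d-\ell$ in the paper's notation $\ell=|B\setminus A|$). The only difference is cosmetic: the paper verifies the identity $\sum_{t}\binom{j}{t}/\binom{m}{t}=(m+1)/(m-j+1)$ by a one-line induction on~$j$, whereas you rewrite the ratio of binomials and apply the hockey-stick identity---both are routine and equally valid.
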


\pfstart
Fix $A\in\cX$, $B\in\cY$, and let $\ell=|B\setminus A|\ge d$.  Note that $X_S\elem[A,B] = 0$ if the condition on $S$ in the sum in~\rf(eqn:advNewCondition) is not satisfied.
Next,
\[
\sum_{S\subseteq[n]} X_S\elem[A,B] =
\frac{1}{n-k} \sum_{s=1}^{n-k-\ell+1} \frac{\ell\binom{n-k-\ell}{s-1}}{\binom{n-k-1}{s-1}} = \frac{\ell}{n-k}\; T(n-k-\ell, n-k-1)\;,
\]
where, for non-negative integers $a\le b$, we define
\[
T(a,b) = 1 + \frac ab + \frac{a(a-1)}{b(b-1)} +\cdots + \frac{a(a-1)(a-2)\cdots 1}{b(b-1)(b-2)\cdots (b-a+1)}\;.
\]
Thus, to show that $\sum_{S\subseteq[n]} X_S\elem[A,B]=1$ it remains to show that $T(a,b) = (b+1)/(b-a+1)$.  This is easy to check by induction on $a$: the case $a=0$ is trivial, and for the inductive step we have
\[
T(a,b) = 1 + \frac ab T(a-1,b-1) = 1+\frac ab\s[\frac{b}{b-a+1}] = \frac{b+1}{b-a+1}\;.\qedhere
\]
\pfend

%Thus, the off-diagonal elements $X_S\elem[A,B]$ do not depend on $d$.
%What depends on $d$ are the diagonal terms $X_S\elem[A,A]$ and $X_S\elem[B,B]$ that influence the objective value~\rf(eqn:advDualObjective), and we now show how to chose them.

\begin{clm}
\label{clm:complexity}
For each $d$, there exists a choice of $\alpha_s$ and $\beta_s$ satisfying~\rf(eqn:XSElem) such that the objective value~\rf(eqn:advNewObjective) is $O(\sqrt{1+k/d})$.
\end{clm}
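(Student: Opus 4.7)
The plan is to first evaluate the two contributions $\sum_S X_S\elem[A,A]$ separately for $A\in\cX$ and $A\in\cY$, balance them, and then bound the resulting common value by $O(\sqrt{1+k/d})$ via a careful estimate of a binomial sum.

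\smallskip

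\textbf{Step 1: Compute the diagonal sums.} Since $X_S = \psi\psi^*$ is rank one, $X_S\elem[A,A] = \psi\elem[A]^2$. For $A\in\cX$, only sets $S$ disjoint from $A$ contribute, so the number of sets $S$ of size $s$ with $A\cap S=\emptyset$ is $\binom{n-k}{s}$. For $A\in\cY$, only sets $S$ with $|A\cap S|=1$ contribute, and there are $(k+d)\binom{n-k-d}{s-1}$ of these. Hence, setting $N=n-k$, the two contributions are, respectively,
\[
P := \sum_{s=1}^{N-d+1} \alpha_s^2 \binom{N}{s}, \qquad Q := \sum_{s=1}^{N-d+1} \beta_s^2 (k+d)\binom{N-d}{s-1},
\]
and both are independent of the choice of $A$ within its class.

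\smallskip

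\textbf{Step 2: Balance termwise.} The constraint~\rf(eqn:XSElem) ties $\alpha_s\beta_s$. Subject to this, setting $\alpha_s^2\binom{N}{s} = \beta_s^2(k+d)\binom{N-d}{s-1}$ makes each $s$-term of $P$ equal to each $s$-term of $Q$, so by AM-GM they are simultaneously minimized, and the common value of $P=Q$ equals
\[
\sum_{s=1}^{N-d+1} \sqrt{\alpha_s^2\binom{N}{s} \cdot \beta_s^2 (k+d)\binom{N-d}{s-1}}
= \sum_{s=1}^{N-d+1} \frac{\sqrt{(k+d)\binom{N}{s}\binom{N-d}{s-1}}}{N\binom{N-1}{s-1}}.
\]

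\smallskip

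\textbf{Step 3: Simplify the sum.} Using $\binom{N}{s} = \frac{N}{s}\binom{N-1}{s-1}$ and the standard identity
\[
\frac{\binom{N-d}{s-1}}{\binom{N-1}{s-1}} = \frac{\binom{N-s}{d-1}}{\binom{N-1}{d-1}},
\]
the expression collapses to
\[
\sqrt{\frac{k+d}{N\binom{N-1}{d-1}}} \; \sum_{s=1}^{N-d+1} \frac{1}{\sqrt{s}} \sqrt{\binom{N-s}{d-1}}.
\]

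\smallskip

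\textbf{Step 4: Bound the binomial sum.} This is the main technical step. Writing $r_s := \binom{N-s}{d-1}/\binom{N-1}{d-1} = \prod_{j=0}^{d-2}\bigl(1-\tfrac{s-1}{N-1-j}\bigr)$, we split the sum at $s_0 := \lceil N/d\rceil$. For $s\le s_0$, all factors in $r_s$ are $\Theta(1)$, so $r_s=O(1)$ and $\sum_{s\le s_0} \frac{\sqrt{r_s}}{\sqrt{s}} = O(\sqrt{s_0}) = O(\sqrt{N/d})$. For $s > s_0$, at least a constant fraction of the factors are $\le 1/2$, so $r_s\le 2^{-\Omega(d)}$ decays geometrically, and this tail sum is $O(1)$. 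Hence
\[
\sum_{s=1}^{N-d+1} \frac{\sqrt{r_s}}{\sqrt{s}} = O\!\bigl(\sqrt{1+N/d}\bigr),
\]
so the common value of $P$ and $Q$ is at most
\[
\sqrt{\frac{k+d}{N}} \cdot O\!\bigl(\sqrt{1+N/d}\bigr) = O\!\bigl(\sqrt{1+k/d}\bigr),
\]
as required. The sole obstacle I anticipate is carefully handling the tail-regime bound when $d$ is small (say $d=1$), where the ``decay'' argument degenerates; there the sum is simply $\sum_s 1/\sqrt{s} = O(\sqrt{N})$ and the bound $\sqrt{(k+d)/N}\cdot O(\sqrt{N}) = O(\sqrt{k+1})$ still matches $O(\sqrt{1+k/d}) = O(\sqrt{k})$.
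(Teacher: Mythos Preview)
Your Steps~1--3 match the paper's approach exactly (the paper writes the ratio as $\bigl(1-\frac{s-1}{n-k-1}\bigr)^{d-1}$ rather than via your identity, but these are equivalent up to the direction of the inequality). The issue is in Step~4.

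Your tail claim is false as stated. Take $s=2s_0\approx 2N/d$ with $d$ large and $N\gg d$. Then every factor in $r_s=\prod_{j=0}^{d-2}\bigl(1-\frac{s-1}{N-1-j}\bigr)$ is roughly $1-\frac{2}{d}$, so \emph{none} of them is $\le\tfrac12$, and $r_s\approx(1-2/d)^{d-1}\approx e^{-2}$, not $2^{-\Omega(d)}$. The decay of $r_s$ is in $s/s_0$, not in $d$: one has $r_s\le\bigl(1-\frac{s-1}{N-1}\bigr)^{d-1}\le e^{-(d-1)(s-1)/(N-1)}$, which only becomes $2^{-\Omega(d)}$ once $s$ is a constant fraction of~$N$. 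Consequently the tail $\sum_{s>s_0}\sqrt{r_s}/\sqrt{s}$ is $\Theta(\sqrt{N/d})$, not $O(1)$. Fortunately this still feeds into your final bound $\sqrt{(k+d)/N}\cdot O(\sqrt{1+N/d})=O(\sqrt{1+k/d})$, so the argument is salvageable once you correct the decay estimate.

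The paper handles this step more cleanly: after bounding the $s$th term by $\sqrt{\frac{k+d}{s(n-k)}\bigl(1-\frac{s-1}{n-k-1}\bigr)^{d-1}}$, it separates the $s=1$ term (which already gives $\sqrt{(k+d)/d}$), substitutes $p=(s-1)/(n-k-1)$, and bounds the remaining Riemann sum by the integral $\sqrt{k+d}\int_0^1 p^{-1/2}(1-p)^{(d-1)/2}\,dp=\sqrt{k+d}\cdot\mathrm{B}\bigl(\tfrac12,\tfrac{d+1}{2}\bigr)=O(\sqrt{(k+d)/d})$, using standard Beta-function asymptotics. This avoids any case split on~$s$.
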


\pfstart
Fix a positive integer $s\le n-k-d+1$.  For all $A\in\cX$ and $B\in\cY$, we have
\begin{equation}
\label{eqn:XSAA}
\sum_{S\subseteq [n]\colon |S|=s} X_S\elem[A,A] = \binom{n-k}{s} \alpha_s^2
\end{equation}
and
\begin{equation}
\label{eqn:XSBB}
\sum_{S\subseteq [n]\colon |S|=s} X_S\elem[B,B] = (k+d)\binom{n-k-d}{s-1} \beta_s^2.
\end{equation}

We take $\alpha_s$ and $\beta_s$ so that the values of~\rf(eqn:XSAA) and~\rf(eqn:XSBB) are equal.  In particular, they are equal to their geometric mean, which, by~\rf(eqn:XSElem), is
\begin{equation}
\label{eqn:XSMean}
\frac{\sqrt{\binom{n-k}{s}(k+d)\binom{n-k-d}{s-1}}}{(n-k)\binom{n-k-1}{s-1}}
\le \sqrt{\frac{k+d}{s(n-k)} {\s[1-\frac {s-1}{n-k-1}]}^{d-1}} ,
\end{equation}
where we used that $\binom{n-k}{s}=\frac{n-k}{s}\binom{n-k-1}{s-1}$ and 
\[
\frac{(k+d)\binom{n-k-d}{s-1}}{(n-k)\binom{n-k-1}{s-1}}
\le \frac{k+d}{n-k} \s[\frac{n-k-s}{n-k-1}]^{d-1}
= \frac{k+d}{n-k} {\s[1-\frac {s-1}{n-k-1}]}^{d-1} .
\]
Let us denote $m=n-k-1$.  Using~\rf(eqn:XSMean), we get that for all $A\in\cX\cup\cY$:
\begin{align*}
\sum_{S\subseteq[n]} X_S\elem[A,A] & = 
\sum_{S\subseteq [n]\colon |S|=1}X_S\elem[A,A] + \sum_{s=2}^{n-k-d+1}\sum_{S\subseteq [n]\colon |S|=s}X_S\elem[A,A] \\
&\le \sqrt{\frac{k+d}d} + \frac1m\sum_{s=2}^{m+1} \sqrt{\frac{k+d}{(s-1)/m} {\s[1-\frac {s-1}m]}^{d-1}} \\
&\le \sqrt{\frac{k+d}d} + \int_0^1 \sqrt{\frac{k+d}{p} (1-p)^{d-1}}\;\dd p \\
&= \sqrt{\frac{k+d}d} + \sqrt{k+d}\; \mathrm B\sA[1/2, (d+1)/2] = O\s[\sqrt{\frac{k+d}d}],
\end{align*}
where B stands for the beta function.  Here, we substituted $p=(s-1)/m$, used monotonicity of the function $p^{-1}(1-p)^{d-1}$, and applied a well-known asymptotic for the beta function.
\pfend

%\subsection{Extending the algorithm to \ggt: The role of irrelevant variables}\rnote{15/5: added this subsection heading in order to separate these technical observations that are only needed later from the main proof of the section}
%
%We will now elaborate a bit more on the role of irrelevant variables in the above algorithm. 

From~\rf(eqn:psiA) we have the following important observation.

\begin{obs}
\label{obs:qggtIrrelevant}
The feasible solution to the adversary bound~\rf(eqn:advNew) constructed in the proof of \rf(thm:combGroupTestbasic) has the following irrelevant variables in the sense of \rf(prp:irrelevant):
\itemstart
\item If the input $A$ is in $\cX$ (i.e., $|A|=k$), an input variable $S\subseteq[n]$ is irrelevant if $S\cap A\ne\emptyset$.
\item If the input $A$ is in $\cY$ (i.e., $|A|=k+d$), an input variable $S\subseteq[n]$ is irrelevant if $|S\cap A|\ne1$.
\itemend
\end{obs}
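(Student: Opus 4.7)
The plan is to read off the irrelevant variables directly from the construction of the matrices $X_S$ used in the proof of \rf(thm:combGroupTestbasic). By \rf(prp:irrelevant), an input variable $S$ is irrelevant for input $A$ precisely when $X_S\elem[A,A]=0$; since each $X_S$ is built as a rank-one matrix $\psi\psi^*$, the diagonal entry $X_S\elem[A,A]$ equals $\psi\elem[A]^2$ and therefore vanishes if and only if $\psi\elem[A]=0$.

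The verification then reduces to a direct inspection of \rf(eqn:psiA) in the regime $1\le |S|\le n-k-d+1$. If $A\in\cX$, the entry $\psi\elem[A]$ equals the nonzero value $\alpha_s$ only when $A\cap S=\emptyset$ and is zero otherwise; hence whenever $A\cap S\ne\emptyset$ we immediately conclude $X_S\elem[A,A]=0$, so $S$ is irrelevant for $A$. If $A\in\cY$, the entry $\psi\elem[A]$ equals $\beta_s$ only when $|A\cap S|=1$, so the condition $|A\cap S|\ne 1$ likewise forces $X_S\elem[A,A]=0$ and makes $S$ irrelevant.

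The remaining ``boundary'' indices, where $X_S$ was set to zero by convention (namely $|S|=0$ or $|S|>n-k-d+1$), make every input variable trivially irrelevant, and a brief sanity check shows that the sufficient conditions stated in the observation are compatible with these cases: for example, $|S|>n-k$ forces $A\cap S\ne\emptyset$ for every $A\in\cX$, and $|S|>n-k-d+1$ forces $|A\cap S|\ge 2$ for every $A\in\cY$. I do not anticipate any genuine obstacle in this argument---it is essentially a one-line inspection of the construction---and I include it explicitly only because these structural features of $\psi$ are precisely what the later time-efficient implementation and the junta tester will exploit when composing the \ggt algorithm with subroutines whose output is allowed to be undefined on certain sets $S$.
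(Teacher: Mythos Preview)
Your proposal is correct and matches the paper's own justification, which consists of the single sentence ``From~\rf(eqn:psiA) we have the following important observation'' preceding the statement; both amount to reading off where the vector $\psi$ vanishes so that $X_S\elem[A,A]=\psi\elem[A]^2=0$. Your discussion of the boundary cases $|S|=0$ and $|S|>n-k-d+1$ is a bit more than is needed (the observation only asserts a sufficient condition, so once $X_S=0$ there is nothing further to check), but it is harmless.
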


This observation will be used in \rf(sec:qalgs), when we compose the algorithm of this section with an influence tester to get a query-efficient junta tester.
%\rnote{30/6: say something about GGT here? and about what happens if $|A|<k$ or $>k+d$?}
\bnote{05.07: added:}
In \rf(sec:qggtalgorithm), we will show that the same algorithm also solves the \ggt problem.

\section{Quantum algorithm for junta testing}
\label{sec:qalgs}
\mycommand{klv}{\floor[\log(200k)]}
\mycommand{SubInf}{\mathrm{\underline{Inf}}}

The aim of this section is to prove the following theorem:
\begin{thm}
\label{thm:juntaMain}
There exists a bounded-error quantum tester that distinguishes $k$-juntas from functions that are $\eps$-far from any $k$-junta, with query complexity 
\[
O\left(\sqrt{k/\eps}\log k\right).
\] 
%Also, there exists a quantum tester that performs the above task in $\tO\sA[n\sqrt{k/\eps}]$ time and $\tO\sA[\sqrt{k/\eps}]$ queries, where $n$ is the number of variables in the tested function.
\end{thm}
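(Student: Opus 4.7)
The plan is to combine a bucketing of variables by their Fourier influences with the group-testing algorithm of \rf(thm:combGroupTestbasic) and a Fourier-sampling-based influence tester.

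\emph{Setup and case split.} If $f$ is $\eps$-far from every $k$-junta, then \rf(lem:weightonextravars), applied to the set~$W$ of $k$ variables of largest individual influence, gives $\Inf_{[n]\setminus W}(f)\ge\eps$ and hence $\sum_{i\notin W}\Inf_i(f)\ge\eps$ by subadditivity. Partition $[n]\setminus W$ into high buckets $B_j=\{i\notin W\mid 2^j/k\le\Inf_i(f)<2^{j+1}/k\}$ for $j=0,\ldots,\floor[\log k]$ and a low bucket $B_\ell=\{i\notin W\mid \Inf_i(f)<1/k\}$. Either (\textbf{H}) $\sum_j\sum_{i\in B_j}\Inf_i(f)\ge\eps/2$, so that by pigeonhole over $O(\log k)$ buckets some $j$ satisfies $|B_j|\ge d_j:=\Omega(\eps k/(2^j\log k))$; or (\textbf{L}) $\Inf_{B_\ell}(f)\ge\eps/2$. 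A quantum subroutine handles each case.

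\emph{Case~H via gapped group testing.} Set $\tau_j=2^j/k$ and $I_{\tau_j}=\{i\mid\Inf_i(f)\ge\tau_j\}$. Since $W$ is the top-$k$ most influential set and every element of $B_j$ has influence $\ge\tau_j$, we have $W\subseteq I_{\tau_j}$, so in case~H $|I_{\tau_j}|\ge k+d_j$, whereas any $k$-junta has $|I_{\tau_j}|\le k$. Apply \rf(thm:combGroupTestbasic) with parameters $k,d_j$ to the function $\Inter_{I_{\tau_j}}$; this costs $O(\sqrt{k/d_j})=O(\sqrt{2^j\log k/\eps})$ intersection-oracle calls. To answer one oracle call on a query set~$S$, Fourier-sample~$f$: a single sample returns $T$ with probability $\hat f(T)^2$, so $\Pr[T\cap S\ne\emptyset]=\Inf_S(f)$. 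Amplitude amplification (\rf(lem:amplampl)) then decides whether $\Inf_S(f)\ge\tau_j$ using $O(1/\sqrt{\tau_j})=O(\sqrt{k/2^j})$ queries to~$f$. The crucial point is \rf(obs:qggtIrrelevant): sets~$S$ on which the Fourier-sampling tester gives an ambiguous answer---those intersecting $I_{\tau_j}$ only at variables whose influence lies just above the threshold---are precisely the irrelevant variables of the adversary solution underlying the proof of \rf(thm:combGroupTestbasic). This allows us to invoke the tight composition \rf(cor:irrCompose) and avoid the standard $O(\log n)$ error-reduction overhead, giving a cost of $O(\sqrt{k/d_j}\cdot\sqrt{k/2^j})=O(\sqrt{k\log k/\eps})$ queries per~$j$. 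Since the bad $j$ is unknown, we search over the $O(\log k)$ indices with a Grover-type procedure on the bucket subroutines (after boosting each to have error $\ll 1/\log k$), contributing only a $\sqrt{\log k}$ overhead and yielding total Case-H cost $O(\sqrt{k/\eps}\log k)$.

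\emph{Case~L via random subsampling.} Sample $V\subseteq[n]$ by including each variable independently with probability~$1/k$. A $k$-junta with relevant set $J$ of size $\le k$ satisfies $\Pr[V\cap J=\emptyset]\ge(1-1/k)^{k}\ge 1/e$, and conditional on this event $\Inf_V(f)=0$. In case~L, expanding
\[
\Exp_V[\Inf_V(f)]=\sum_T\hat f(T)^2\sA[1-(1-1/k)^{|T|}]
\]
and using $\Inf_{B_\ell}(f)\ge\eps/2$ gives $\Exp_V[\Inf_V(f)]=\Omega(\eps/k)$. Combining independent sampling of~$V$ with amplitude amplification of the Fourier-sampling influence tester at threshold $\Omega(\eps/k)$ distinguishes the two cases using $O(\sqrt{k/\eps})$ queries, dominated by the Case-H cost.

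\emph{Assembly and main obstacle.} The junta tester runs the Case-H and Case-L subroutines in parallel and accepts iff both do. Since at least one of the two cases applies whenever $f$ is $\eps$-far from all $k$-juntas, correctness follows from standard bounded-error amplification, and the total query complexity is $O(\sqrt{k/\eps}\log k)$. The technical heart of the argument is the composition used in Case~H: the Fourier-sampling intersection oracle behaves reliably only on sets~$S$ whose relation to the threshold~$\tau_j$ is unambiguous, and the delicate step is to identify these borderline sets as precisely the irrelevant variables of the adversary solution given in the proof of \rf(thm:combGroupTestbasic) (via \rf(obs:qggtIrrelevant)), so that \rf(cor:irrCompose) applies in place of standard composition and the logarithmic amplification overhead is avoided.
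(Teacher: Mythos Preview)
Your architecture matches the paper's: bucket the variables outside the top-$k$ set by influence, handle the high buckets by composing the \eggt algorithm of \rf(thm:combGroupTestbasic) with an amplitude-amplified influence tester via the irrelevant-variable machinery (\rf(obs:qggtIrrelevant) and \rf(cor:irrCompose)), and handle the low bucket by a random $1/k$-subsample. The per-bucket cost $O(\sqrt{k\log k/\eps})$ in Case~H is correct.

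The genuine gap is in Case~L. You only establish $\Exp_V[\Inf_V(f)]=\Omega(\eps/k)$, but this does not separate the two cases: for a $k$-junta the same expectation can also be $\Theta(1/k)$ (e.g., a dictator has $\Exp_V[\Inf_V(f)]=1/k$), so neither the raw success probability nor the acceptance probability of the influence tester is separated from the junta case by the mean alone. What you actually need is that $\Inf_V(f)\ge\Omega(\eps/k)$ with probability close to~$1$ in Case~L, so that the tester's acceptance probability exceeds the junta upper bound of $1-(1-1/k)^k\le 3/4$. The paper obtains this concentration by introducing an \emph{additive} proxy $\underline{\Inf}_j(f)=\sum_{S:\min(S\cap\{200k+1,\dots\})=j}\hat f(S)^2$, which satisfies $\underline{\Inf}_{S\cup T}=\underline{\Inf}_S+\underline{\Inf}_T$ for disjoint $S,T$ and $\max_j\underline{\Inf}_j\le\eps/(200k)$ under the second-kind hypothesis; additivity plus the max-bound give a variance estimate, and Chebyshev finishes. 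Without some such device your Case-L tester does not work.

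Two minor points. First, your description of the ambiguous sets in Case~H is off: any $S$ intersecting $I_{\tau_j}$ has $\Inf_S(f)\ge\tau_j$ by monotonicity and is therefore \emph{un}ambiguous; the ambiguous sets are those with $0<\Inf_S(f)<\tau_j$, and the point is that these are always irrelevant for the appropriate choice of~$A$ (namely $A=J$ padded to size~$k$ in the junta case, and $A\subseteq I_{\tau_j}$ of size $k+d_j$ in the far case). Second, boosting each bucket subroutine to error $\ll 1/\log k$ before Grover costs an extra $\log\log k$; the paper avoids this by using the robust-conjunction composition (\rf(exm:and) with \rf(cor:irrCompose)) for the outer AND over the $O(\log k)$ subtests as well.
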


%The query-efficient and the time-efficient algorithms in \rf(thm:juntaMain) are essentially the same, but in the query-efficient algorithm we apply \rf(thm:composition) in order to save some logarithmic factors.

Suppose $f\colon \01^n\rightarrow\{\pm 1\}$ depends on a set $J\subseteq[n]$ of $K$ variables. Thus, the promise is that either $f$ is a $k$-junta ($K\leq k$), or $f$ is $\eps$-far from any $k$-junta (we will call such $f$ a ``non-junta'' for simplicity). The goal of the tester is to distinguish these two cases.  
%This is achieved by \rf(alg:juntaTester).  In the remaining part of the section, we describe the algorithm in more detail, and prove its correctness and estimate its query complexity.

At the lowest level of our algorithm, there is the following subroutine.
\begin{lem}[Influence Tester]
\label{lem:infTester}
There exists an algorithm that, 
given a subset $V\subseteq[n]$, accepts with probability at least 0.9 if $\Inf_V(f)\ge\delta$ and rejects 
with certainty if $\Inf_V(f)=0$.  The algorithm uses $O(\sqrt{1/\delta})$ queries and $O(n/\sqrt{\delta})$ other elementary operations.
\end{lem}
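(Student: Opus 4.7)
The plan is to reduce the task to amplitude amplification over a pair-sampling procedure. Recall the alternative characterization of influence stated in Section~\ref{sec:prelim}: $\Inf_V(f)$ equals twice the probability that $f(x)\ne f(y)$, where $x$ is uniform over $\01^n$ and $y$ is obtained from $x$ by replacing each coordinate in $V$ with an independent fair coin flip. Hence if $\Inf_V(f)\ge\delta$, then $\Pr[f(x)\ne f(y)]\ge\delta/2$, while if $\Inf_V(f)=0$, then $f(x)=f(y)$ always.

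I define a quantum procedure $\cal A$ that carries out the following steps, using standard query-model techniques:
(1) prepare the uniform superposition $\frac{1}{\sqrt{2^{n+|V|}}}\sum_{x\in\01^n,\,z\in\01^{V}}\ket|x>\ket|z>$ in two registers;
(2) compute into a third register the string $y$ obtained from $x$ by overwriting the coordinates indexed by $V$ with the bits of $z$;
(3) apply the input oracle once to the $x$-register and once to the $y$-register, recording the two output bits in two ancilla qubits;
(4) XOR the two output bits into a single flag qubit.

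By construction, measuring the flag of ${\cal A}\ket|0>$ yields $1$ with probability exactly $\Inf_V(f)/2$. In the rejection case $\Inf_V(f)=0$ this probability is $0$, whereas in the acceptance case $\Inf_V(f)\ge \delta$ it is at least $\delta/2$. Applying amplitude amplification (\rf(lem:amplampl)) with $S$ taken to be the set of basis states whose flag bit is $1$ produces a procedure $\cal B$ that uses $O(1/\sqrt{\delta})$ invocations of $\cal A$ and ${\cal A}^{-1}$, outputs flag $=1$ with probability at least $9/10$ when $\Inf_V(f)\ge\delta$, and—by the second part of \rf(lem:amplampl)—still outputs flag $=0$ with certainty when $\Inf_V(f)=0$. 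The tester outputs ``accept'' iff the measured flag is $1$.

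For the complexity bounds, each invocation of $\cal A$ uses $2$ oracle calls, so the query complexity is $O(\sqrt{1/\delta})$. The non-query work inside $\cal A$ consists of preparing a uniform superposition on $n+|V|\le 2n$ qubits, a controlled ``overwrite'' of $|V|$ coordinates to form $y$, and a single XOR, all of which cost $O(n)$ elementary gates; hence each round costs $O(n)$ time, giving the claimed $O(n/\sqrt{\delta})$ overall. There is essentially no obstacle here—the only thing to keep track of is that the amplitude amplification we use preserves the zero-probability case, which is exactly what the strengthened statement of \rf(lem:amplampl) gives us and what makes the rejection side ``with certainty.''
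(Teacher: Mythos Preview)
Your proposal is correct and follows essentially the same approach as the paper's proof: sample $(x,y)$ in superposition as in the alternative definition of $\Inf_V$, flag the event $f(x)\ne f(y)$, and amplify via \rf(lem:amplampl). The paper's proof is just a two-line sketch of exactly this; you have simply spelled out the implementation details and the gate count, and correctly invoked the zero-probability clause of \rf(lem:amplampl) for the ``reject with certainty'' part.
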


\pfstart
We pick $x,y$ randomly as described in \rf(sec:prelim) below~\eqref{eqn:InfSdef}, and check if $f(x) \neq f(y)$. By applying amplitude amplification 
for $O(1/\sqrt{\delta})$ rounds to amplify the basis states where $f(x)\neq f(y)$, we obtain the lemma. 
\pfend

The idea is to run the \eggt algorithm of \rf(thm:combGroupTestbasic) with the Influence Tester of \rf(lem:infTester) as the input oracle.  The complication is that we do not know what value of $\delta$ we should specify: the Fourier weight of a non-junta can be either concentrated on few (though more than $k$) variables with large influence, or scattered over many variables with tiny influence, and these cases call for different values of $\delta$.
We identify roughly $\log k$ different types of non-juntas, and design a separate tester for each of them.  A junta will be accepted (with high probability) by all of these testers, whereas a non-junta will be rejected by at least one of them.  The description is given in \rf(alg:juntaTester), followed by the definition of the different types of non-juntas.

\begin{algorithm}[htb] 
\algcaption{Quantum Junta Tester\\ 
Input:& {Integer $k>0$, real $0<\eps<1$, and oracle access to $f\colon \cube\to\sbool$}\\
Output:& Accepts if $f$ is a $k$-junta; rejects if $f$ is $\eps$-far from any $k$-junta.
} \label{alg:juntaTester}
%\onote{12/21:is it really Grover or some kind of amplitude amplification?}\rnote{it's certainly not plain Grover; a modification of HMW can do this -- this should probably be made more precise} \bnote{Let us keep things simple.}
\enumstart
\item Accept if all of the following $\klv+2$ testers accept, reject if at least one of them rejects:
\negmedskip
\itemstart
\item Tester of the first kind with $\ell \in \{0,\dots,\klv\}$;
\item Tester of the second kind.
\itemend
\enumend

\sbrtncaption{Tester of the first kind\\
Input:& Integer $\ell\ge 0$.\\
Output:& Accepts if $f$ is a $k$-junta, rejects if $f$ is a non-junta of the first kind with this value of $\ell$.}
\label{sbrtn:firstkind}
\enumstart
\item Run the \eggt algorithm of \rf(thm:combGroupTestbasic) \bnote{03.07: changed} with parameters $k$ and $d=2^\ell$ and the following oracle:
\negmedskip
\itemstart
\item On input $S\subseteq [n]$, run Influence Tester on $V=S$ and $\delta = \eps/(2^{\ell+3}\log(400k))$.
\itemend
\negmedskip
\item Accept if the \eggt algorithm accepts, otherwise reject.
\enumend

\sbrtncaption{Tester of the second kind\\
Output:& Accepts if $f$ is $k$-junta, rejects if $f$ is a non-junta of the second kind.}
\label{sbrtn:secondkind}
\enumstart
\item Estimate the acceptance probability of the following subroutine with additive error\rnote{30/6: was: `precision'} $.05$:
\negmedskip
\itemstart
\item Generate $V\subseteq[n]$ by adding each $i$ to $V$ with probability $1/k$ independently at random.
\item Run Influence Tester with this choice of $V$ and $\delta = \eps/(4k)$.
\itemend\negmedskip
\item Accept if the estimated acceptance probability is $\le0.8$, otherwise reject.
\enumend

\end{algorithm}

Let us describe these types of non-juntas.  For notational convenience, assume the first~$K$ variables are the influential ones, ordered by influence (of course, the tester does not know this order): 
\begin{equation}
\label{eqn:order}
\Inf_1(f)\geq\Inf_2(f)\geq\cdots\geq\Inf_K(f)>0=\Inf_{K+1}(f)=\cdots=\Inf_n(f).
\end{equation}
\mycommand{InfL}{\Inf_{\{k+1,\dots,K\}}(f)}
\mycommand{InfA}{\Inf_{\{k+1,\dots,200k\}}(f)}
\mycommand{InfB}{\Inf_{\{200k+1,\dots,K\}}(f)}
Our tester does not know the number~$K-k$ of ``extra'' variables if $f$ is far from any $k$-junta. However, \rf(lem:weightonextravars) implies that 
\begin{equation}
\label{eqn:extraWeight}
\InfL \ge \eps.
\end{equation}
Our tests are tailored to the following cases:

\enumstart
\item $\sum_{j=k+1}^{200k} \Inf_j(f) \ge \eps/2$.
This case is additionally split into $\klv+1$ subcases:
\[
\abs|\sfig{j\in[n] \midB \Inf_j(f)\ge \frac{\eps}{2^{\ell+3}\log (400k)}}| \ge k+2^{\ell},\qquad\text{where $\ell\in\{0,\dots,\klv\}$;}
\]
Such an $f$ is \emph{of the first kind}, for this value of~$\ell$.
\item $\sum_{j=k+1}^{200k} \Inf_j(f) \le \eps/2$. Such an $f$ is \emph{of the second kind}.
\enumend

\rnote{30/6: added}
Note that $f$ may be a non-junta of the first kind for many different values of~$\ell$ simultaneously; an extreme example is if $f$ is the $n$-bit parity function.

\begin{lem}
\label{lem:cases}
Every non-junta $f$ satisfies at least one of the cases above.
\end{lem}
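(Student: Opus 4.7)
The plan is to reduce the lemma to showing that the sum condition in case~1 alone forces some subcase of the first kind, since the dichotomy between cases~1 and~2 (sum~$\ge\eps/2$ vs.\ sum~$\le\eps/2$) trivially covers every~$f$. Thus the entire substance lies in the implication: if $\sum_{j=k+1}^{200k}\Inf_j(f)\ge\eps/2$, then for some $\ell\in\{0,\ldots,L\}$ with $L=\klv$, at least $k+2^\ell$ variables satisfy $\Inf_j(f)\ge t_\ell:=\eps/(2^{\ell+3}\log(400k))$. I would prove this by contradiction.

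Assume the sum condition holds and that the $\ell$-th threshold fails for every $\ell\in\{0,\ldots,L\}$, i.e., strictly fewer than $k+2^\ell$ variables have influence $\ge t_\ell$. By the monotone ordering~\rf(eqn:order) of the $\Inf_j(f)$, this means $\Inf_{k+j}(f)<t_\ell$ whenever $j\ge 2^\ell$. The key step is to choose the threshold adapted to each individual~$j$: take $\ell(j):=\lfloor\log j\rfloor$, so that $2^{\ell(j)}\le j<2^{\ell(j)+1}$ and hence $2^{\ell(j)+3}>4j$, yielding the pointwise bound
\[
\Inf_{k+j}(f)\;<\;t_{\ell(j)}\;\le\;\frac{\eps}{4j\log(400k)}\qquad\text{for each }j\in\{1,\ldots,199k\}.
\]
Summing over $j$ and using the elementary estimate $\sum_{j=1}^{199k}1/j\le 1+\ln(199k)<2\log(400k)$ (valid for all $k\ge 1$, with slack to spare), one obtains $\sum_{j=k+1}^{200k}\Inf_j(f)<\eps/2$, contradicting the standing assumption.

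The only things that need to be verified are the routine side conditions: that $\ell(j)\le L$ for every $j\le 199k$ (immediate from $199k<200k$), and that the harmonic bound indeed fits inside $2\log(400k)$ (which is essentially why the definition uses $\log(400k)$ in place of $\log(200k)$). I do not expect any real obstacle here; the constants $200$ and $400$ appearing in the statement of the cases are precisely tuned so that this argument goes through cleanly.
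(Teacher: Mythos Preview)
Your proof is correct, but it takes a genuinely different route from the paper's. The paper argues constructively via dyadic bucketing: it partitions the influence range $[0,1]$ into intervals $A_0, A_1,\ldots,A_L,A_\infty$ of geometrically decreasing endpoints, lets $B_\ell$ be the set of variables in $\{k+1,\ldots,200k\}$ whose influence lands in $A_\ell$, shows that the tail bucket $B_\infty$ carries weight at most $\eps/4$, and then pigeonholes to find some $\ell\le L$ with total weight $W_\ell\ge \eps/(4\log(400k))$; dividing by the maximum influence in that bucket gives $|B_\ell|\ge 2^\ell$, and the ordering supplies the first $k$ variables for free.

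Your argument instead runs by contradiction and is pointwise: assuming every subcase fails, you assign to each index $j$ the threshold $\ell(j)=\lfloor\log j\rfloor$ adapted to it, extract $\Inf_{k+j}(f)<\eps/(4j\log(400k))$, and sum via the harmonic bound. This is arguably cleaner---it avoids the separate treatment of the $\ell=0$ bucket that the paper needs---though it is non-constructive about which $\ell$ witnesses the subcase. The two arguments are essentially dual: the paper groups by influence and counts indices, you group by index and bound influences. Either way the constants $200$ and $400$ are doing the same work, leaving just enough room for the harmonic/logarithmic comparison to close.
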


\pfstart
It is clear that any $f$ satisfies the first or the second case above, so the only thing we need to show is that the first case is fully covered by its $\klv+1$ subcases. Assume $f$ satisfies the first case.
Denote $\eps' = \eps/(8\log(400k))$ and consider the following intervals, which together partition the interval~$[0,1]$:
\[
A_\infty = \left[0,\; \frac{\eps'}{2^{\klv}}\right),\qquad
A_{\ell} = \left[\frac{\eps'}{2^{\ell}},\; \frac{\eps'}{2^{\ell-1}}\right),\qquad
A_0 = \left[\eps',\; 1\right],
\]
where $\ell$ runs from $\klv$ to 1.
Let 
\[
B_\ell = \sfigA{j\in \{k+1,\dots,200k\} \mid \Inf_j(f)\in A_\ell}.
\]
Each $j$ is included in exactly one of the $B_\ell$.  Let also $W_\ell = \sum_{j\in B_\ell} \Inf_j(f)$.  Thus, 
\(
\sum_\ell W_\ell \ge \eps/2,
\)
\rnote{30/6: added}
because we are in the first case.
Next,
\(
W_\infty < 200k\cdot \eps/(8\cdot 2^{\klv}) \le \eps/4
\).
Thus, there exists $\ell\in \{0,\dots,\klv\}$ such that $W_\ell\ge \eps/(4\log(400k))$.  
Then, either $\ell=0$ and $|B_\ell|\ge 1$, or
\[
|B_\ell| \ge W_\ell\bigfracR/\s[\frac{\eps}{2^{\ell+2}\log(400k)}]. \ge 2^\ell\;.
\]
Also, all $j\in B_\ell$ satisfy $\Inf_j(f)\ge \eps/(2^{\ell+3}\log(400k))$.  By~\rf(eqn:order), all $j\in[k]$ also satisfy this inequality.  This means that $f$ satisfies the first case with this value of $\ell$.
\pfend

\begin{lem}
\label{lem:Bjcase}
For each $\ell \in\{0,\dots,\klv\}$, \rf(sbrtn:firstkind) accepts if $f$ is a junta, and rejects if $f$ is a non-junta of the first kind for this value of $\ell$.
Its query complexity can be made $O(\sqrt{(k/\eps)\log k})$.
\end{lem}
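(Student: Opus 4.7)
The plan is to realize \refsbrtn{firstkind} as a composition of the \eggt algorithm of \rf(thm:combGroupTestbasic), instantiated with parameters $k$ and $d=2^\ell$, with the Influence Tester of \rf(lem:infTester), instantiated with threshold $\delta = \eps/(2^{\ell+3}\log(400k))$. The composition exploits the irrelevant-variable structure identified in \rf(obs:qggtIrrelevant) together with the irrelevant-variable composition result \rf(cor:irrCompose). The Influence Tester will play the role of a bounded-error oracle for $\Inter_A$, where $A$ is a padded junta set in the first case and a carefully chosen set of high-influence variables in the second.

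For correctness I would split into the two cases. If $f$ is a $k$-junta with relevant set $J$, pad $J$ arbitrarily to a $k$-subset $A\in\cX$. For any $S$ with $S\cap A=\emptyset$, we also have $S\cap J=\emptyset$, so $\Inf_S(f)=0$ and the Influence Tester outputs $0$ with certainty, matching $\Inter_A(S)$. By \rf(obs:qggtIrrelevant), the $S$ with $S\cap A\ne\emptyset$ are exactly the variables declared irrelevant for $A\in\cX$, so the behaviour of the Influence Tester on those $S$ is immaterial, and the \eggt algorithm, composed with it, accepts as if run on $A\in\cX$. If instead $f$ is a non-junta of the first kind for this $\ell$, there are at least $k+2^\ell$ variables of influence $\ge\delta$; fix any $(k+2^\ell)$-subset $B\in\cY$ of them. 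For $S$ with $|S\cap B|=1$, picking the unique $j\in S\cap B$ and using monotonicity, $\Inf_S(f)\ge \Inf_j(f)\ge\delta$, so the Influence Tester outputs $1$ with probability $\ge 0.9$, matching $\Inter_B(S)$. Again by \rf(obs:qggtIrrelevant), the $S$ with $|S\cap B|\ne1$ are irrelevant for $B\in\cY$, so the composed algorithm behaves as if run on $B\in\cY$ and rejects.

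For the query complexity I would apply \rf(cor:irrCompose) with the outer cost $T=O(\sqrt{1+k/2^\ell})$ taken from the objective value of the feasible solution in \rf(clm:complexity), and the inner cost $Q(\text{Influence Tester})=O(\sqrt{1/\delta})=O(\sqrt{2^\ell\log(k)/\eps})$. Multiplying the two bounds yields
\[
O\sA[\sqrt{(1+k/2^\ell)\cdot 2^\ell\log(k)/\eps}] = O\sA[\sqrt{(2^\ell+k)\log(k)/\eps}] = O\sA[\sqrt{k\log(k)/\eps}],
\]
where the last equality uses $2^\ell\le 2^{\klv}\le 200k$.

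The main obstacle I anticipate is aligning the one-sided-error profile of the Influence Tester (zero error on inputs with $\Inf_S(f)=0$ and two-sided error $\le 0.1$ on inputs with $\Inf_S(f)\ge\delta$) with the hypotheses of \rf(cor:irrCompose). The correct viewpoint is to regard the Influence Tester as computing the partial Boolean function $S\mapsto[\Inf_S(f)\ge\delta]$, whose domain is precisely the set of $S$ declared relevant by \rf(obs:qggtIrrelevant) in each of the two cases above; once this identification is made, \rf(cor:irrCompose) applies directly and the remaining work is the calculation shown above.
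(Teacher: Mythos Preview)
Your proposal is correct and follows essentially the same approach as the paper: define the partial functions $G_S$ by $G_S(f)=0$ iff $\Inf_S(f)=0$ and $G_S(f)=1$ iff $\Inf_S(f)\ge\delta$, compose them with the \eggt function via \rf(cor:irrCompose) using the irrelevant variables of \rf(obs:qggtIrrelevant), and verify in each case (junta with a padded relevant set $A\in\cX$, non-junta of the first kind with a high-influence set $B\in\cY$) that $G_S$ is defined and agrees with $\Inter_A$ (resp.\ $\Inter_B$) on all relevant $S$. Your query-complexity calculation is the same as the paper's. One small remark: in your final paragraph you write ``the partial Boolean function $S\mapsto[\Inf_S(f)\ge\delta]$,'' but in the composition framework each $G_S$ is a partial function of $f$ (for fixed $S$), not of $S$; once you make that identification, there is no ``obstacle'' left, as \rf(cor:irrCompose) handles the bounded error of the inner functions automatically.
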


\pfstart
The composition in \rf(sbrtn:firstkind) is understood here in the sense of \rf(defn:irrCompose) with the functions $F$ and $G$ defined as follows.

The partial function $F$ is the \eggt function
\rnote{30/6: should this be GGT instead of EGGT? what if $f$ depends on fewer than $k$ coordinates?}
\bnote{03.07:  I don't see no problem here.  The last two paragraphs of the proof use EGGT.  It is even cleaner than GGT.}
 from \rf(defn:exactGroupTesting).
Given a function $h\colon 2^{[n]}\to\bool$, $F(h)=0$ if $h=\Inter_A$ with $|A|=k$, and $F(h)=1$ if $h=\Inter_A$ with $|A|=k+d$, where $\Inter_A$ is defined in~\rf(eqn:fADefn).  In all other cases, the value $F(h)$ is not defined.

For each $S\subseteq[n]$, the partial function $G_S$ is as defined in \rf(lem:infTester).  Given a total function $f\colon \cube\to\bool$, define $G_S(f)=0$ if $\Inf_S(f)=0$ and $G_S(f)=1$ if $\Inf_S(f)\ge \delta$.  If $0<\Inf_S(f)<\delta$, the value $G_S(f)$ is not defined.

The function evaluated in \rf(sbrtn:firstkind) is the following restriction of the composed function $F\circ(G_\emptyset, G_{\{1\}}, G_{\{2\}},\dots, G_{[n]})$:
\[
f \mapsto F\sA[ G_\emptyset(f), G_{\{1\}}(f), G_{\{2\}}(f),\dots, G_{[n]}(f) ],
\]
where, as the arguments of~$F$, we have all possible $2^n$ functions~$G_S$.
The composition here is understood as in \rf(defn:irrCompose) with the irrelevant variables of $F$ given by \rf(obs:qggtIrrelevant).

The query complexity of the subroutine can be computed using \rf(cor:irrCompose).
The complexity of the algorithm for $F$ in \rf(thm:combGroupTestbasic) is $O(\sqrt{k/2^\ell})$, as $2^\ell = O(k)$.  The quantum query complexity of each $G_V$ is $O(\sqrt{(2^\ell/\eps)\log k})$ by \rf(lem:infTester).  Thus, the total query complexity of the subroutine is $O(\sqrt{(k/\eps)\log k})$.

Let us prove the correctness of the subroutine.
Assume $f$ is a non-junta of the first kind with this value of $\ell$.
By definition, there exists $A\subseteq [n]$ of size $k+d$ such that for all $j\in A$, $\Inf_j(f)\ge\delta$.  As the influence is monotone in $S$, $\Inf_S(f) \ge\delta$ for all $S$ that intersect $A$.
By \rf(obs:qggtIrrelevant), all input variables $S$ satisfying $S\cap A=\emptyset$ are irrelevant, hence the value of the composed function is~1 in this case.

On the other hand, if $f$ is a junta, there exists $A\subseteq [n]$ of size $k$ such that for all $S\subseteq[n]$ satisfying $S\cap A=\emptyset$, we have $\Inf_S(f) =0$. 
By \rf(obs:qggtIrrelevant), all input variables $S$ satisfying $S\cap A\ne \emptyset$ are irrelevant, hence the value of the composed function is~0 in this case.
\pfend

From the proof of \rf(lem:Bjcase), it is clear why we need a separate tester for the second case.  If $2^\ell$ becomes $\omega(k)$, the complexity of Influence Tester still grows as $\tO(2^\ell/\eps)$, whereas the \eggt algorithm cannot use fewer than $O(1)$ queries.
Our second tester (\rf(sbrtn:secondkind)) does not use the \eggt algorithm, and relies on more traditional means.

\begin{lem}
\label{lem:B0case}
\rf(sbrtn:secondkind) accepts if $f$ is a junta, and rejects if $f$ is a non-junta of the second kind.  Its query complexity is $O(\sqrt{k/\eps})$.
\end{lem}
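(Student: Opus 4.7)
I split into the junta case and the non-junta-of-second-kind case.

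If $f$ is a $k$-junta with relevant set $J$ of size at most~$k$, then $V$ misses $J$ with probability $(1-1/k)^{|J|}\ge (1-1/k)^{k}\ge 1/e-o(1)$ (and at least $1/4$ for all $k\ge 2$). Whenever $V\cap J=\emptyset$ we have $\Inf_V(f)=0$, so by \rf(lem:infTester) the Influence Tester rejects with certainty. The true acceptance probability of the inner subroutine is thus at most $1-(1-1/k)^k<3/4$, so the additive-$0.05$ estimator returns a value $\le 0.8$ with high probability and the outer test accepts.

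For the non-junta case I first extract structural information. Applying \rf(lem:weightonextravars) with $W=[k]$ gives $\Inf_{\{k+1,\ldots,n\}}(f)\ge\eps$. Combined with the defining second-kind condition $\sum_{j=k+1}^{200k}\Inf_j(f)\le\eps/2$ and sub-additivity of influence, this yields $\Inf_{\{k+1,\ldots,200k\}}(f)\le\eps/2$ and hence, by one more use of sub-additivity, $\Inf_T(f)\ge\eps/2$ for $T=\{200k+1,\ldots,K\}$. Also, averaging over the sorted list of influences gives $\Inf_j(f)\le\eps/(398k)$ for every $j\in T$.

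The key claim is that $\Pr_V[\Inf_V(f)\ge\delta]$ is close to~$1$, so that the Influence Tester accepts with probability at least~$0.9$ for a typical~$V$. By monotonicity of influence $\Inf_V(f)\ge\Inf_{V\cap T}(f)$, and a direct Fourier-side computation gives
\begin{equation*}
\Exp_V[\Inf_V(f)]\ge\Exp_V[\Inf_{V\cap T}(f)]=\sum_S\hat f(S)^2\sA[1-(1-1/k)^{|S\cap T|}]\ge \frac{\Inf_T(f)}{k}\ge 2\delta,
\end{equation*}
using $1-(1-1/k)^m\ge 1/k$ for $m\ge 1$. The Efron--Stein inequality, exploiting $\Inf_j(f)\le\eps/(398k)$ on~$T$, bounds $\Var_V(\Inf_{V\cap T}(f))\le (1/k)\sum_{j\in T}\Inf_j(f)^2$. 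A case split produces the concentration: either $\sum_{j\in T}\Inf_j(f)$ is bounded by a small constant and Chebyshev directly gives $\Pr_V[\Inf_V(f)\ge\delta]\ge 0.95$, or the Fourier mass of~$f$ lies predominantly on sets $S$ with $|S\cap T|\ge k$, in which case $\Exp_V[\Inf_V(f)]$ is already $\Omega(\eps)$, well above~$\delta$, and Markov suffices. The inner subroutine therefore accepts with probability at least $0.9\cdot 0.95>0.85$, the estimator returns $>0.8$ with high probability, and the outer procedure correctly rejects. The query complexity is $O(1)$ invocations of \rf(lem:infTester) with $\delta=\eps/(4k)$, giving the stated $O(\sqrt{k/\eps})$.

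The principal technical hurdle is this concentration step: the expectation bound is only a factor of~$2$ above the threshold, so Efron--Stein plus Chebyshev leaves essentially no slack unless $\sum_{j\in T}\Inf_j(f)$ is controlled, which forces the case split above. A clean fallback, should the case analysis become too delicate, is to run the Influence Tester with $\delta'=\delta/C$ for a large constant~$C$; this costs only a constant-factor overhead in queries but weakens the required concentration to $\Pr_V[\Inf_V(f)\ge\delta']\ge 0.95$, which follows directly from $\Exp_V[\Inf_V(f)]\ge 2\delta\gg\delta'$ by reverse Markov.
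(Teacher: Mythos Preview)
Your junta case is fine and matches the paper. The gap is in the non-junta case, specifically the concentration step and the fallback.

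The Efron--Stein bound $\Var_V(\Inf_{V\cap T}(f))\le (1/k)\sum_{j\in T}\Inf_j(f)^2$ is correct, but $\sum_{j\in T}\Inf_j(f)$ is not controlled in terms of $k$ and $\eps$: because $\Inf$ is only subadditive, $\sum_{j\in T}\Inf_j(f)=\sum_S |S\cap T|\,\hat f(S)^2$ can be arbitrarily large. Plugging in, $\sigma^2/\mu^2$ is of order $\bigl(\sum_{j\in T}\Inf_j(f)\bigr)/\eps$, so Chebyshev yields nothing unless that sum is $O(\eps)$. Your Case~2 does not rescue this: when most mass sits on $|S\cap T|\ge k$, each such $S$ still misses $V$ with probability about $1/e$, the events are correlated across $S$, and neither Markov on the missed mass nor a union bound gets you to probability $\ge 0.95$.

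The fallback is simply wrong. Reverse Markov on a $[0,1]$-valued random variable gives $\Pr[X\ge\delta']\ge(\mu-\delta')/(1-\delta')$, which here is at most $2\delta=\eps/(2k)$---nowhere near $0.95$. Lowering the threshold cannot help: what is needed is \emph{concentration}, not a large ratio $\mu/\delta'$.

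The paper's fix is to replace $\Inf$ by an \emph{additive} proxy. For $j>200k$ set $\SubInf_j(f)=\sum_{S:\min(S\cap T)=j}\hat f(S)^2$, and $\SubInf_j(f)=0$ otherwise; extend by $\SubInf_S(f)=\sum_{j\in S}\SubInf_j(f)$. Then $\SubInf_S(f)\le\Inf_S(f)$, and crucially the pieces are disjoint Fourier weights, so $\SubInf_{[n]}(f)=\Inf_T(f)\le 1$. Now $\SubInf_V(f)$ is a sum of independent terms with mean $\mu=\SubInf_{[n]}(f)/k\ge\eps/(2k)$ and
\[
\sigma^2\le\frac1k\sum_j\SubInf_j(f)^2\le\frac1k\cdot\max_j\SubInf_j(f)\cdot\SubInf_{[n]}(f)\le\frac{\eps}{200k}\cdot\mu\le\frac{\mu^2}{100},
\]
so Chebyshev gives $\Pr[\SubInf_V(f)<\delta]\le 0.04$ directly, and $\Inf_V(f)\ge\SubInf_V(f)$ finishes.
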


\pfstart

The estimate of the query complexity of \rf(sbrtn:secondkind) is straightforward.  Let us prove its correctness. We will show that the inner procedure has acceptance probability $\leq 0.75$ if $f$ is a $k$-junta, and acceptance probability $\geq 0.85$ if $f$ satisfies the second case. 

If $f$ is a $k$-junta then the probability that the set $V$ does not intersect with the set $J$ of (at most~$k$) relevant variables is: 
$$
(1-1/k)^{|J|}\geq (1-1/k)^k\geq 1/4,
$$
assuming $k\geq 2$.
If $V$ and $J$ are disjoint, then the algorithm always rejects, hence, the acceptance probability is at most $0.75$.

Now suppose $f$ is a non-junta of the second kind.
For notational convenience, we still assume that the variables of $f$ are ordered by decreasing influence as in~\rf(eqn:order).
For $j\in[n]$, let us define
\[
\SubInf_j(f) = 
\begin{cases}
0,& \text{if }j\le 200k;\\
\sum_{S\colon S\cap\{200k+1,\dots,j\} = \{j\}} \hf(S)^2,& \text{otherwise.}
\end{cases}
\]
For $S\subseteq[n]$, define $\SubInf_S(f) = \sum_{j\in S} \SubInf_j(f)$.

This quantity satisfies two important properties.  First, $0\le \SubInf_S(f) \le \Inf_S(f)$ for all $S\subseteq[n]$.  And second, it is additive \bnote{was linear} is $S$, i.e.,
\(
\SubInf_{S\cup T}(f) = \SubInf_S(f) + \SubInf_T(f)
\)
for all disjoint $S$ and $T$.  Note that $\Inf_S(f)$ is only subadditive \bnote{was sublinear} in $S$.

Next, as $f$ satisfies the second case, $\Inf_j(f)\le \eps/(200k)$ for $j>200k$.  Hence, $\SubInf_j(f) \le \eps/(200k)$ for all $j\in[n]$.
Finally,
\[
\SubInf_{[n]}(f) = \InfB \ge \InfL - \sum_{j=k+1}^{200k} \Inf_j(f) \ge \frac\eps2\;.
\]
Consider the random variable $\SubInf_V(f)$ where $V$ is as in \rf(sbrtn:secondkind).
Its expectation is
\[
\mu=\Exp[\SubInf_V(f)]=\frac1k \SubInf_{[n]}(f)\ge \frac{\eps}{2k}\;,
\]
and its variance is
\[
\sigma^2=\Var[\SubInf_V(f)]\leq
\frac{1}{k}\sum_{j}\SubInf_j(f)^2 \le \frac1k \max_j \SubInf_j(f)\cdot \SubInf_{[n]}(f) \leq \frac{\eps}{200k}\mu\leq \frac{\mu^2}{100}\;.
\]
Then, Chebyshev's inequality implies 
\[
\Pr\skA[\SubInf_V(f)<\eps/4k]\leq 
\Pr\skA[|\SubInf_V(f)-\mu|\geq \mu/2] \leq
\Pr\skA[|\SubInf_V(f)-\mu|\geq 5\sigma]\leq \frac{1}{5^2}=0.04\;.
\]
Hence, with probability at least 0.96, we have $\Inf_V(f)\ge \SubInf_V(f) \ge \eps/4k$.
If this is indeed the case, the influence tester in Lemma~\ref{lem:infTester}
accepts with probability $\geq 0.9$.
Thus, the inner procedure accepts with probability at least $0.96\cdot0.9>0.85$ if $f$ satisfies the second case.
\pfend

From Lemmas~\ref{lem:Bjcase} and~\ref{lem:B0case}, it is easy to see that \rf(alg:juntaTester) is correct.
If $f$ is a junta, then all of the $O(\log k)$ subtesters accept.  If $f$ is a non-junta, then at least one of them rejects (and the output of the remaining ones is not defined).
Thus, our algorithm is of the \rnote{30/6: added}``robust conjunction'' from~\rf(eqn:irrelExample).  Hence, using \rf(exm:and) and \rf(cor:irrCompose), we get that the query complexity of \rf(alg:juntaTester) is
\[
O\sA[\sqrt{\log k}\cdot\sqrt{(k/\eps)\log k}] = O\sA[\sqrt{k/\eps}\log k].
\]
This concludes the proof of \rf(thm:juntaMain).
%The outer quantum search over the $O(\log k)$ subroutines invokes those subroutines $O(\sqrt{\log k})$ times in superposition.  Again, the output of some testers can be undefined, but due to \rf(rem:composition), we can apply \rf(thm:composition), and get the total query complexity of the algorithm to be $$.

\section{Efficient implementation}
\label{sec:qggtalgorithm}
The main aim of this section is to prove that the algorithm from \rf(thm:combGroupTestbasic) can be implemented time-efficiently. Here by ``time'' we mean the total number of gates the algorithm uses, both the query-gates and all elementary quantum gates (from some arbitrary fixed universal set of gates) used to implement the unitaries in between the queries.

Moreover, we will prove that our algorithm computes a function that has irrelevant variables as specified by \rf(obs:qggtIrrelevant).\rnote{30/6: what does it mean for an \emph{algorithm} to have irrelevant variables?}  
\bnote{03.07: I guess ``evaluation'' is synonymous with ``algorithm.''  So I don't know what you want to imply here.}\rnote{6/7: the point is that we never defined irrelevant variables for algorithms; it doesn't fit with Def 2.8.  We shouldn't be so casual about proper definitions, this is math not sociology!  I now added a few words}
For clarity, we will now explicitly describe the problem which arises from applying \rf(defn:irrEvaluate) to the \eggt problem of \rf(defn:exactGroupTesting).

\begin{defn}[\qggt]
\label{defn:quantumGroupTest}
In the \emph{quantum gap group testing (\qggt) problem} with parameters $k$ and $d$, one is given access to an oracle $O_f$ satisfying the following properties.
The oracle $O_f$ acts on two registers: the $n$-qubit input register $\reg I$, and an arbitrary internal working register $\reg W$.  The oracle is in the block-diagonal form $O_f = \bigoplus_{S\subseteq\cube} O_{f,S}$, where $O_{f,S}$ is a unitary operator on $\reg W$, that gets invoked in $O_f$ when the value of the register $\reg I$ is $S$.
We are promised that $O_f$ belongs to one of the following two families:
\begin{equation}
\label{eqn:cXQuantum}
\ctX = \sfig{O_f \midA
\exists A\in\cX\; \forall S\subseteq[n] : S\cap A = \emptyset \implies O_{f,S} \ket W|0> = \ket W|0>}
\end{equation}
and\footnote{One can also weaken the premise $S\cap B\ne\emptyset$ in~\rf(eqn:cYQuantum) to $|S\cap B| = 1$.  We chose this definition to make the \qggt problem more similar to the \ggt problem.}
\begin{equation}
\label{eqn:cYQuantum}
\ctY = \sfig{O_f \midA
\exists B\in\cY \; \forall S\subseteq[n] : S\cap B \ne \emptyset \implies O_{f,S} \ket W|0> = -\ket W|0>}.
\end{equation}
The task is to detect whether $O_f\in\ctX$ or $O_f\in\ctY$.
\end{defn}

\begin{thm}
\label{thm:combGroupTestfull}
There exists a quantum algorithm that solves the \qggt problem with parameters $k$ and $d$ in time $\tO \sA[n\sqrt{1 +k/d}]$ using $O \sA[\sqrt{1 + k/d}]$ queries.
\end{thm}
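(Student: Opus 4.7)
The plan is to convert the feasible dual adversary solution $(X_S)$ constructed in the proof of \rf(thm:combGroupTestbasic) into an explicit quantum algorithm using the standard template: set up two reflections $R_1$ and $R_2$ in a suitable workspace, run phase estimation on $R_2R_1$ applied to an initial state, and argue correctness via \rf(lem:effective). The number of applications of $R_2R_1$ needed is $O(T)$ where $T$ is the objective value of the adversary SDP, so the query bound $O(\sqrt{1+k/d})$ from \rf(thm:combGroupTestbasic) carries over unchanged; everything that remains is to show that $R_1$, $R_2$, the initial-state preparation, and the phase-estimation wrapper each admit an $\tO(n)$-gate implementation.

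The reflection $R_2$ is cheap: $O(1)$ calls to $O_f$ together with a reflection about $\ket W|0>$, costing $\tO(n)$ elementary gates. By \rf(obs:qggtIrrelevant) the algorithm's trajectory stays inside the subspace where the oracle acts exactly as $\pm I$, so the arbitrary action of $O_{f,S}$ on irrelevant $S$---the main issue in passing from the \eggt setting to the \qggt setting of \rf(defn:quantumGroupTest)---never feeds back into the computation, exactly as in the abstract argument of \rf(prp:irrelevant). The non-trivial object is $R_1$, the reflection about the span of the vectors $\ket|\psi_S>$ from~\rf(eqn:psiA). Those vectors are invariant under the diagonal action of $\bS_n$ on the pair $(S,A)$---their entries depend only on $|S|$, $|A|$, and $|S\cap A|$---so $\im R_1$ sits inside a small number of isotypic components of the permutation module $M^n$. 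After applying a symmetry-adapted Fourier transform on $M^n$ (the ``shallow Schur--Weyl-like'' transform mentioned in the introduction), $R_1$ becomes block-diagonal with blocks of bounded size whose entries are the closed-form scalars appearing in~\rf(eqn:psiA) and~\rf(eqn:XSMean); implementing such a block-diagonal reflection is then routine in $\tO(n)$ gates.

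The main obstacle is constructing the symmetry-adapted transform on $M^n$ itself in $\tO(n)$ time. My plan follows the spirit of the Bacon--Chuang--Harrow Schur transform~\cite{bch:prl04,bch:soda07}: process the $n$ input qubits one at a time, maintaining a small ancillary ``type'' register that records the current isotypic label, and at each step apply a constant-size Clebsch--Gordan-like rotation that splits the joint irrep of the first $t$ qubits by the $(t{+}1)$-st qubit. Because $M^n$ is a far simpler $\bS_n$-module than the regular representation, only $O(1)$ summands are live at any stage and each step requires only $\tO(1)$ gates, for a total of $\tO(n)$. Combined with the block-diagonal form of $R_1$ and the cheap $R_2$, phase estimation (\rf(thm:estimation)) with precision $\delta=\Theta(1/T)$ then yields the claimed $\tO(n\sqrt{1+k/d})$ running time.
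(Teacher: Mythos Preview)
Your overall plan---phase estimation on a product of two reflections, correctness via \rf(lem:effective), and a Schur--Weyl-style QFT on the permutation module $M^n$ to implement the input-independent reflection---is exactly the route the paper takes. Two points in your sketch are genuinely wrong, though, and would cause the proof to break if you tried to carry it out as written.

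First, after the QFT the reflection $R_\Lambda$ is \emph{not} block-diagonal with ``blocks of bounded size.'' In the Fourier basis one gets $\Lambda=\bigoplus_{t=0}^k(\tw_t\tw_t^*)\otimes I_{S(t)}$, where each $\tw_t$ is an $(n-2t+1)$-dimensional unit vector (indexed by the ``level'' $\ell$). So within each isotypic component you still have to reflect about a single $\Theta(n)$-dimensional vector $\tw_t$, and this is not ``routine.'' The paper spends a full paragraph on a sequential amplitude-loading procedure that constructs $\tw_t$ one coordinate at a time using the closed-form ratio $c_{\ell+1}/c_\ell$, carefully tracking floating-point precision; only because each of the $O(n)$ steps costs $\polylog(n)$ does the whole thing run in $\tO(n)$. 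Your claim that the blocks are bounded and the reflection is immediate glosses over the one genuinely new implementation idea beyond the QFT itself.

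Second, your handling of the malfunctioning oracle on irrelevant $S$ is incorrect. The algorithm's state does \emph{not} stay inside the subspace on which $O_{f,S}$ acts as $\pm I$; there is no trajectory-confinement argument here. What the paper does instead is (i) first reduce, via a small trick with an extra basis state in $\reg W$, to the case where $O_f$ is itself a reflection, and then (ii) argue correctness by exhibiting \emph{witness vectors}---$u$ in the negative case, $\psi_A$ in the positive case---that are supported only on those $\ket|S>$ for which $O_{f,S}$ is guaranteed to act as $\pm I$ on $\ket W|0>$. It is these witnesses, not the algorithm's trajectory, on which the oracle behaves well, and that is precisely what \rf(lem:effective) and the $1$-eigenvector argument need. (A minor related slip: the span defining $\Lambda$ is of the vectors $\psi_A$ from~\rf(eqn:psia), indexed by $A\in\cX$, not of the $\psi$ from~\rf(eqn:psiA) indexed by $S$.)
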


The time complexity of the algorithm is roughly $n$ times its query complexity; as mentioned in the introduction, this is probably the best one can hope for.

\bnote{03.07: Added the following corollary for clarity.}

Note that the \qggt problem incorporates the usual quantization of the \ggt problem from \rf(defn:GroupTesting), thus we have the following:
\begin{cor}
There exists a quantum algorithm that solves the \ggt problem with parameters $k$ and $d$ in time $\tO \sA[n\sqrt{1 +k/d}]$ using $O \sA[\sqrt{1 + k/d}]$ queries.
\end{cor}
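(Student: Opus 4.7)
The plan is to show that the \ggt problem reduces to the \qggt problem with no overhead in queries or time, and then invoke \rf(thm:combGroupTestfull). The two definitions are already set up so that \ggt is essentially a special case of \qggt, so the reduction is a short bookkeeping argument.

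Concretely, given a classical oracle for $f\colon 2^{[n]}\to\{0,1\}$ with $f\in\ctX\cup\ctY$ in the sense of \rf(defn:GroupTesting), I would implement the usual phase oracle $O_f$ which acts on the input register $\reg I$ (holding $S\subseteq[n]$) and a (here trivial) working register $\reg W$ by $O_f\ket I|S>\ket W|0> = (-1)^{f(S)}\ket I|S>\ket W|0>$. This clearly has the block-diagonal form $O_f = \bigoplus_S O_{f,S}$ with $O_{f,S} = (-1)^{f(S)} I$, using one call to the classical oracle plus $O(1)$ elementary gates per invocation. Now I verify that $O_f$ lies in the appropriate class: if $f\in\ctX$ and $A\in\cX$ is the witness in~\rf(eqn:cXPrim), then for every $S$ with $S\cap A=\emptyset$ we have $f(S)=0$, so $O_{f,S}\ket W|0>=\ket W|0>$, matching~\rf(eqn:cXQuantum). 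Symmetrically, if $f\in\ctY$ with witness $B\in\cY$ as in~\rf(eqn:cYPrim), then for every $S$ with $S\cap B\ne\emptyset$ we have $f(S)=1$, so $O_{f,S}\ket W|0>=-\ket W|0>$, matching~\rf(eqn:cYQuantum).

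With the reduction in place, I would run the algorithm of \rf(thm:combGroupTestfull) on $O_f$. Each invocation of the \qggt oracle costs one query to the \ggt oracle $f$ (plus a constant number of gates to realize the sign), and the algorithm makes $O\sA[\sqrt{1+k/d}]$ such invocations and uses $\tO\sA[n\sqrt{1+k/d}]$ other elementary operations. Since the reduction is exact (no error is introduced), the success probability is exactly that of the \qggt algorithm, which is bounded below by a constant.

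There is essentially no obstacle here: the definitions of $\ctX$ and $\ctY$ in the \ggt and \qggt settings were chosen precisely so the classical phase oracle slots into the quantum one. The only thing to double-check is that the $n$-qubit query gate in the \qggt time bound accounts for the same $n$-qubit phase query we need in \ggt, which it does since both problems operate on the same $n$-qubit input register $\ket I|S>$.
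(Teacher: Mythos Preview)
Your proposal is correct and matches the paper's approach exactly: the paper simply remarks that the \qggt problem incorporates the usual quantization of the \ggt problem and immediately states the corollary, while you have spelled out the (trivial) reduction in detail.
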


However, the \qggt problem is more general than the \ggt problem, the difference being that $O_{f,S}$ may be an \emph{arbitrary} unitary in $\reg W$ when the premises in~\rf(eqn:cXQuantum) or~\rf(eqn:cYQuantum) do not hold.

With \rf(thm:combGroupTestfull) in hand, it is easy to show that \rf(alg:juntaTester) can be implemented time-efficiently as well, with a slight increase in the number of queries.

\begin{thm}
\label{thm:juntaTimeEfficient}
There exists a bounded-error quantum tester that distinguishes $k$-juntas from functions that are $\eps$-far from any $k$-junta in time $\tO\sA[n\sqrt{k/\eps}]$ using $\tO\sA[\sqrt{k/\eps}]$ queries.
\end{thm}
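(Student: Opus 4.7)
The plan is to re-examine \rf(alg:juntaTester) from \rf(thm:juntaMain) and show that each of its subroutines admits a time-efficient implementation, using \rf(thm:combGroupTestfull) in place of \rf(thm:combGroupTestbasic) at the outer layer. Concretely, for \rf(sbrtn:firstkind) with parameter $\ell$, I would instantiate the \qggt algorithm of \rf(thm:combGroupTestfull) on $k$ and $d=2^\ell$, and supply as its input oracle the time-efficient Influence Tester of \rf(lem:infTester) with $\delta=\eps/(2^{\ell+3}\log(400k))$. \rf(sbrtn:secondkind) is already manifestly time-efficient because it consists of a constant number of invocations of the Influence Tester, each running in time $O(n/\sqrt{\delta})=\tO(n\sqrt{k/\eps})$.

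The main point is that \qggt, unlike \eggt, is the correct notion to compose with: \rf(lem:infTester) is only guaranteed to output~$0$ or~$1$ with the stated probabilities when $\Inf_S(f)\in\{0\}\cup[\delta,1]$; for intermediate values of the influence the subroutine can behave as an arbitrary unitary on its internal workspace. \rf(defn:quantumGroupTest) is tailored precisely for this situation, and the correctness argument of \rf(lem:Bjcase) (based on the irrelevant variables from \rf(obs:qggtIrrelevant)) carries over unchanged, because a junta makes the oracle act as identity on all $S$ disjoint from the junta's support, and a non-junta of the first kind makes the oracle apply $-I$ on all $S$ meeting the $(k+d)$ large-influence variables.

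To account for the time, I would just bookkeep the two contributions inside each \rf(sbrtn:firstkind): the $O(\sqrt{k/2^\ell})$ calls to the Influence Tester each cost $\tO(n\sqrt{2^\ell/\eps})$, contributing $\tO(n\sqrt{k/\eps})$ in total, while the ``between-query'' unitaries of the outer \qggt account for an additional $\tO(n\sqrt{k/2^\ell})=\tO(n\sqrt{k/\eps})$ by \rf(thm:combGroupTestfull). Summing over the $O(\log k)$ testers of \rf(alg:juntaTester) and absorbing the logarithms into the $\tO$ notation yields the claimed $\tO(n\sqrt{k/\eps})$ overall time bound, with the query count unchanged from \rf(thm:juntaMain) up to logarithmic factors.

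The main obstacle is conceptual rather than computational: one must justify that plugging the Influence Tester into the outer \qggt really does produce a legitimate quantum circuit whose action on the input oracle~$f$ is governed by \rf(defn:quantumGroupTest). This requires verifying that the time-efficient implementation promised by \rf(thm:combGroupTestfull) genuinely respects the irrelevant-variables structure of the adversary solution from \rf(thm:combGroupTestbasic) (cf.\ \rf(obs:qggtIrrelevant) and the discussion following \rf(prp:irrelevant)), so that the $-I$ versus $I$ dichotomy in~\rf(eqn:cXQuantum)--\rf(eqn:cYQuantum) is all that the outer algorithm ever needs from its oracle. Once this is in place, the remainder is routine bookkeeping as above.
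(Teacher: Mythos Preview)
There is a genuine gap in the non-junta case. You assert that ``a non-junta of the first kind makes the oracle apply $-I$ on all $S$ meeting the $(k+d)$ large-influence variables,'' but \rf(lem:infTester) only guarantees acceptance with probability at least~$0.9$ when $\Inf_S(f)\ge\delta$, not with certainty. If you build the \qggt oracle by running the Influence Tester forward, applying a phase on the accept flag, and uncomputing, then for such~$S$ the resulting unitary satisfies $O_{f,S}\ket W|0>\approx -\ket W|0>$ only up to a \emph{constant} error in norm. The condition~\rf(eqn:cYQuantum), however, demands equality, and the correctness proof of the \qggt algorithm (\rf(clm:correct)) uses that $O_f u=-u$ \emph{exactly} for the witness vector~$u$. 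With $O(\sqrt{k/d})$ oracle calls, a constant per-call deviation destroys the guarantee. (Your argument in the junta case is fine: there the Influence Tester rejects with certainty, so~\rf(eqn:cXQuantum) is satisfied exactly.)

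This is precisely the difficulty the paper addresses. The query-efficient argument in \rf(lem:Bjcase) sidesteps it by composing at the level of the adversary bound via \rf(cor:irrCompose), which absorbs the inner tester's bounded error automatically---but that route gives no time bound. For a time-efficient implementation the paper instead amplifies the Influence Tester by $O(\log(k/\eps))$ repetitions so that its error is $\ll\eps/k$, uncomputes the workspace, and then runs the \qggt algorithm on this near-perfect oracle; a standard hybrid argument shows that the accumulated deviation over all $O(\sqrt{k/2^\ell})$ calls is~$o(1)$. This costs an extra $O(\log(k/\eps))$ factor in queries (hence the $\tO$ rather than the $O(\sqrt{k/\eps}\log k)$ of \rf(thm:juntaMain)), but yields the claimed $\tO(n\sqrt{k/\eps})$ time. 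Your bookkeeping for \rf(sbrtn:secondkind) and for the outer running time is correct; it is only the missing amplification step that needs to be inserted.
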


\pfstart
It is easy to see from the proof of \rf(lem:B0case) that the time complexity of \rf(sbrtn:secondkind) is $\tO\sA[n\sqrt{k/\eps}]$.
Unfortunately, it is hard to estimate the time complexity of the implementation of \rf(sbrtn:firstkind) in \rf(lem:Bjcase), 
because \rf(lem:Bjcase) invokes \rf(prp:irrelevant)(b) to analyze the composition of quantum algorithms (\rf(prp:irrelevant)(b) upper bounds the \emph{query} complexity of the composition but not its \emph{time} complexity). 
However, \rf(alg:juntaTester) can be implemented to have time complexity $\tO\sA[n\sqrt{k/\eps}]$ as follows. %We modify \rf(sbrtn:firstkind) as follows.  

We first reduce the error probability of each call to the Influence Tester of \rf(lem:infTester) to 
%\bnote{2/18: Used to be $1/k^2$.  Could not figure out where it had come from}\rnote{3/28: if the error prob per query is $\alpha$, the error of an algorithm involving $T$ queries is $O(T\sqrt{\alpha})$, so we want $\alpha\ll 1/T^2$. The current $\ll\eps/k$ is OK, keeping in mind that the `$\ll$' has to overcome a few log-factors.}
$\ll \eps/k$ by $O(\log \frac k\eps)$ repetitions, and run it backwards (after copying the answer) to set the workspace back to its initial state; then run the \qggt algorithm on this oracle as if it's errorless.  Standard techniques show that the resulting variant of \rf(sbrtn:firstkind) can be made to have error probability $\leq 1/3$, and we do not need to invoke \rf(prp:irrelevant)(b) anymore. 
The query complexity of \rf(sbrtn:firstkind) has now gone up by a factor $O(\log \frac k\eps)$, but its time complexity becomes $\tO\sA[n\sqrt{k/\eps}]$, because it is the time complexity of the \qggt algorithm, plus its oracle-query complexity multiplied by the time complexity of the amplified Influence Tester of~\rf(lem:infTester) that implements one oracle call.
%\footnote{Note that the time complexity of one call to Fourier sampling is $O(n)$, so the time complexity of the algorithm of \rf(rem:testInfluence) is $\tO\sA[n\sqrt{1/\delta}]$.}

The resulting variant of \rf(alg:juntaTester) can thus be implemented in time $\tO\sA[n\sqrt{k/\eps}]$.
\pfend

\subsection{Proof of \rf(thm:combGroupTestfull)}
\label{sec:proofOfQGGT}
It remains to prove \rf(thm:combGroupTestfull).
This is done by a (by now relatively standard) implementation of the dual adversary bound as in~\cite{reichardt:advTight}.
The analysis follows~\cite{lee:stateConversion}, with the simplification that we have Boolean input and output (see also~\cite[Section 3.4]{belovs:phd}).  
%Additionally, as promised at the end of \rf(sec:groupTesting), we show that this algorithm solves the more general \qggt problem.
Our main innovation here is an efficient implementation of a specific reflection in \rf(sec:lambda), which we do by means of a new and efficient quantum Fourier transform.

%\onote{was: Recall that the \qggt problem is defined in \rf(defn:quantumGroupTest).}
Recall the \qggt problem as defined in \rf(defn:quantumGroupTest).
Before we proceed with the algorithm, we have to make a minor modification to this definition.  
Due to technical reasons in \rf(clm:correct) below, we have to assume that $O_{f,S}$ not only satisfies~\rf(eqn:cXQuantum) or~\rf(eqn:cYQuantum), but is also a reflection.  This is without loss of generality.  
Indeed, assume $O_f$ satisfies the original constraints of \rf(defn:quantumGroupTest).  Add a new basis element $\ket |1>$ to $\reg W$, and assume that $O_{f,S}$ does not change it.  Let $V$ be a unitary on $\reg W$ that maps $\ket W|0>$ to $\tfrac 1{\sqrt{2}}\sA[\ket W|0> + \ket W|1>]$.  
Denote by $O'_f$ the following chain of operations: apply $O_fV$, reflect about $\tfrac 1{\sqrt{2}}\sA[\ket W|0> + \ket W|1>]$, and apply $V^{-1} O_f^{-1}$.  
It is clearly a reflection, and it has the same decomposition $O'_f = \bigoplus_{S\subseteq\cube} O'_{f,S}$ as $O_f$ does.  Moreover, it is straightforward to check that if $O_f$ is in $\ctX$ or $\ctY$, then $O_f'$ is also in $\ctX$ or $\ctY$, respectively.

Our algorithm only uses the input register $\reg I$, so we omit this subscript below.  The register $\reg W$ is not written, but assumed to be in the state $\ket W|0>$.
We also add a new basis state $\ket|0>$ to $\reg I$, and assume that $O_f\ket|0> = -\ket|0>$ for all $O_f$.

The query-efficient algorithm in \rf(thm:combGroupTestbasic) was obtained by constructing the matrices $X_S$ in~\rf(eqn:psiA) that depend on parameters $\alpha_s$ and $\beta_s$ satisfying~\rf(eqn:XSElem).
The objective value~\rf(eqn:advNewObjective) is $W=O(\sqrt{1+k/d})$ by \rf(clm:complexity). Also we define $\gamma = C_1\sqrt{W}$ for some constant $C_1$ to be determined later.

Let $\Lambda$ be the projector onto the span of the vectors
\begin{equation}
\label{eqn:psia}
\psi_A = \ket |0> + \gamma \sum_{s=1}^{n-k-d+1} \alpha_s
\sum_{S\subseteq[n] \colon |S|=s,\; S\cap A=\emptyset} \ket|S>
\end{equation}
over all $A\in\cX$, and $R_\Lambda = 2\Lambda - I$ be the corresponding reflection. (In \rf(sec:lambda) we show how to implement $R_\Lambda$ efficiently.) The \qggt problem is solved by \rf(alg:groupTesting), where $C$ is some constant to be defined later.

\begin{algorithm}[htb]
\algcaption{Quantum Algorithm for the \qggt problem\\
Input:& access to an oracle $O_f\in \ctX\cup \ctY$ as in~\rf(eqn:cXQuantum) or~\rf(eqn:cYQuantum).\\
Accepts:& iff $O_f\in\ctX$.}
\label{alg:groupTesting}
\enumstart
\item Prepare the state $\ket|0>$.
\item Perform phase estimation on the operator $U=O_fR_\Lambda$ with precision $\delta = 1/(CW)$.
\item Accept if and only if the phase-estimate is greater than $\delta$.
\enumend
\end{algorithm}

\begin{clm}
\label{clm:correct}
\rf(alg:groupTesting) is correct.
\end{clm}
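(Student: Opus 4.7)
\pfsketch
The plan is to analyze the spectrum of $U = O_f R_\Lambda$ at the initial state $\ket|0>$, handling the two cases $O_f\in\ctX$ and $O_f\in\ctY$ separately using the dual-adversary vectors $\psi_A$ from~\rf(eqn:psia) together with a companion vector $\phi_B$ that I construct in parallel for the large case.

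For the \emph{large case} ($O_f\in\ctY$, witness $B\in\cY$), I would set
\[
\phi_B \;=\; \ket|0> \;-\; \frac{1}{\gamma}\sum_{s\ge 1}\beta_s\sum_{\substack{S\subseteq[n],\,|S|=s\\ |S\cap B|=1}}\ket|S>\,.
\]
Since $O_f\ket|0> = -\ket|0>$ and $O_f\ket|S> = -\ket|S>$ whenever $S\cap B\ne\emptyset$, a direct computation yields $O_f\phi_B = -\phi_B$. The adversary feasibility~\rf(eqn:advNewCondition) then gives $\ip<\psi_A|\phi_B> = 1 - \sum_S X_S\elem[A,B] = 0$ for every $A\in\cX$, because the non-zero contributions to the sum $\sum_S X_S\elem[A,B]$ are precisely those $S$ with $S\cap A=\emptyset$ and $|S\cap B|=1$. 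Hence $\phi_B$ lies in the kernel of $\Lambda$, so $R_\Lambda\phi_B=-\phi_B$ and $U\phi_B=\phi_B$; that is, $\phi_B$ is a $+1$ eigenvector of $U$. Using~\rf(eqn:XSBB) and the objective bound from \rf(clm:complexity),
\[
\|\phi_B\|^2 \;=\; 1 + \gamma^{-2}\sum_S X_S\elem[B,B] \;\le\; 1 + W/\gamma^2 \;=\; 1 + 1/C_1^2,
\]
so $\ket|0>$ has squared overlap at least $C_1^2/(C_1^2+1)$ with the $+1$ eigenspace. By~\rf(thm:estimation), phase estimation on $\ket|0>$ then returns an estimate of magnitude at most $\delta$ with probability at least $(9/10)\cdot C_1^2/(C_1^2+1)$, and the algorithm correctly rejects for $C_1$ large enough.

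For the \emph{small case} ($O_f\in\ctX$, witness $A\in\cX$), the vector $\psi_A$ lies in the image of $\Lambda$, so $R_\Lambda\psi_A = \psi_A$. Every basis component of $\psi_A$ is either $\ket|0>$ (on which $O_f$ acts as $-I$) or $\ket|S>$ with $S\cap A=\emptyset$ (on which $O_f$ acts as $+I$), giving the key identity
\[
(U - I)\psi_A \;=\; -2\,\ket|0>.
\]
Pairing with any $+1$ eigenvector of $U$ shows that $\ket|0>$ is orthogonal to the $+1$ eigenspace of $U$. Expanding $\psi_A=\sum_\theta d_\theta v_\theta$ in the eigenbasis of $U$ and using $|1-\ee^{\ii\theta}|^2\le\theta^2$ gives
\[
\|P_{2\delta}\ket|0>\|^2 \;\le\; \delta^2\|\psi_A\|^2.
\]
By~\rf(eqn:XSAA) and \rf(clm:complexity), $\|\psi_A\|^2 = 1 + \gamma^2\sum_S X_S\elem[A,A] \le 1 + C_1^2 W^2$, and with $\delta = 1/(CW)$ this yields $\|P_{2\delta}\ket|0>\|^2 \le 2C_1^2/C^2$. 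A routine bound for phase estimation (the output has magnitude at most $\delta$ only if either the sampled eigenphase has magnitude at most $2\delta$, contributing $\|P_{2\delta}\ket|0>\|^2$, or the estimator errs, contributing at most $1/10$) bounds the probability of falsely rejecting by $2C_1^2/C^2 + 1/10$. Taking $C$ sufficiently large relative to $C_1$ makes this strictly less than any desired constant above $1/10$, so the algorithm correctly accepts.

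The main technical step will be constructing the companion vector $\phi_B$ so that it is simultaneously a $-1$ eigenvector of $O_f$ and orthogonal to every $\psi_A$; the sign and the normalization $-1/\gamma$ are precisely those forced by~\rf(eqn:advNewCondition). A minor subtlety is that although $U$ acts on $\reg I\otimes\reg W$, every basis state appearing in $\psi_A$ or $\phi_B$ corresponds to an $S$ on which $O_{f,S}$ acts as $\pm I$ on $\ket W|0>$, so these vectors stay in $\ket W|0>$ under $U$ and the identities $U\psi_A = \psi_A - 2\ket|0>$ and $U\phi_B = \phi_B$ hold on the nose. Choosing $C_1$ and then $C$ as sufficiently large absolute constants makes both bounded-error guarantees hold simultaneously.
\pfend
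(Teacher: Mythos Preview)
Your proof is correct and follows essentially the same approach as the paper. The large-case vector $\phi_B$ is the paper's vector $u$ (Eq.~\rf(eqn:defu)) divided by $\gamma$, and the orthogonality $\ip<\psi_A,\phi_B>=0$, the eigenvalue-$1$ property, and the overlap bound are identical.

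The one genuine difference is in the small case. The paper invokes the Effective Spectral Gap Lemma (\rf(lem:effective)) with $\Pi_1=I-\Lambda$, $\Pi_2=(I-O_f)/2$, $w=\psi_A$, which is why the paper first arranges for $O_f$ to be a reflection (so that $\Pi_2$ is a projector). You instead derive the key identity $(U-I)\psi_A=-2\ket|0>$ directly and bound $\|P_{2\delta}\ket|0>\|$ by expanding $\psi_A$ in the eigenbasis of the unitary $U$ and using $|1-\ee^{\ii\theta}|\le|\theta|$. This is in effect an inlined proof of the special case of \rf(lem:effective) that is needed here; it is slightly more elementary and, as a small bonus, does not require the extra assumption that $O_f$ be a reflection. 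Conceptually, however, the two arguments are the same: the dual-adversary witness $\psi_A$ certifies that $\ket|0>$ has negligible mass on eigenvectors of $U$ with small phase.
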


\pfstart
Let us first assume that $O_f\in\ctY$.  Let $B\in\cY$ be a\rnote{30/6: was `the'} corresponding element from~\rf(eqn:cYQuantum), so $|B|=k+d$. Define the following vector
\begin{equation}\label{eqn:defu}
u = \gamma \ket|0> - \sum_{s=1}^{n-k-d+1} \beta_s
\sum_{S\subseteq[n] \colon |S|=s,\; |S\cap B|=1} \ket|S>.
\end{equation}
The squared norm of this vector is
\[
\|u\|^2 = \gamma^2 + \sum_{s=1}^{n-k-d+1}(k+d)\binom{n-k-d}{s-1} \beta_s^2=\gamma^2 + \sum_{S\subseteq[n]}X_S\elem[B,B]\leq C_1^2W + W,
\]
where the second equality uses~\rf(eqn:XSBB), and the last inequality uses that the objective value~\rf(eqn:advNewObjective) is~$W$.

We now show that $u$ is an eigenvector of $U=O_fR_\Lambda$ with eigenvalue~1 (so the eigenvalue's phase is~0). First, $O_f u = -u$, because $O_f\ket|S>=-\ket|S>$ for all $S$ occurring in \rf(eqn:defu) and we earlier already assumed $O_f\ket|0> = -\ket|0>$. Second, for all $A\in\cX$ we have
\[
\ip<\psi_A,u> = \gamma - \gamma \sum_{s=1}^{n-k-d+1}\sum_{S\subseteq[n]:|S|=s,S\cap A=\emptyset,|S\cap B|=1} \alpha_s\beta_s = \gamma - \gamma \sum_{S\colon A\cap S=\emptyset \;\mathrm{xor}\; B\cap S=\emptyset} X_S\elem[A,B] = 0,
\]
where we used~\rf(eqn:psiA) and~\rf(eqn:advNewCondition).
Hence, $\Lambda u=0$ and $R_\Lambda u = (2\Lambda-I)u = -u$. Therefore, $Uu=O_fR_\Lambda u=u$.  

Furthermore, the inner product of the normalized eigenvector $u/\|u\|$ and $\ket|0>$ is 
\[
\frac{\gamma}{\|u\|} \geq \frac{C_1\sqrt{W}}{\sqrt{C_1^2W + W}}=\frac{1}{\sqrt{1+1/C_1^2}},
\]
which can be made arbitrarily close to~1 by setting~$C_1$ to a sufficiently large constant. Since $\ket|0>$ is the starting state of \rf(alg:groupTesting), the algorithm will (with probability at least $2/3$ if we set $C_1$ appropriately) produce a phase estimate that is at most $\delta$, and correctly rejects $O_f\in\ctY$.

\medskip

Now assume $O_f\in\ctX$. Let $A\in\cX$ be the corresponding element from~\rf(eqn:cXQuantum), so $|A|=k$.  In this case, we will apply \rf(lem:effective) with $R_1 = -R_\Lambda=I-2\Lambda$, $R_2 = - O_f$ (hence $U=O_fR_\Lambda=R_2R_1$), $\Pi_1 = I - \Lambda$, $\Pi_2=(I-O_f)/2$, and $w = \psi_A$.  Indeed, since $\Lambda w=w$, $w$ lies in the kernel of $\Pi_1$, and we assume $O_f$ is a reflection, so the conditions of the lemma are satisfied.  We have $\Pi_2 w = \ket|0>$ because $O_f\ket|0> = -\ket|0>$, and $O_f\ket|S> = \ket|S>$ for all~$S$ in the support of $\psi_A$.  Also, 
\[
\|\psi_A \|^2 = 1+\gamma^2\sum_{s=1}^{n-k-d+1}\binom{n-k}{s} \alpha_s^2 = 1+\gamma^2W=1+C_1^2W^2.
\] 
Since also $W=\Omega(1)$, we have $\|w\| = O(W)$.
Therefore, using \rf(lem:effective), the algorithm's initial state $\ket|0>$ barely overlaps with eigenvectors of $U=R_2R_1$ whose phase is $(2\delta)$-close to~0:
\[
\|P_{2\delta}\ket|0>\|=\|P_{2\delta} \Pi_2 w \|\leq \delta\|w\|=O(1/C).
\]
Hence the probability that phase estimation erroneously yields an estimate that is $\delta$-close to~0 can be made less than~$1/3$ by choosing~$C$ a sufficiently large constant. Then \rf(alg:groupTesting) accepts all $O_f\in\ctX$ with probability at least~$2/3$. 
\pfend

As $R_\Lambda$ can be implemented without executing the input oracle, the query complexity of \rf(alg:groupTesting) is $O(W)=O(\sqrt{1+k/d})$ by \rf(thm:estimation).
To get the time complexity, the query complexity has to be multiplied by the cost of implementing $U=O_fR_\Lambda$.  
%By convention, the oracle query does not count towards the time complexity of the algorithm,  RdW I removed this, typically an oracle costs 1
In \rf(sec:lambda) we show that $R_\Lambda$ can be implemented in time $\tO(n)$. Thus, \rf(alg:groupTesting) can be implemented in time $\tO(n\sqrt{1+k/d})$.

\subsection{Efficient implementation of \texorpdfstring{$R_\Lambda$}{RLambda}}
\label{sec:lambda}

This section is devoted to the proof of the following lemma, which shows that the reflection $R_\Lambda = 2\Lambda - I$ can be implemented efficiently, in time  $\tO(n)$.
For simplicity we assume $n> 2k$.  This is without loss of generality, as we can extend the set $[n]$ with dummy elements.  Next, we identify $\ket|0>$ of Eq.~\rf(eqn:psia) with $\ket|\emptyset>$, and absorb $\gamma$ into $\alpha_s$.  To state the lemma, it is also more convenient to replace $A$ in Eq.~\rf(eqn:psia) by its complement, $T=[n]\setminus A$.

\begin{lem}
\label{lem:reflection}
Let $\alpha_0, \alpha_1, \ldots, \alpha_{n-k}$ be arbitrary complex numbers and let $\Lambda$ be the projector onto the span of the vectors
\[
\psi_T = 
\sum_{\ell=0}^{n-k} \alpha_\ell
\sum_{B\subseteq T \colon |B|=\ell} \ket|B>
\]
over all $T \subseteq [n]$ with $|T|=n-k$.
Then, the corresponding reflection $R_\Lambda = 2\Lambda - I$ can be implemented in time $\tO(n)$, up to an error in the operator norm that can be made smaller than any inverse polynomial in~$n$.
\end{lem}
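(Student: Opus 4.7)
The projector $\Lambda$ commutes with the natural action of $S_n$ on $M^n$ that permutes basis elements, so my plan is to diagonalize it via a Schur--Weyl-type transform. Specifically, I would construct a unitary $Q\colon M^n \to \bigoplus_{j=0}^{\lfloor n/2\rfloor} V_j\otimes W_j$, where $W_j = S^{(n-j,j)}$ is the Specht module and $V_j$ is the multiplicity space of dimension $n-2j+1$, indexed (for instance) by $\ell\in\{j,j+1,\ldots,n-j\}$ reflecting which layer $M^n_\ell$ contributes that copy of $S^{(n-j,j)}$. Since $M^n_\ell$ contains $S^{(n-j,j)}$ with multiplicity one for $j\le \min(\ell,n-\ell)$, and since the $S_n$-equivariant map $|T\rangle\mapsto \psi_T$ from $M^n_{n-k}$ to $M^n$ is forced by Schur's lemma to embed each summand $S^{(n-j,j)}$ (which exists only for $j\le k$, using $n>2k$) as a one-dimensional subspace of $V_j$ tensored with all of $W_j$, in the new basis $\Lambda$ takes the form $\sum_{j=0}^{k} |u_j\rangle\langle u_j|\otimes I_{W_j}$ with an explicit unit vector $|u_j\rangle\in V_j$. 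The coefficient of $|\ell\rangle$ in $|u_j\rangle$ is, up to normalisation, $\alpha_\ell$ times a scalar Clebsch--Gordan-like factor coming from how the $j$-th irreducible sits inside $M^n_\ell$; these factors are explicit (a ratio of binomials/hook-length quantities) and computable in $O(\mathrm{polylog}\,n)$ arithmetic per entry.

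Given $Q$, the reflection $R_\Lambda$ is implemented in three stages: (i) apply $Q^{-1}$ -- wait, we first have the input in the original basis, so apply $Q$; (ii) in the Schur basis, perform a controlled reflection that, conditioned on the irreducible label $j$, reflects the $V_j$-register about $|u_j\rangle$ if $j\le k$ and applies $-I$ if $j>k$; (iii) apply $Q^{-1}$. Step (ii) is realised by first preparing, controlled on $j$, the state $|u_j\rangle$ from $|0\rangle$ by a unitary $U$, then applying the trivial reflection $2|0\rangle\langle 0| - I$ on the $V_j$-register, then uncomputing $U$; the $j>k$ case costs only a diagonal sign.

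The main obstacle, and the principal content of the proof, is realising $Q$ and the controlled state-preparation $U$ in time $\tilde O(n)$. Following the strategy of Bacon--Chuang--Harrow for the qubit Schur transform, I would build $Q$ incrementally along the subgroup chain $S_1\subset S_2\subset\cdots\subset S_n$ by consuming one element of $[n]$ (i.e.\ one input qubit of $M^n=(\mathbb{C}^2)^{\otimes n}$) at a time and applying the branching decomposition $S^{(m-j,j)}\otimes\mathbb{C}^2 \cong S^{(m+1-j,j)}\oplus S^{(m-j,j+1)}$. Because only two-row partitions occur, the label $j$ lives in $O(\log n)$ qubits and each branching step is a single controlled Givens rotation on those qubits whose angle is an explicit hook-ratio of the current $(m,j)$; summed over the $n$ steps this gives $\tilde O(n)$ gates. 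Controlled preparation of $|u_j\rangle$ proceeds similarly: the coefficients are explicit functions of $(j,\ell)$ and the $\alpha$'s, so standard state-preparation on an $O(\log n)$-qubit register yields a unitary of depth $\tilde O(n)$ even when the target state depends on a coherent label $j$, using the same branching circuitry.

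Finally, since every rotation angle in these circuits is given by a closed-form expression with $\mathrm{poly}(n)$ bits of precision, each elementary gate can be synthesised to accuracy $n^{-\omega(1)}$ at the cost of only a $\mathrm{polylog}\,n$ blow-up, so the overall error in $R_\Lambda$ in operator norm is smaller than any inverse polynomial in $n$, as claimed.
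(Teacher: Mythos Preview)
Your proposal is correct and follows essentially the same approach as the paper: exploit the $S_n$-invariance of $\Lambda$ to pass to a Fourier/Schur--Weyl basis (the paper's ``QFT of $M$'') built incrementally along the subgroup chain via the two-row branching rule \`a la Bacon--Chuang--Harrow, in which $\Lambda$ becomes $\bigoplus_{t\le k} \tilde w_t \tilde w_t^{*}\otimes I_{S(t)}$ for an explicit vector $\tilde w_t$ in the multiplicity register, and then implement $R_\Lambda$ by a controlled reflection about $\tilde w_t$ sandwiched between the transform and its inverse. The paper additionally spells out the exact coefficients $w_t[\ell]=\alpha_\ell\binom{n-\ell-t}{k-t}\sqrt{\binom{n-2t}{\ell-t}}$ and the precise branching angles $\sqrt{(n-\ell-t)/(n-2t)}$, $\sqrt{(\ell-t)/(n-2t)}$, which you would need to fill in, but the architecture is the same.
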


%\onote{5/21: was ``For simplicity, we consider the case when all $\alpha_i$ are real, however, the same result holds when they are complex.''}
%It is easy to see that it suffices to prove the lemma for 
%real numbers $\alpha_0, \alpha_1, \ldots, \alpha_{n-k} \ge 0$, 
%and in the sequel we will assume that this is the case. 

\paragraph{Representation theory background.}
In order to prove \rf(lem:reflection), we will use the structural properties implied by the invariance of the vectors $\psi_T$ under permutations of~$[n]$. We need some basic results from the representation theory of the symmetric group. These results are only used in this section.
The reader may refer to a textbook on the topic such as~\cite{sagan:symmetricGroup}, or to the appendix of~\cite{belovs:learningSymmetricJuntas}, where we briefly formulate the required notions and results.

\mycommand{group}{\bS_n}
\mycommand{algebra}{\group}
Let $\group$ denote the symmetric group on $[n]$.  We consider (left) modules over the group algebra $\bC\group$.  We call them \emph{$\algebra$-modules}; they are also known as \emph{representations} of $\bS_n$.
There is a 1-1 correspondence between irreducible $\bS_n$-modules and partitions $(t_1, \ldots, t_k)$ of $n$ (where $t_1\geq t_2\geq \cdots\geq t_k$ and $t_1+t_2+\cdots+t_k=n$). 
Irreducible $\algebra$-modules are called \emph{Specht modules}.

A linear mapping $\theta\colon V\to W$ between two $\algebra$-modules is called an \emph{$\algebra$-homomorphism} if, for all $\pi\in\algebra$ and $v\in V$, we have $\theta(\pi v) = \pi(\theta(v))$.
The following result is basic for such homomorphisms:

\begin{lem}[Schur's Lemma]
Assume $\theta\colon V\to W$ is an $\algebra$-homomorphism between two irreducible $\algebra$-modules $V$ and $W$. Then, $\theta=0$ if $V$ and $W$ are not isomorphic.  Otherwise, $\theta$ is uniquely defined up to a scalar multiplier.
\end{lem}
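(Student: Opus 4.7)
The plan is to prove both clauses by the standard kernel-image submodule argument, with the only genuinely algebraic input being that $\bC$ is algebraically closed. First I would verify the elementary but essential fact that if $\theta\colon V\to W$ is any $\algebra$-homomorphism, then $\ker\theta\subseteq V$ and $\im\theta\subseteq W$ are $\algebra$-submodules. This follows immediately from the intertwining property $\theta(\pi v)=\pi(\theta v)$: if $\theta v=0$ then $\theta(\pi v)=\pi\cdot 0=0$, and every element of $\im\theta$ is of the form $\theta v$, so $\pi\theta v=\theta(\pi v)\in\im\theta$.

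For the first clause, I would combine this with irreducibility. Since $V$ is irreducible, $\ker\theta\in\{0,V\}$; since $W$ is irreducible, $\im\theta\in\{0,W\}$. If $\theta\ne 0$ then $\ker\theta\ne V$ and $\im\theta\ne 0$, forcing $\ker\theta=0$ and $\im\theta=W$, so $\theta$ is an $\algebra$-module isomorphism. Contrapositively, if $V$ and $W$ are not isomorphic, then $\theta=0$.

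For the second clause, suppose $V\cong W$ and fix any $\algebra$-isomorphism $\phi\colon W\to V$ (which exists by hypothesis). Given two homomorphisms $\theta_1,\theta_2\colon V\to W$, it suffices to show that the space $\mathrm{Hom}_\algebra(V,W)$ is at most one-dimensional, since $\phi^{-1}$ is a nonzero element of it. Equivalently, I show $\mathrm{End}_\algebra(V)$ is one-dimensional. Let $\eta\in\mathrm{End}_\algebra(V)$. Because we work over $\bC$, $\eta$ has an eigenvalue $\lambda\in\bC$; then $\eta-\lambda I$ is again an $\algebra$-endomorphism (the scalar $\lambda I$ trivially commutes with the group action) and has nontrivial kernel (the $\lambda$-eigenspace). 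By the first clause applied to $\eta-\lambda I\colon V\to V$ (or directly by irreducibility of $V$), this kernel must be all of $V$, so $\eta=\lambda I$. Thus $\mathrm{End}_\algebra(V)=\bC\cdot I$, hence $\mathrm{Hom}_\algebra(V,W)=\bC\cdot\phi^{-1}$ is one-dimensional, and $\theta_1,\theta_2$ are scalar multiples of each other.

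There is no real obstacle here; the one subtle point worth flagging is that the ``uniquely defined up to scalar'' conclusion genuinely requires algebraic closure of the base field, entering via the existence of the eigenvalue $\lambda$. Over a non-algebraically-closed field one would only recover that $\mathrm{End}_\algebra(V)$ is a division algebra, not that it equals the base field. Since the paper works exclusively with complex representations, this is automatic and the proof is complete.
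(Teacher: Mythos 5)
Your proof is correct and is the standard textbook argument for Schur's Lemma (kernel/image are submodules, irreducibility forces them to be trivial, and the eigenvalue argument over $\bC$ gives that $\mathrm{End}_{\algebra}(V)=\bC\cdot I$, hence the Hom space is one-dimensional). The paper does not prove this lemma at all --- it states it as a basic known fact and refers to~\cite{sagan:symmetricGroup} --- so there is nothing to compare against; your write-up, including the remark that algebraic closure is where the second clause genuinely uses the base field, is exactly the proof one would supply.
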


Let $M$ denote the complex vector space with the set of subsets of $[n]$ as its orthonormal basis and with the group action $\pi A = \pi(A)$, where $\pi\in\group$, $A\subseteq[n]$, and $\pi(A)$ denotes the  image of the set $A$ under the transformation $\pi$.
We call $\{A\}_{A\subseteq[n]}$ the \emph{standard basis} of $M$.

The module $M$ naturally decomposes into a direct sum $M = \bigoplus_{\ell=0}^n M_\ell$, where $M_\ell$ is spanned by the subsets of cardinality $\ell$.%
\footnote{
In terms of~\cite{sagan:symmetricGroup}, $M_\ell$ is isomorphic to the permutation module corresponding to the partition $(n-\ell',\ell')$ of $n$, where $\ell'=\min\{\ell,n-\ell\}$.}
The following lemma describes the decomposition of $M_\ell$ into irreducible submodules $S_\ell(t)$ (for different values of $t$), which will be isomorphic to the Specht module $S(t)$ corresponding to the partition $(n-t,t)$ of~$n$.%  The Specht modules with different values of $t$ are not isomorphic.

In the formulation of the lemma and later we use $\otimes$ to denote disjoint union of subsets of~$[n]$, extended by linearity, so for example 
$(\{1\}-\{2\})\otimes (\{3\}-\{4\}) = \{1,3\} - \{1,4\} - \{2,3\} +\{2,4\}$.

\begin{lem}
\label{lem:specht}
The $\algebra$-module $M_\ell$ has the following decomposition into irreducible submodules:
\(
M_\ell = \bigoplus_{t=0}^{\ell'} S_\ell(t),
\)
where $\ell'=\min\{\ell,n-\ell\}$, and each $S_\ell(t)$ is isomorphic to $S(t)$.  The submodule $S_\ell(t)$ is spanned by the vectors
\begin{equation}
\label{eqn:vkDef}
v_\ell(t,a,b) = 
\sA[\{a_1\}-\{b_1\}]\otimes \cdots\otimes \sA[\{a_t\}-\{b_t\}] \otimes \sC[\sum_{A\subseteq [n]\setminus\{a_1,\dots,a_t,b_1,\dots,b_t\}\colon |A|=\ell-t} A ]
\end{equation}
defined by disjoint sequences $a=(a_1,\dots,a_t)$ and $b=(b_1,\dots,b_t)$ of pairwise distinct elements of~$[n]$.
The dimension of $S(t)$ is $\binom{n}{t}-\binom{n}{t-1}$.

There is a unique (up to a scalar) $\algebra$-isomorphism between $S_\ell(t)$ and $S_m(t)$. We can choose the scalar so that the isomorphism maps each vector $v_\ell(t,a,b)$ to the corresponding $v_m(t,a,b)$.
\end{lem}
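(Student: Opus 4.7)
The plan is to identify $M_\ell$ with a Young permutation module and invoke the classical decomposition from the representation theory of $\bS_n$. First I would note that for $\ell>n/2$ the standard complementation map $A\mapsto [n]\setminus A$ yields an $\bS_n$-isomorphism $M_\ell\cong M_{n-\ell}$, so I may assume $\ell=\ell'\le n/2$. In this regime, $M_\ell$ is the Young permutation module $M^{(n-\ell,\ell)}$ of tabloids of shape $(n-\ell,\ell)$, which decomposes by Young's rule as $\bigoplus_{t=0}^{\ell} S^{(n-t,t)}$ with each Specht module appearing with multiplicity one. The formula $\dim S^{(n-t,t)}=\binom{n}{t}-\binom{n}{t-1}$ follows either from the hook length formula or directly from Young's rule combined with a telescoping argument, and the identity $\sum_{t=0}^{\ell}\bigl[\binom{n}{t}-\binom{n}{t-1}\bigr]=\binom{n}{\ell}=\dim M_\ell$ confirms that the proposed summands, if indeed submodules, must exhaust $M_\ell$.

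To exhibit each submodule $S_\ell(t)$ concretely, I would introduce the $\bS_n$-homomorphism $\phi_{\ell,t}\colon M_t\to M_\ell$ defined on $t$-subsets $C\subseteq[n]$ by $\phi_{\ell,t}(C)=\sum_{D\supseteq C,\,|D|=\ell}D$ and extended linearly; equivariance is immediate from the definition. Recall that the polytabloid attached to the tableau of shape $(n-t,t)$ with second row $(a_1,\dots,a_t)$ and with $b_i$ sitting above $a_i$ in column $i$ is exactly $v_t(t,a,b)=\bigotimes_i(\{a_i\}-\{b_i\})$, and that by the standard polytabloid theorem these vectors (as $(a,b)$ varies) span $S^{(n-t,t)}\subseteq M_t$. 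I would then verify the key identity $\phi_{\ell,t}\bigl(v_t(t,a,b)\bigr)=v_\ell(t,a,b)$; this is the place where one actually has to compute. Since $\phi_{\ell,t}$ is an $\bS_n$-map, its restriction to the irreducible $S^{(n-t,t)}$ is either zero or an embedding, and the identity above shows it is non-zero, so $S_\ell(t):=\phi_{\ell,t}(S^{(n-t,t)})$ is an irreducible submodule of $M_\ell$ isomorphic to $S(t)$, spanned by the vectors in~\rf(eqn:vkDef).

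The direct sum statement then follows because the $S_\ell(t)$ for different $t\in\{0,\dots,\ell\}$ are pairwise non-isomorphic irreducibles, so their sum is direct, and the dimension count above forces $M_\ell=\bigoplus_{t=0}^{\ell}S_\ell(t)$. Uniqueness of the $\bS_n$-isomorphism between $S_\ell(t)$ and $S_m(t)$ up to scalar is Schur's lemma applied to two copies of the same irreducible. To get the normalization sending $v_\ell(t,a,b)$ to $v_m(t,a,b)$, I would observe that the assignment defined on these generators is manifestly $\bS_n$-equivariant because a permutation $\pi\in\bS_n$ sends $v_\ell(t,a,b)$ to $v_\ell\bigl(t,\pi(a),\pi(b)\bigr)$, and similarly for $v_m$; by Schur's lemma, this $\bS_n$-equivariant map must coincide with the Schur isomorphism up to a single global scalar, which we absorb once and for all.

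The main obstacle will be the explicit verification that $\phi_{\ell,t}(v_t(t,a,b))=v_\ell(t,a,b)$. Expanding the left-hand side, each $\ell$-subset $D$ appears with coefficient $\sum_{S\subseteq D\cap R,\,|S\cap\{a_i,b_i\}|=1}(-1)^{|S\cap\{b_1,\dots,b_t\}|}$, where $R=\{a_1,b_1,\dots,a_t,b_t\}$; the inner sum vanishes unless $D\cap R$ already picks exactly one element from each pair, in which case it equals a single signed term, and the signed sum over all such $D$ is precisely $v_\ell(t,a,b)$ as in~\rf(eqn:vkDef). This is elementary but requires careful bookkeeping with the disjointness and cardinality constraints; once it is checked, the rest of the proof consists of standard invocations of Young's rule and Schur's lemma.
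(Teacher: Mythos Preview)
Your proposal is correct and is exactly the standard argument the paper is pointing to: the paper does not prove \rf(lem:specht) itself but simply cites Sagan \cite[Sections~2.9--2.10]{sagan:symmetricGroup} and the appendix of~\cite{belovs:learningSymmetricJuntas}, and adds that Remark~\ref{rem:iso} (based on the computation in \rf(clm:LambdaDecompose)) establishes the isomorphism part. Your route via Young's rule, the polytabloid description of $S^{(n-t,t)}$, and the inclusion maps $\phi_{\ell,t}\colon C\mapsto\sum_{D\supseteq C,\,|D|=\ell}D$ is precisely that standard argument, and your verification that $\phi_{\ell,t}(v_t(t,a,b))=v_\ell(t,a,b)$ is the right computation. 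The paper's Remark~\ref{rem:iso} instead uses the map $T\mapsto\psi_T$ from $M_{n-k}$ to $M$ to exhibit the linearity of $v_{n-k}(t,a,b)\mapsto v_\ell(t,a,b)$; this is a different equivariant map but serves the same purpose as your $\phi_{\ell,t}$, and your choice is the cleaner one for a stand-alone proof.

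One small clarification is worth making in your final paragraph. Saying that the ``assignment on generators'' $v_\ell(t,a,b)\mapsto v_m(t,a,b)$ is $\bS_n$-equivariant does not by itself give a well-defined linear map, since the $v_\ell(t,a,b)$ are not linearly independent. But you have already done the necessary work: simply take the composition $\phi_{m,t}\circ\bigl(\phi_{\ell,t}|_{S_t(t)}\bigr)^{-1}$, which is a bona fide $\bS_n$-isomorphism $S_\ell(t)\to S_m(t)$ sending each $v_\ell(t,a,b)$ to $v_m(t,a,b)$ by your key identity. Uniqueness up to scalar is then Schur's lemma, as you say.
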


The lemma follows from general theory~\cite[Sections 2.9 and 2.10]{sagan:symmetricGroup}.  
The Appendix of~\cite{belovs:learningSymmetricJuntas} contains a short proof, see also \rf(rem:iso) below.
\rf(fig:subspaces) depicts the different subspaces involved in the decomposition of~$M$.

\begin{figure}[htb]
%$\begin{array}{ccccl}
  %&             &        &             & \vdots\\
  %&             &        & \nearrow    & \\
  %&             &        &             & \\
  %&             & M_0    &             & S_\ell(0)\\
  %& \nearrow    & \vdots & \nearrow    & \vdots\\
%M & \rightarrow & M_\ell  & \rightarrow & S_\ell(t), \mbox{ with orthonormal basis }\{e_\ell(t,x)\}_x,~~S_\ell(t)\cong S(t))\\
  %& \searrow    & \vdots & \searrow    & \vdots\\
  %&             & M_n    &             & S_\ell(\ell'), \mbox{ where }\ell'=\min\{\ell,n-\ell\}\\
  %&             &        &             & \\
  %&             &        & \searrow    & \\
  %&             &        &             & \vdots
%\end{array}$
%\caption{Decomposition of $M$ into irreducible modules $S_\ell(t)$}\label{fig:subspaces}
\begin{center}
\includegraphics[width=0.5 \textwidth]{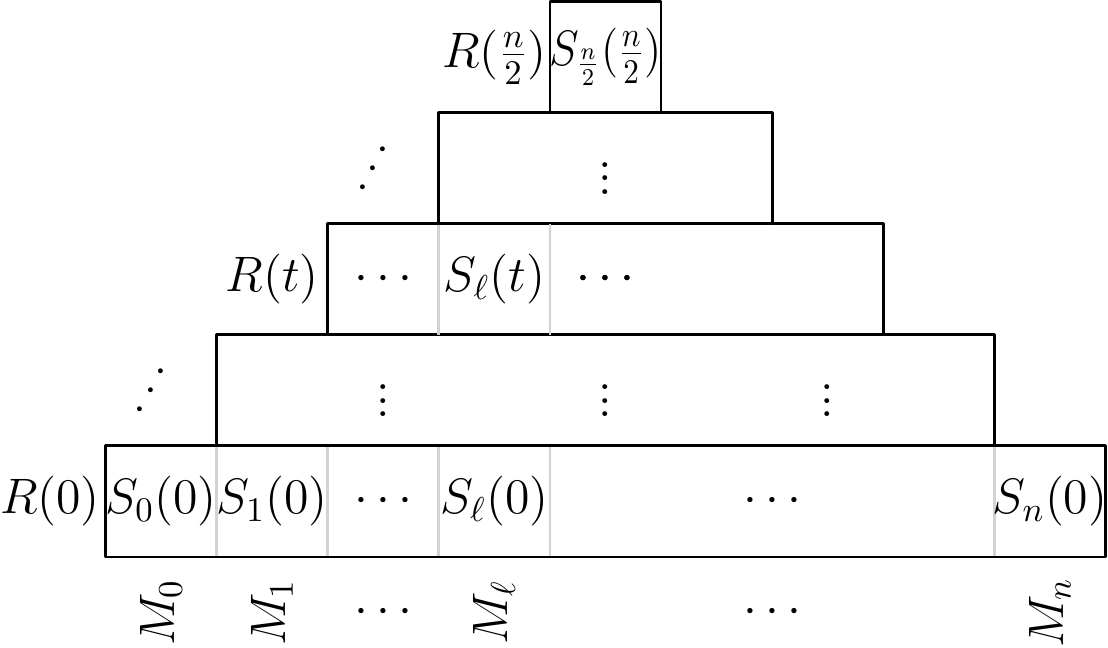}
\end{center}
\caption{Decomposition of $M$}\label{fig:subspaces}
\end{figure}

\mycommand{tv}{\tilde v}
Let $\tv_\ell(t,a,b)$ denote the normalized vector $v_\ell(t,a,b)/\|v_\ell(t,a,b)\|$, 
that is 
\begin{equation}
\label{eqn:normalize}
v_\ell(t,a,b) = \sqrt{2^t\binom{n-2t}{\ell-t}}\; \tv_\ell(t,a,b).
\end{equation}
Also, let $\vartheta_{t\to\ell}\colon S_t(t)\to S_\ell(t)$ denote the isomorphism from \rf(lem:specht) given by $\tv_t(t,a,b)\mapsto \tv_\ell(t,a,b)$.  This is a unitary transformation.

For each $t$, we choose an orthonormal basis $\{e_t(t,x)\}_x$ of $S_t(t)$. Also, let $e_\ell(t,x) = \vartheta_{t\to\ell}\; e_t(t,x)$, so that $\{e_\ell(t,x)\}_x$ is an orthonormal basis of $S_\ell(t)$.
The precise choice of the basis of $S_t(t)$ is irrelevant, but it is important that the bases of $S_\ell(t)$ for various $\ell$ are synchronized via the isomorphism $\vartheta_{t\to\ell}$.
The set $\{e_\ell(t,x)\}_{\ell,t,x}$ forms an orthonormal basis of $M$, which we call the \emph{Fourier basis}. Let $R(t)$ denote the submodule $\bigoplus_{\ell=t}^{n-t} S_\ell(t)$ of $M$.

\begin{clm}
\label{clm:basis}
In the Fourier basis, any $\algebra$-homomorphism from $M$ to itself is of the form 
$\bigoplus_{t=0}^{\floor[n/2]} A_t\otimes I_{S(t)}$, where $A_t\otimes I_{S(t)}$ acts on $R(t)$, $A_t$ is an $(n-2t+1)\times(n-2t+1)$ matrix, and $I_{S(t)}$ denotes the identity operator on $S(t)$.
\bnote{Removed $\reg L$ and $\reg B$ are not defined here: That is, in the Fourier basis, $A_t$ acts on the register $\reg L$, and $I_{S(t)}$ acts on $\reg B$.}
\end{clm}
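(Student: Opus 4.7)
The plan is to derive the structural form directly from Schur's Lemma applied to the decomposition of $M$ given in \rf(lem:specht). First, recall that $M = \bigoplus_{\ell=0}^{n} M_\ell$ and $M_\ell = \bigoplus_{t=0}^{\min\{\ell,n-\ell\}} S_\ell(t)$, so that
\[
M \;=\; \bigoplus_{t=0}^{\floor[n/2]} R(t), \qquad R(t) \;=\; \bigoplus_{\ell=t}^{n-t} S_\ell(t),
\]
where every $S_\ell(t)$ is isomorphic to the Specht module $S(t)$, and distinct values of $t$ correspond to pairwise non-isomorphic irreducibles.

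Let $\theta\colon M \to M$ be an arbitrary $\algebra$-homomorphism. Composing $\theta$ with the inclusions and projections for the summands $S_\ell(t)$ and $S_m(t')$ gives, for each quadruple $(\ell,t,m,t')$, an $\algebra$-homomorphism $\theta_{\ell,t}^{m,t'}\colon S_\ell(t) \to S_m(t')$. By Schur's Lemma, $\theta_{\ell,t}^{m,t'} = 0$ whenever $t \ne t'$, so $\theta$ preserves each $R(t)$ and we may write $\theta = \bigoplus_t \theta|_{R(t)}$. Within a fixed $R(t)$, Schur's Lemma further tells us that each $\theta_{\ell,t}^{m,t}\colon S_\ell(t) \to S_m(t)$ is a scalar multiple of the canonical isomorphism; the natural choice is $\vartheta_{t\to m}\circ \vartheta_{t\to \ell}^{-1}$ supplied by \rf(lem:specht). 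Denote this scalar by $A_t\elem[m,\ell]$; this defines a matrix $A_t$ of size $(n-2t+1) \times (n-2t+1)$, since $\ell$ ranges over $\{t,t+1,\dots,n-t\}$.

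Finally, I would observe that in the Fourier basis the synchronization built into the definition $e_\ell(t,x) = \vartheta_{t\to\ell}\,e_t(t,x)$ is precisely what turns the description above into the claimed tensor-product form. Indeed, on the basis vector $e_\ell(t,x)$, the map $\vartheta_{t\to m}\circ \vartheta_{t\to\ell}^{-1}$ sends $e_\ell(t,x)$ to $e_m(t,x)$; hence
\[
\theta|_{R(t)}\; e_\ell(t,x) \;=\; \sum_{m=t}^{n-t} A_t\elem[m,\ell]\, e_m(t,x),
\]
which is exactly the action of $A_t \otimes I_{S(t)}$ when one identifies $R(t)$ with $\mathbb{C}^{n-2t+1} \otimes S(t)$ via $e_\ell(t,x) \leftrightarrow \ket|\ell> \otimes \ket|x>$. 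Assembling the blocks over all $t \in \{0,\dots,\floor[n/2]\}$ yields $\theta = \bigoplus_t A_t \otimes I_{S(t)}$, which is the claimed form. The only subtlety worth double-checking is that the isomorphisms $\vartheta_{t\to\ell}$ of \rf(lem:specht) really are chosen compatibly (i.e.\ $\vartheta_{t\to m} = \vartheta_{\ell\to m}\circ \vartheta_{t\to\ell}$), which is guaranteed by their common normalization $\tv_\ell(t,a,b)\mapsto \tv_m(t,a,b)$, so no further work is needed.
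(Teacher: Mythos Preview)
Your proposal is correct and follows essentially the same route as the paper: apply Schur's Lemma once to see that different isotypic components $R(t)$ are preserved, and again within each $R(t)$ to see that every component map $S_\ell(t)\to S_m(t)$ is a scalar multiple of the canonical isomorphism, which the synchronized Fourier basis turns into $A_t\otimes I_{S(t)}$. Your write-up is a bit more explicit about defining the entries $A_t\elem[m,\ell]$ and checking the compatibility of the $\vartheta_{t\to\ell}$, but the argument is the same.
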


\pfstart
By Schur's lemma, any $\algebra$-homomorphism maps each irreducible module to an isomorphic one.  Hence, each $R(t)$ is mapped to itself.  
Also, as $\vartheta_{t\to\ell}$ is the only isomorphism between $S_t(t)$ and $S_\ell(t)$, we see by Schur's lemma that a vector $e_\ell(t,x)$ is mapped into a linear combination of the vectors $\{e_m(t,x)\}_m$.  Thus, in $R(t)$, the homomorphism has the form $A_t\otimes I_{S(t)}$ for some~$A_t$.
\pfend

\paragraph{The quantum Fourier transform of $M$.}
Let the register $\reg A$ have $M$ as its vector space with the standard basis.
Let also $\reg T$ and $\reg L$ be $(n+1)$-qudits (i.e., registers of dimension $n+1$), and let $n$-qubit register $\reg B$ store indices $x$ of the Fourier basis elements $e_\ell(t,x)$.
The \emph{Fourier transform} of $M$ is the following unitary map, for which it is convenient to use ket notation:
\begin{equation}
\label{eqn:QFT}
F\colon \ket T|t>\ket L|\ell>\ket B|x> \mapsto \ket A|e_\ell(t,x)>.
\end{equation}
Note that while $F$ is a unitary from a $2^n$-dimensional space to a $2^n$-dimensional space, it is convenient to use more than~$n$ qubits to represent the basis states (namely $n+2\lceil\log(n+1)\rceil$ qubits).

The Fourier transform segregates copies of non-isomorphic Specht modules of $M$ by assigning them different values of~$t$ in the register~$\reg T$.  For each Specht module $S(t)$, the copies are labeled by~$\ell$, which is the size (as a subset of~$[n]$) of the standard basis elements of~$M$ used by the copy.  Finally, the basis elements of $S(t)$ are indexed by $x$, the precise choice of which is irrelevant for our application.%
\footnote{%
For readers familiar with the usual formula of the QFT over $\bS_n$, given by
\( 
\ket|g>  \mapsto \sum_{\rho} \sqrt{\frac{d_\rho}{|G|}} \ket|\rho> \sum_{i,j} \rho(g)_{ij}\ket|i>\ket|j>,
\)
(where $g\in\bS_n$, $\rho$ runs over all irreps of $\bS_n$, and $d_\rho$ is the dimension of $\rho$):
the register $\reg T$ in our algorithm is similar to $\rho$, $\reg L$ is similar to $i$, and $\reg B$ to $j$.  The formula, however, does not exactly apply in our case, because we are performing QFT of the module $M$, not the regular $\bS_n$-module, for which the latter formula applies. In other words, our QFT acts on the $2^n$-dimensional space spanned by the subsets of $[n]$, not the $n!$-dimensional space spanned by all permutations.}
The next theorem, which we prove in \rf(sec:QFT), shows our QFT can be implemented efficiently.

%For each $t$, we will choose an orthonormal basis $\{e_t(t,x)\}_{x\in[\dim S_t(t)]}$.
%Also, let $e_\ell(t,x) = \vartheta_{t\to\ell}\; e_t(t,x)$.
%The set $\{e_\ell(t,x)\}_{\ell,t,x}$ forms an orthonormal basis of $M$, which we call the \emph{Fourier basis}; this will be a basis in which \rf(clm:LambdaDecompose) holds. The following theorem shows there is a very efficient algorithm to implement the corresponding QFT (and hence its inverse as well). 

\begin{thm}
\label{thm:qft}
There exists a quantum algorithm with time complexity $\tO(n)$ that implements the map from~\rf(eqn:QFT) for all choices of $t$, $\ell$ and $x$ for which the last expression is defined, up to an error in the operator norm that can be made smaller than any inverse polynomial in~$n$.
\end{thm}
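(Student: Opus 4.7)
The plan is to identify the module $M$ with the $n$-qubit space $(\bC^2)^{\otimes n}$ via the bijection $A\mapsto \bigotimes_{i=1}^n \ket|[i\in A]>$, under which $\bS_n$ acts by permuting tensor factors. Under this identification, the decomposition of \rf(lem:specht) is precisely the Schur--Weyl decomposition $(\bC^2)^{\otimes n}\cong\bigoplus_J V_J\otimes W_J$: the copy $S_\ell(t)$ corresponds to the weight space with $S_z$-eigenvalue $m=\ell-n/2$ of the spin-$J$ representation for $J=n/2-t$, tensored with the multiplicity space $W_J\cong S(t)$. In this correspondence, the $SU(2)$ lowering operator $J_-$ (restricted to a fixed $W_J$ and rescaled) implements an $\bS_n$-isomorphism between weight spaces; by Schur's Lemma the $\vartheta_{t\to\ell}$ of \rf(lem:specht) must agree with this operator, up to a choice of basis $\{e_t(t,x)\}_x$ of $S(t)$ which we are free to make.

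With this reduction, the map $F$ of \rf(eqn:QFT) is (the inverse of) the Schur transform on qubits, for which Bacon, Chuang and Harrow \cite{bch:prl04,bch:soda07} give an efficient implementation. The idea is to process the qubits one at a time: maintain the state in a basis $\ket|J,m,p>$, where $(J,m)$ are the spin and $z$-component obtained from the first $i$ qubits, and $p=(J_1,J_2,\dots,J_i)$ is the ``path'' of intermediate spin values---this path serves simultaneously as a label for a basis of the multiplicity space. Adding the $(i{+}1)$-st qubit invokes the Clebsch--Gordan decomposition $V_J\otimes\bC^2\cong V_{J-1/2}\oplus V_{J+1/2}$, a single $2\times 2$ rotation whose coefficients depend only on $(J,m)$.

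I would implement each Clebsch--Gordan step by classically computing the two CG coefficients to precision $1/\poly(n)$ (standard arithmetic circuitry takes $\polylog(n)$ time), writing them into an ancilla, and applying a controlled two-level rotation on the pair of basis states $\ket|J\pm1/2, m+1/2>$ sharing the same prefix of $p$; the ancilla is then uncomputed. Each step therefore costs $\polylog(n)$ gates, yielding total time $\tO(n)$ over the $n$ additions, with total error at most $n\cdot 1/\poly(n)$, which can be made smaller than any fixed inverse polynomial. After all $n$ qubits are absorbed, a classical postprocessing step in the label registers converts $(J,m)$ to $(t,\ell)=(n/2-J,\,m+n/2)$ and repacks the path $p$ into the $\reg B$ register to produce the $\ket T|t>\ket L|\ell>\ket B|x>$ layout of \rf(eqn:QFT); this costs another $\tO(n)$ gates.

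The main obstacle I foresee is verifying that the basis produced by the iterative Clebsch--Gordan procedure indeed realizes the isomorphisms $\vartheta_{t\to\ell}$ prescribed by \rf(lem:specht), rather than some other unitary between isomorphic copies. Schur's Lemma reduces this to a single check on one basis vector per Specht module: it suffices to confirm that, for every path $p$ ending at intermediate spin $J$, the iteratively produced states at different weights $m$ are related by a fixed positive multiple of $J_-$, which is immediate from the definition of the CG coefficients used and matches the combinatorial formula \rf(eqn:vkDef) up to a global scalar that is absorbed into the (arbitrary) choice of basis $\{e_t(t,x)\}_x$. Everything else---precision analysis and the $\tO(n)$ gate count---is a routine bookkeeping exercise once the CG primitive is in place.
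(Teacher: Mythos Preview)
Your proposal is correct and is essentially the same approach as the paper's: the paper itself states that its algorithm ``is essentially the same as the quantum Schur--Weyl transform in~\cite{bch:prl04,bch:soda07},'' and its Algorithm~3 processes the qubits one at a time via a $2\times2$ rotation whose entries $\sqrt{(n-\ell-t)/(n-2t)}$ and $\sqrt{(\ell-t)/(n-2t)}$ are precisely the Clebsch--Gordan coefficients you invoke, with your $(J,m)=(n/2-t,\,\ell-n/2)$. The only cosmetic difference is that the paper phrases everything in the $\bS_n$-module language (branching rule for Specht modules) rather than the $SU(2)$ language, and it proves directly the identity you flag as the ``main obstacle'' (\rf(clm:QFTe)) by an explicit combinatorial computation on the vectors $v_\ell(t,a,b)$ instead of appealing to $J_-$ and Schur's Lemma.
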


\paragraph{Decomposing $\Lambda$ in terms of representations.}
The main observation behind our implementation of $R_\Lambda$ is that $\Lambda$ is invariant under the permutation of elements, hence it is an $\algebra$-homomorphism from $M$ to itself, and thus subject to the decomposition of \rf(clm:basis).  In fact, it is not hard to obtain the matrices $A_t$ in this decomposition.

\mycommand{tw}{\tilde w}
\begin{clm}
\label{clm:LambdaDecompose}
The image of $\Lambda$ contains at most one copy of each $S(t)$ for $t=0,\dots,k$.
In the Fourier basis, $\Lambda$ has the following form: $\bigoplus_{t=0}^k (\tw_t\tw_t^*)\otimes I_{S(t)}$, where $\tw_t$ is the normalized version of the $(n-2t+1)$-dimensional vector $w_t$ that is given by
\begin{equation}
\label{eqn:wtell}
w_t\elem[\ell] = \alpha_\ell \binom{n-\ell-t}{k-t} \sqrt{\binom{n-2t}{\ell-t}}\;,
\end{equation}
for $\ell \in \{t,\dots, n-k\}$, and $w_t\elem[\ell]=0$ for $\ell \in \{n-k+1,\dots, n-t\}$. (If $w_t$ is the 0-vector, we assume that $\tw_t$ is the 0-vector as well.)
% and the direct sum is over all $t\in\{0, 1, \ldots, k\}$ for which $w_t\neq \overrightarrow{0}$.
\end{clm}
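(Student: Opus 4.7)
\medskip

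\noindent\textbf{Proof plan for \rf(clm:LambdaDecompose).}

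The plan is to use representation-theoretic machinery. First, observe that $\Lambda$ is $\bS_n$\nobreakdash-equivariant: for $\pi\in\bS_n$, the permutation $\pi$ sends $\psi_T$ to $\psi_{\pi(T)}$, so it preserves the span of $\{\psi_T\}$ and hence commutes with the projection $\Lambda$. Thus $\Lambda$ is an $\bS_n$-homomorphism from $M$ to itself, and \rf(clm:basis) gives it the form $\bigoplus_{t=0}^{\floor[n/2]} A_t\otimes I_{S(t)}$ in the Fourier basis. The image of $\Lambda$ coincides with the image of the $\bS_n$-equivariant map $\Phi\colon M_{n-k}\to M$ defined by $\Phi(\ket|T>)=\psi_T$. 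Since we assumed $n>2k$, \rf(lem:specht) tells us that $M_{n-k}=\bigoplus_{t=0}^{k} S_{n-k}(t)$, with each Specht module $S(t)$ appearing exactly once. Consequently, the image of $\Phi$ contains at most one copy of each $S(t)$, which is the first statement of the claim and also shows that each $A_t$ has rank at most~$1$ and is supported on $t\le k$; write $A_t=\tw_t\tw_t^*$ for a (possibly zero) vector $\tw_t$.

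Next I will identify $\tw_t$, up to normalization. I analyze the map $\sigma_\ell\colon M_{n-k}\to M_\ell$ sending $\ket|T>\mapsto\sum_{B\subseteq T,\,|B|=\ell}\ket|B>$, which is clearly $\bS_n$-equivariant. By Schur's lemma, $\sigma_\ell$ restricted to $S_{n-k}(t)$ is a scalar $\mu_{t,\ell}$ times $\vartheta_{(n-k)\to\ell}$, and it vanishes on isotypic components not present in $M_\ell$. Writing $\ket|T>=\sum_{t'}\xi_T^{(t')}$ for its isotypic decomposition in $M_{n-k}$, this gives
\[
\Pi_{S_\ell(t)}\s[\psi_T]=\alpha_\ell\,\Pi_{S_\ell(t)}\s[\sigma_\ell\s(\ket|T>)]=\alpha_\ell\mu_{t,\ell}\vartheta_{(n-k)\to\ell}\s(\xi_T^{(t)}).
\]
Since $\vartheta_{(n-k)\to\ell}=\vartheta_{t\to\ell}\circ\vartheta_{t\to(n-k)}^{-1}$ sends the basis element $e_{n-k}(t,x)$ to $e_\ell(t,x)$, this action is the identity on the $S(t)$-factor (register $\reg B$) and a relabeling on the multiplicity index (register $\reg L$). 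Summing over $\ell$ and collecting Fourier coefficients, $\Pi_{R(t)}(\psi_T)$ factors as $u_t\otimes\xi_T^{(t)}$, where $u_t\elem[\ell]=\alpha_\ell\mu_{t,\ell}$, with $u_t\elem[\ell]=0$ automatically for $\ell>n-k$ (because $\psi_T$ has no $M_\ell$-component there) and for $\ell<t$ (because $S_\ell(t)$ does not exist then). As $T$ varies, $\{\xi_T^{(t)}\}$ spans all of $S_{n-k}(t)\cong S(t)$, so the image of $\Lambda$ in $R(t)$ equals $\linspan(u_t)\otimes S(t)$, making $\tw_t$ the normalized version of $u_t$.

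It remains to compute $\mu_{t,\ell}$ and verify it matches~\rf(eqn:wtell) up to an $\ell$-independent factor. I evaluate $\sigma_\ell$ on the explicit vector $v_{n-k}(t,a,b)\in S_{n-k}(t)$ from~\rf(eqn:vkDef) and compare coefficients with $v_\ell(t,a,b)$. Taking the particular basis element $B=\{a_1,\dots,a_t\}\cup A_0$ for some $A_0\subseteq[n]\setminus\{a,b\}$ of size $\ell-t$, its coefficient in $v_\ell(t,a,b)$ is $1$, while its coefficient in $\sigma_\ell(v_{n-k}(t,a,b))$ counts the number of $T$ of the form $\{a_i\}_{i\notin I}\cup\{b_i\}_{i\in I}\cup A$ containing $B$; only $I=\emptyset$ contributes, and then one counts completions of $A_0$ inside $[n]\setminus(\{a,b\}\cup A_0)$ to a set of size $n-k-t$, giving $\binom{n-t-\ell}{k-t}$. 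Combining this with the normalization factor $\|v_\ell(t,a,b)\|/\|v_{n-k}(t,a,b)\|$ from~\rf(eqn:normalize) yields
\[
\mu_{t,\ell}=\binom{n-t-\ell}{k-t}\sqrt{\frac{\binom{n-2t}{\ell-t}}{\binom{n-2t}{n-k-t}}}.
\]
Therefore $u_t\elem[\ell]=\alpha_\ell\mu_{t,\ell}$ is a constant (in~$\ell$) multiple of the $w_t\elem[\ell]$ in~\rf(eqn:wtell), so after normalization they agree, proving the claim.

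The bookkeeping around which specific isomorphism $\vartheta_{(n-k)\to\ell}$ is used, and tracking all normalization constants so that the Fourier-basis expression really is $\tw_t\tw_t^*\otimes I_{S(t)}$ (rather than something twisted by the normalization of $S_{n-k}(t)$ inside $M_{n-k}$), is the part that requires the most care; everything else is routine once \rf(lem:specht) is in hand.
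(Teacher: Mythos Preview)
Your proposal is correct and follows essentially the same approach as the paper: both use the $\bS_n$-equivariance of $\Lambda$, identify its image with that of the map $T\mapsto\psi_T$ from $M_{n-k}$, and then compute the action on the explicit vectors $v_{n-k}(t,a,b)$ to read off the coefficients of $w_t$. The combinatorial count $\binom{n-\ell-t}{k-t}$ you obtain is exactly the one appearing in the paper's equation for $u$.

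One small difference worth noting: the paper first assumes $\alpha_{n-k}\ne 0$ (so that the projection of $\im\Lambda$ onto $M_{n-k}$ is surjective, forcing exactly one copy of each $S(t)$ by a dimension count) and then handles $\alpha_{n-k}=0$ by a limiting argument. Your route, arguing that $\im\Lambda=\im\Phi$ with $\Phi$ having domain $M_{n-k}$, gives ``at most one copy'' uniformly and avoids this case split; conversely, the paper verifies the full identity $\sigma_\ell(v_{n-k}(t,a,b))=\binom{n-\ell-t}{k-t}\,v_\ell(t,a,b)$ directly on every basis element rather than invoking Schur's lemma plus a single-coefficient comparison. These are presentational variations, not substantive ones.
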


\pfstart
%\onote{5/21: old text:
%We first assume that $\alpha_{n-k}\ne 0$.  In this case, $\Lambda$ contains exactly one copy of each $S(t)$ with $0\le t\le k$:  The restrictions of the vectors $\psi_T$ to $M_{n-k}$ span the whole module $M_{n-k}$, and as we know from \rf(lem:specht), the latter does contain a copy of each $S(t)$. Since $S(t)$ has dimension $\binom{n}{t}-\binom{n}{t-1}$, the direct sum of these copies of $S(t)$ already has dimension $\sum_{t=0}^k{n\choose t}-{n\choose t-1}={n\choose k}$. The span $\Lambda$ of the vectors $\psi_T$ has dimension at most ${n\choose k}$, hence $\Lambda$ cannot contain more than one copy of any of these~$S(t)$.
%%
%Because $\Lambda$ is an $\algebra$-homomorphism, by \rf(clm:basis), $\Lambda$ has the form $\bigoplus_{t} A_t\otimes I_{S(t)}$ in the Fourier basis.  From the last paragraph, for each $t\le k$, $A_t$ is either a rank-1 projector (if $\Lambda$ contains $S(t)$), or the zero matrix.  Thus, $\Lambda$ actually has the form $\bigoplus_{t=0}^k (\tw_t\tw_t^*)\otimes I_{S(t)}$ for some vectors $\tw_t$.
%}
%
%\onote{5/21: new paragraph:}
We first assume that $\alpha_{n-k}\ne 0$.  
Because $\Lambda$ is an $\algebra$-homomorphism, by \rf(clm:basis), $\Lambda$ has the form $\bigoplus_{t} A_t\otimes I_{S(t)}$ in the Fourier basis.  
We claim that $\Lambda$ contains exactly one copy of each $S(t)$ with $0\le t\le k$, i.e., 
all $A_t$ are rank-1 projectors: indeed, the projection of $\Lambda$ on $M_{n-k}$ is surjective, 
and as we know from \rf(lem:specht), the latter does contain a copy of each $S(t)$. Since $S(t)$ has dimension $\binom{n}{t}-\binom{n}{t-1}$, the direct sum of these copies of $S(t)$ already has dimension $\sum_{t=0}^k \s[\binom nt-\binom n{t-1}]=\binom nk$. 
On the other hand, $\Lambda$ clearly has dimension at most $\binom nk$, hence $\Lambda$ cannot contain more than one copy of any of these~$S(t)$.
Thus, $\Lambda$ actually has the form $\bigoplus_{t=0}^k (\tw_t\tw_t^*)\otimes I_{S(t)}$ for some vectors $\tw_t$.

It remains to find the coefficients of $\tw_t$.  Take a vector $v_{n-k}(t,a,b)\in M_{n-k}$ for some sequences $a$ and $b$, and act on it with the linear transformation that maps a basis vector $T\in M_{n-k}$ to the vector $\psi_T\in M$.
The resulting vector $u$ is clearly in the image of $\Lambda$.  We claim that
\begin{equation}
\label{eqn:LambdaDecompose1}
u = \sum_{\ell=t}^{n-k} \alpha_\ell \binom{n-\ell-t}{k-t} v_\ell(t,a,b) = \sum_{\ell=t}^{n-k} \alpha_\ell \binom{n-\ell-t}{k-t} \sqrt{2^t\binom{n-2t}{\ell-t}} \tv_\ell(t,a,b).
\end{equation}
%\onote{5/21: it's unusual to have a proof environment without a claim environment attached; I can live with it}
%\bnote{I don't think that a proof environment is called for here.}
To prove the first equality of \rf(eqn:LambdaDecompose1), consider a basis element $A\in M_\ell$ for an $\ell \in\{t, \ldots, n-k\}$ and look at its coefficient in~$u$.  
If $A$ contains both $a_i$ and $b_i$ for some $i$, it appears in none of the $\psi_T$ of which $u$ consists.
If $A$ avoids both $a_i$ and $b_i$ for some $i$, then any coefficient it gets from $\psi_T$ is cancelled by the coefficient it gets from $\psi_{T\bigtriangleup \{a_i,b_i\}}$, where $\bigtriangleup$ stands for symmetric difference.
Finally, if $A$ uses exactly one of each $a_i,b_i$, it appears in $\psi_T$ for $\binom{n-\ell-t}{k-t}$ choices of $T$, with coefficient equal to its coefficient in $v_\ell(t,a,b)$ times $\alpha_\ell$ in each. This establishes the first equality of~(\ref{eqn:LambdaDecompose1}). The second equality in~\rf(eqn:LambdaDecompose1) follows immediately from~\rf(eqn:normalize).
\medskip

Now let $v=(v_x)$ be the vector of coefficients of the representation of $\tv_t(t,a,b)$ in the basis $\{e_t(t,x)\}_x$, i.e., $\tv_t(t,a,b)=\sum_x v_x e_t(t,x) = F\sA[ \ket T|t>\ket L|t>\ket B|v> ]$.
Then, by our choice of orthonormal basis, 
$F^{-1}(\tv_\ell(t,a,b))=\ket T|t>\ket L|\ell>\ket B|v>$
for $\ell \in \{t,\dots, n-k\}$, 
and from~\rf(eqn:LambdaDecompose1), we have
\[
F^{-1}\s[u/\sqrt{2^t}] = \sum_{\ell=t}^{n-k} \alpha_\ell \binom{n-\ell-t}{k-t} \sqrt{\binom{n-2t}{\ell-t}} F^{-1}(\tv_\ell(t,a,b))=\ket T|t>\ket L|w_t>\ket B|v>,
\]
where $w_t$ is defined in~\rf(eqn:wtell).
As $u$ is in the image of $\Lambda$ and $F^{-1}u$ is $u$ represented in the Fourier basis, $w_t$ must be proportional to $\tw_t$.

%The $\algebra$-homomorphisms between different copies of $S(t)$ map 
%$\tv_\ell(t,a,b)$ to $\tv_{\ell'}(t,a,b)$ with different $\ell'$ but the same $t, a, b$.
%Therefore, in the basis of \rf(clm:basis), the right-hand side\rnote{fix this} can be expressed as
%$w_t\otimes v$ for some $v$. Taking the span of such vectors for all choices of $a, b$ gives
%$w_t\otimes S(t)$. 
%he coefficients of $w_t$ are the coefficients of $u/\sqrt{2^t}$ in the basis $\{\tv_\ell(t,a,b)\}_\ell$.
%To get $\tilde w_t\elem[\ell]$, note that $\|v_\ell(t,a,b)\| = \sqrt{2^t{n-2t\choose \ell-t}}$, and also all the coefficients can be simultaneously divided by $\sqrt{2^t}$ (that is, $u$ scaled down by $\sqrt{2^t}$). \rnote{28/3: you mean as integers, inside the square-root?} \bnote{29/3: I don't get it.}

Now consider the case $\alpha_{n-k}=0$.  In this case, change $\alpha_{n-k}$ to an arbitrary non-zero value, and perform the above calculations for the resulting projector $\Lambda'$.
The result follows from the observation that the image of $\Lambda$ is a projection of the image of $\Lambda'$ onto $\bigoplus_{\ell<n-k} M_\ell$.
\pfend

\begin{rem}
\label{rem:iso}
Note that~\rf(eqn:LambdaDecompose1) essentially proves the second part of \rf(lem:specht).
Indeed, it shows that the transformation $v_{n-k}(t,a,b)\mapsto v_\ell(t,a,b)$ is linear (which is not obvious from~\rf(eqn:vkDef)).  It is clear that it is invariant under the action of $\bS_n$, hence, it is an isomorphism between the copies of $S(t)$ in $M_{n-k}$ and $M_\ell$.
\end{rem}

\paragraph{Implementing $R_\Lambda$.}
Having the efficient implementation of the QFT of \rf(thm:qft) (proved in the next section), it is straightforward to implement the translation $R_\Lambda$ up to polynomially small error.
First, we run the QFT of \rf(thm:qft) in reverse, and obtain the representation of $M$ in the Fourier basis.
In this basis, by \rf(clm:LambdaDecompose), our task boils down to the reflection about the vector $\tw_t$ in the register $\reg L$, where $t$ is the value of the register $\reg T$. We show how to implement this reflection below. The register $\reg B$ is not used in this reflection. After that, we run the QFT of \rf(thm:qft).

\paragraph{Reflection about $\tw_t$.}
Reflection about $\tw_t$ is done using standard techniques.
We describe the transformation for a fixed $t$, but a key point is that they can all be done in superposition (conditioned on~$t$), so that the total time complexity of this step is the same as for a fixed value of $t$.

We may assume $\tw_t$ is a non-zero vector. To reflect about~$\tw_t$, it suffices to give an $\tO(n)$-time unitary~$U_t$ that maps $\ket L|t>$ to $\tw_t$: the reflection about $\tw_t$ is then implemented by first applying $U_t^{-1}$, then reflecting about $\ket L|t>$ (which is easy), and then applying $U_t$. 
To implement this map $U_t$, we start in the state $\ket L|t>$, and run through all the values of $\ell$ from $t$ to $n-k$.  In the course of this process, we gradually generate states of the form
\[
\tw_t\elem[t]\; \ket L|t> + \dots + \tw_t\elem[\ell-1]\; \ket L|\ell-1> + \sqrt{R}\; \ket L|\ell>,
\]
where $R = 1-\sum_{i=t}^{\ell-1} \absA|\tw_t\elem[i]|^2$ \bnote{added $|.|$} is the remaining weight to use.
At each step, we perform the transformation
\begin{equation}
\label{eqn:QFTstep}
\sqrt R\; \ket L|\ell> \mapsto \tw_t\elem[\ell]\; \ket L |\ell> + \sqrt{R'} \ket L|\ell+1>,
\end{equation}
where $R' = 1-\sum_{i=t}^{\ell} \absA|\tw_t\elem[i]|^2$.
To help with this, we keep two registers $\reg C$ and $\reg R$.
In $\reg C$ at step $\ell\ge t$, we store the number
\[
c_\ell = \frac{\binom{n-\ell-t}{k-t} \sqrt{\binom{n-2t}{\ell-t}}}{\|w_t\|}
\]
so that $\tw_t\elem[\ell] = \alpha_\ell c_\ell$ by~\rf(eqn:wtell).
In the register $\reg R$, we store $R$.  Both are stored as floating point real numbers with $\Theta(\log n)$ bits of precision.
\bnote{Added floating point}
With these values, and knowing $\alpha_\ell$, 
the transformation~\rf(eqn:QFTstep) is a routine polylogarithmic computation.

The registers $\reg C$ and $\reg R$ are manipulated as follows.  
We start by computing $\|w_t\|$ and then $c_t$.
We then upload the precomputed value of $c_t$ into $\reg C$, set $\ell=t$ and set $\reg R$ to 1.
After each step (\ref{eqn:QFTstep}) we use
\[
\frac{c_{\ell+1}}{c_\ell} = \frac{n-\ell-k}{\sqrt{(n-\ell-t)(\ell+1-t)}} ,
\]
to replace $c_\ell$ with $c_{\ell+1}$ in the register $\reg C$, in polylogarithmic time.
We also update $\reg R$ (in the straightforward way).
At the end, when $\ell=n-k$, we uncompute the values of $\reg C$ and $\reg R$.

It suffices to keep track of $c_\ell$ up to $A \log n$ relative bits of precision, for an appropriate $A$.  This can be achieved as $c_\ell$ is only multiplied by a constant at each step, and is never added or subtracted.
Then, $R$ is accurate within an absolute error of $O(1/n^{A-1})$ (to perform the entire reflection, 
we subtract $O(n)$ values $\absA|\tw_t\elem[i]|^2 = (\alpha_\ell c_\ell)^2$ from $R$, with absolute error $O(1/n^A)$ in each of those values).
This means that the transformation (\ref{eqn:QFTstep}) produces amplitude for $\ket L |\ell>$ that is within
$O(1/n^{A-1})$ of the correct amplitude $\tw_t\elem[\ell]$. 
Since this is achieved for every $\ell$ and there are $O(n)$ values of $\ell$, the state that is produced by
our algorithm is within Euclidean distance $O(1/n^{A-3/2})$ of the correct one.
The error analysis for the computation of $\|w_t\|$ at the beginning of the algorithm is similar.

Thus, each step takes time $\polylog(n)$, and there are $O(n)$ steps in total.
Computing $\|w_t\|$ and $c_t$ at the beginning 
and removal of the final values of $\reg C$ and $\reg R$ at the end also takes time $\tO(n)$.  
As both applications of the QFT take time $\tO(n)$ as well, the total running time of our implementation of the reflection $R_\Lambda$ is $\tO(n)$.

\subsection{Quantum Fourier Transform of the module \texorpdfstring{$M$}{M}}
\label{sec:QFT}
\mycommand{down}{\!\!\downarrow}
\bnote{05.07: Added the following}
In this section, we prove \rf(thm:qft).  Our algorithm is essentially the same as the quantum Schur-Weyl transform in~\cite{bch:prl04,bch:soda07}.  
However, the algorithm in~\cite{bch:prl04,bch:soda07} is defined for another group (a product of a general linear group and a symmetric group), and we have to verify that the algorithm really performs the transformation in~\rf(eqn:QFT). Therefore we will give a complete description of the algorithm here.

We will define our algorithm inductively in~$n$, so we add a superscript $n$ to the notations of the previous section in order to specify the value of $n$, e.g., $M^n, S^n(t), v^n_\ell(t,a,b)$ and so on.
%Next, let $\tilde v^n_\ell(t,a,b)$ denote the normalized vector $v^n_\ell(t,a,b)/\|v^n_\ell(t,a,b)\|$.

\paragraph{Basis for modules $S_t^n(t)$.}
We now define the orthonormal basis $\{e^n_t(t,x)\}_x$ of $S_t^n(t)$, for all $n, t\in \mathbb{N}$ 
such that $n\geq 2t$. %(If $n<2t$, the module $S_t^n(t)$ is not defined because then $(n-t, t)$ is not a partition of $n$ with elements of the partition sorted in decreasing order.) 
As in~\cite{bch:prl04,bch:soda07,KawanoS13}, our choice is a variant of the Gelfand-Tsetlin basis.
The index $x$ is encoded as a binary string of length $n$
(not all binary strings correspond to basis elements though), and the definition of $e^n_t(t,x)$ is as follows.

\mycommand{subalgebra}{\bS_{n-1}}
First, we define $e^n_0(0,0^n)$ as the basis element $\emptyset$ of $S^n_0(0)$.
For $t>0$, we define $e^n_t(t,x)$ in terms of $e^{n-1}_t(t,x)$.
For that, we first need to understand the interplay between $\algebra$- and $\subalgebra$-modules.

Since $\bS_{n-1}$ is a subgroup of $\bS_n$, we can look at the action of $\bS_{n-1}$ on $S^n(t)$. Let $S^n(t)\down_{\bS_{n-1}}$ denote $S^n(t)$, considered as an $\bS_{n-1}$-module.
The branching rule~\cite[Section~2.8]{sagan:symmetricGroup} says that $S^n(t)\down_{\bS_{n-1}}$ is $\bS_{n-1}$-isomorphic to  the direct sum $S^{n-1}(t) \oplus S^{n-1}(t-1)$.  (If $n=2t$, the $S^{n-1}(t)$ term is absent in the last expression, as $n-1<2t$.)

Let $\tau_0$ be the $\subalgebra$-isomorphism between $S_t^{n-1}(t)$ and the copy of $S^{n-1}(t)$ in $S^n_t(t)\down_{\bS_{n-1}}$, and $\tau_1$ be the $\subalgebra$-isomorphism between $S_{t-1}^{n-1}(t-1)$ and the copy of $S^{n-1}(t-1)$ in $S^n_t(t)\down_{\bS_{n-1}}$.  We normalize $\tau_0$ and $\tau_1$ so that they both become unitary operators.
We define $e^n_t(t,x0) = \tau_0 e^{n-1}_t(t,x)$ and $e^n_{t}(t,x1) = \tau_1 e^n_{t-1}(t-1,x)$, where $x$ runs through the basis elements of $S^{n-1}_t(t)$ and $S^{n-1}_{t-1}(t-1)$, respectively.
Thus, $\{e^n_t(t,x)\}_x$ is an orthonormal basis of $S_t^n(t)$, because 
$\tau_0$ and $\tau_1$ are unitary, and their images are orthogonal as vector spaces.

Let us describe the isomorphisms $\tau_0$ and $\tau_1$ explicitly.
If $n>2t$, the isomorphism $\tau_0$ is given by $\tv^{n-1}_t(t,a,b)\mapsto \tv^{n}_t(t,a,b)$.
It is more convenient to define $\tau_1$ as the $\subalgebra$-isomorphism between $S^{n-1}_{t}(t)$ and the copy of $S^{n-1}(t)$ in $S^n_{t+1}(t+1)\down_{\bS_{n-1}}$.
A non-normalized version of the isomorphism is given by
\begin{equation}
\label{eqn:QFTisomorphism}
v^{n-1}_{t}(t,a,b)\mapsto 
\sum_{i\in [n-1]\setminus\{a_1,\dots,a_{t},b_1,\dots,b_{t}\}} \sA[\{a_1\}-\{b_1\}]\otimes \cdots\otimes \sA[\{a_{t}\}-\{b_{t}\}] \otimes \sA[\{i\}-\{n\}] .
\end{equation}
Indeed, this is a linear combination of linear transformations $v^{n-1}_t(t,a,b)\mapsto v^{n-1}_{t+1}(t,a,b)$ and $v^{n-1}_t(t,a,b) \mapsto v^{n-1}_t(t,a,b)\otimes\{n\}$.  It is also clearly invariant under the action of $\bS_{n-1}$.  $\tau_1$ is the normalized version of this mapping, i.e., it maps $\tv^{n-1}_t(t,a,b)$ into the normalized vector on the right-hand side of~\rf(eqn:QFTisomorphism).

\paragraph{Identity between basis elements.}
Recall that $e_\ell(t,x) = \vartheta_{t\to\ell}\; e_t(t,x)$, where $\vartheta_{t\to\ell}$ is defined after~\rf(eqn:normalize).
\begin{clm}
\label{clm:QFTe}
We have
\begin{equation}
\label{eqn:QFT1}
e^n_\ell(t,x0) =
\sqrt{\frac{n-\ell-t}{n-2t}}\; e^{n-1}_\ell(t,x) + \sqrt{\frac{\ell-t}{n-2t}}\; e^{n-1}_{\ell-1}(t,x)\otimes \{n\}
\end{equation}
and
\begin{equation}
\label{eqn:QFT2}
e^n_\ell(t+1,x1) =
\sqrt{\frac{\ell-t}{n-2t}}\; e_\ell^{n-1} (t,x) - \sqrt{\frac{n-\ell-t}{n-2t}}\; e_{\ell-1}^{n-1} (t, x)\otimes\{n\},
\end{equation}
whenever the vectors on the left-hand side are defined.
\bnote{03.07: Commented in}\rnote{6/7: added the $\ell=0$ case}
(If $\ell=0$, $t=\ell$ or $n=t+\ell$, the right-hand sides of~\rf(eqn:QFT1) and~\rf(eqn:QFT2) contain only one term.)
Here we identify elements $S\subseteq[n-1]$ of the standard basis of $M^{n-1}$ with the corresponding basis elements of $M^n$.
\end{clm}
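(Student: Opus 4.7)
The plan is to reduce both identities to the level of the vectors $\tilde v^n_\ell(t,a,b)$, since all three maps $\tau_0$, $\tau_1$, and $\vartheta^n_{t\to\ell}$ are linear and act in a controlled way on them: by definition, $\tau_0(\tilde v^{n-1}_t(t,a,b))=\tilde v^n_t(t,a,b)$, and $\vartheta^n_{t\to\ell}$ sends $\tilde v^n_t(t,a,b)$ to $\tilde v^n_\ell(t,a,b)$. Expanding $e^{n-1}_t(t,x)$ in the basis $\{\tilde v^{n-1}_t(t,a,b)\}_{a,b}$ and propagating linearity through $\tau_0$, $\tau_1$, and $\vartheta$, it suffices to verify both identities after replacing $e^{n-1}_t(t,x)$ on each right-hand side by a single $\tilde v^{n-1}_t(t,a,b)$ (and interpreting $e^{n-1}_\ell(t,x)$, $e^{n-1}_{\ell-1}(t,x)$ accordingly as $\tilde v^{n-1}_\ell(t,a,b)$ and $\tilde v^{n-1}_{\ell-1}(t,a,b)$).

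For \rf(eqn:QFT1), I would work directly with the defining expression \rf(eqn:vkDef), splitting the inner sum over subsets $A\subseteq[n]\setminus\{a_1,\dots,b_t\}$ of size $\ell-t$ according to whether $n\in A$ or not. This yields the decomposition
\[
v^n_\ell(t,a,b)=v^{n-1}_\ell(t,a,b)+v^{n-1}_{\ell-1}(t,a,b)\otimes\{n\},
\]
and normalizing via \rf(eqn:normalize) with the ratios $\binom{n-1-2t}{\ell-t}/\binom{n-2t}{\ell-t}=(n-\ell-t)/(n-2t)$ and $\binom{n-1-2t}{\ell-t-1}/\binom{n-2t}{\ell-t}=(\ell-t)/(n-2t)$ produces exactly the coefficients $\sqrt{(n-\ell-t)/(n-2t)}$ and $\sqrt{(\ell-t)/(n-2t)}$ that appear in \rf(eqn:QFT1), since $\vartheta^n_{t\to\ell}\circ\tau_0$ sends $\tilde v^{n-1}_t(t,a,b)$ to $\tilde v^n_\ell(t,a,b)$.

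For \rf(eqn:QFT2), I would apply the explicit formula \rf(eqn:QFTisomorphism) for $\tau_1$ to $\tilde v^{n-1}_t(t,a,b)$: its image, after dividing by the norm $\sqrt{2^t(n-1-2t)(n-2t)}$ of the unnormalized right-hand side of \rf(eqn:QFTisomorphism), is proportional to a sum over $i\in[n-1]\setminus\{a_1,\dots,b_t\}$ of the vectors $v^n_{t+1}(t+1,a\cup\{i\},b\cup\{n\})$. Applying $\vartheta^n_{t+1\to\ell}$ term-by-term and re-expanding in the standard basis, the key combinatorial step counts, for each set $B\subseteq[n]\setminus\{a_1,\dots,b_t\}$ of size $\ell-t$, its net multiplicity in $\sum_i v^n_\ell(t+1,a\cup\{i\},b\cup\{n\})$: if $n\notin B$, it arises $\ell-t$ times (once for each $i\in B$, with $A=B\setminus\{i\}$), and if $n\in B$, it arises $-(n-\ell-t)$ times (once for each $i\in[n-1]\setminus(\{a,b\}\cup B)$, with the sign supplied by the $-\{n\}$ term in $\{i\}-\{n\}$). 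This yields
\[
\sum_i v^n_\ell(t+1,a\cup\{i\},b\cup\{n\})=(\ell-t)\,v^{n-1}_\ell(t,a,b)-(n-\ell-t)\,v^{n-1}_{\ell-1}(t,a,b)\otimes\{n\},
\]
and a final normalization via \rf(eqn:normalize) collapses the binomial factors to $\sqrt{(\ell-t)/(n-2t)}$ and $-\sqrt{(n-\ell-t)/(n-2t)}$, matching \rf(eqn:QFT2).

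The main technical step is the multiplicity argument for \rf(eqn:QFT2): both the sign structure of $\tau_1$ and the overlaps between the indexing sets $\{a,b,i\}$ and $B$ must be tracked carefully, and one must be vigilant about the double normalization (of $\tau_1$ first, then of the $v\to\tilde v$ conversion). Once the multiplicity count is in hand, the binomial simplifications are routine. The boundary cases $\ell=t$, $\ell=0$, and $n=\ell+t$ require no separate treatment, since in each one a vanishing binomial coefficient automatically kills the non-existent term on the right-hand side of \rf(eqn:QFT1) or \rf(eqn:QFT2).
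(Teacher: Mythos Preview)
Your proposal is correct and follows essentially the same approach as the paper: both reduce to the spanning vectors $\tilde v^{n-1}_t(t,a,b)$ by linearity, handle \rf(eqn:QFT1) by splitting the sum in \rf(eqn:vkDef) according to whether $n\in A$, and handle \rf(eqn:QFT2) by applying $\vartheta^n_{t+1\to\ell}$ to the right-hand side of \rf(eqn:QFTisomorphism) and performing exactly the multiplicity count you describe. Your exposition is slightly more explicit about the normalization bookkeeping (e.g., the norm $\sqrt{2^t(n-1-2t)(n-2t)}$ of the right-hand side of \rf(eqn:QFTisomorphism)), but the argument is the same.
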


\pfstart
To prove~\rf(eqn:QFT1), it suffices to show that for any two disjoint sequences $a=(a_1,\dots,a_t)$ and $b=(b_1,\dots,b_t)$ of distinct elements of $[n-1]$, we have
\[
\tilde v^n_\ell(t,a,b) = 
\sqrt{\frac{n-\ell-t}{n-2t}}\; \tilde v^{n-1}_\ell(t,a,b) + \sqrt{\frac{\ell-t}{n-2t}}\; \tilde v^{n-1}_{\ell-1}(t,a,b)\otimes \{n\},
\]
as~\rf(eqn:QFT1) then follows by linearity.
\rnote{15/5: The last part goes too fast: it's not obvious how the normalization of the $v_\ell$ and the linearity of (27) interact.  Add a calculation to show that arguing about the $\tilde v_\ell$ vectors really suffices?} \bnote{No, this is obvious.  I changed how we define the basis elements to make this more transparent.}
The latter equality follows from~\rf(eqn:vkDef), as a uniformly random $(\ell-t)$-subset $A$ of $[n]\setminus \{a_1,\dots,a_t,b_1,\dots,b_t\}$ has probability $(\ell-t)/(n-2t)$ of including the element~$n$.

To prove~\rf(eqn:QFT2), let us apply $\vartheta^n_{t+1\to \ell}$ to the right-hand side of~\rf(eqn:QFTisomorphism).  The resulting vector is
\[
\sA[\{a_1\}-\{b_1\}]\otimes \cdots\otimes \sA[\{a_{t}\}-\{b_{t}\}]
\otimes\sD[ (\ell-t)\!\!\!\!\sum_{\substack{A\subseteq[n]\setminus \{a_1,\dots,a_t,b_1,\dots,b_t\}\\ |A|=\ell-t,\; n\notin A}}\!\!\!\! A 
\;-\; (n-\ell-t)\!\!\!\! \sum_{\substack{A\subseteq[n]\setminus \{a_1,\dots,a_t,b_1,\dots,b_t\}\\ |A|=\ell-t,\; n\in A}}\!\!\!\! A 
],
\]
because each $A\not\ni n$ appears once from each term $i\in A$ of the sum in~\rf(eqn:QFTisomorphism), and each $A\ni n$ appears once from each term $i\notin A$.  By an easy calculation, the normalized version of this vector is
\[
\sqrt{\frac{\ell-t}{n-2t}} \tilde v_\ell^{n-1} (t, a,b) - \sqrt{\frac{n-\ell-t}{n-2t}} \tilde v_{\ell-1}^{n-1} (t, a,b)\otimes\{n\},
\]
and~\rf(eqn:QFT2) again follows by linearity.
\pfend

\paragraph{Quantum Fourier transform.}
With \rf(clm:QFTe) in hand, it is easy to describe the algorithm of \rf(thm:qft).
Both $\reg A$ and $\reg B$ are represented as arrays of qubits $\reg A_1,\dots,\reg A_n$ and $\reg B_1,\dots,\reg B_n$, respectively.  A standard basis element $S\in M^n$ is represented as $\ket A|S>$, where $\reg A_i$ contains 1 iff $i\in S$.
A string $x=(x_i)\in\cube$ is represented as $\ket B|x>$, where $\reg B_i$ stores $x_i$.
The transformation is described in \rf(alg:QFT).

\begin{algorithm}[htb]
\algcaption{Quantum Fourier Transform of the module $M^n$\\
Given:& Positive integer $n$, and registers $\reg A$, $\reg B$, $\reg T$ and $\reg L$ as described above.\\
Action:& {Transformation $\ket T|t> \ket L|\ell>\ket B|x> \mapsto \ket A|e_\ell(t,x)>$}.}
\label{alg:QFT}
\enumstart
\item While $n>1${\bf :}
\negmedskip
	\enumstart
	\item Let $\reg T \gets \reg T - \reg B_n$
	\item Perform the unitary qubit transformation
	\[ 	
	\ket |0> \mapsto \sqrt{\frac{n-\ell-t}{n-2t}} \ket |0> + \sqrt{\frac{\ell-t}{n-2t}} \ket|1>,
	\qquad \ket |1> \mapsto \sqrt{\frac{\ell-t}{n-2t}} \ket|0> - \sqrt{\frac{n-\ell-t}{n-2t}} \ket|1>
	\]
	on $\reg B_n$, conditioned on the content $\ket T|t>\ket L|\ell>$ of the registers $\reg T$ and $\reg L$ (and on $2t<n$ and $t\le \ell\le n-t$.)
	\item Swap $\reg B_n$ and $\reg A_n$, let $\reg L \gets \reg L - \reg A_n$  and $n\gets n-1$.
	\enumend
\item Implement the necessary transformation in the case $n=1$.
\enumend
\end{algorithm}

The correctness of the algorithm can be proven by induction on $n$.  The base case $n=1$ is trivial.
Next, for a fixed value of $n$, the first iteration of the loop in Step~1 performs the transformation
\begin{align*}
{\ket T|t> \ket L|\ell>\ket B_n |0> } &\mapsto 
{\sqrt{\frac{n-\ell-t}{n-2t}} \ket T|t> \ket L|\ell> \ket A_n |0>}
+ {\sqrt{\frac{\ell-t}{n-2t}} \ket T|t> \ket L|\ell-1> \ket A_n |1> }
\\
{ \ket T|t+1> \ket L|\ell>\ket B_n |1>} &\mapsto 
{\sqrt{\frac{\ell-t}{n-2t}} \ket T|t> \ket L|\ell> \ket A_n |0>  }
- {\sqrt{\frac{n-\ell-t}{n-2t}} \ket T|t> \ket L|\ell-1> \ket A_n |1> }
\end{align*}
By the inductive assumption, the remaining iterations of the loop and Step 2 correctly perform the Quantum Fourier Transform of $M^{n-1}$, hence, the whole algorithm performs the Quantum Fourier Transform of $M^n$ due to \rf(clm:QFTe).

Using standard techniques, each iteration of the loop can be performed with sufficiently high precision in time $\polylog(n)$. 
%\rnote{31/3: added the following sentence}
Specifically, the one-qubit transformation for part~(b) can be implemented (conditioned on $t$ and $\ell$) with sufficiently high precision using $\polylog(n)$ gates from any universal set of elementary gates.
The error (in the operator norm) of this algorithm can be made smaller than any inverse polynomial.
The second step takes time $O(1)$.  
The total time complexity of the algorithm is thus $\tO(n)$. 

\section{Quantum lower bound for junta testing}
\label{sec:lower}
Let us assume that $\eps = \Omega(1)$.
Tight classical lower bounds on junta testing~\cite{chockler:testingJuntasLower, blais:testingLowerViaCommunication} are based on distinguishing a $k$-junta from a function that depends on $k+O(1)$ variables.  As noted by \atici and Servedio~\cite{atici:testingJuntas}, this approach is doomed in the quantum setting because these two cases can be distinguished in $O(\log k)$ quantum queries as follows.
For a function that depends on only $k+O(1)$ variables but is far from any $k$-junta, it follows from \rf(lem:weightonextravars) that at least one of the $O(1)$ ``extra'' variables has $\Omega(1)$ influence. Hence there exists a subset $S\subseteq[n]$ of $k+1$ variables each having influence $\Omega(1)$. Each of those $k+1$ variables will occur in a Fourier Sample with constant probability, so the probability that a fixed variable from $S$ is not seen in $t$ Fourier Samples is exponentially small in $t$.  By the union bound, after $t=O(\log k)$ Fourier Samples, with high probability all $k+1$ variables of $S$ will have been seen and we can conclude the function is not a $k$-junta. 

Instead of this, \atici and Servedio presented a different approach based on distinguishing a $k$-junta from a function that depends on $k+\Omega(k)$ variables.  Using this technique, they proved an $\Omega(\sqrt{k})$ lower bound for a special class of \emph{non-adaptive} quantum algorithms.  

In this section, we give an explicit description of the \atici-Servedio construction, and use it to prove a quantum lower bound for the junta testing problem.  Consider the following problem.

\begin{defn}[Testing the image size]
\label{defn:rangeSize}
An \emph{image size tester}, given oracle access to a function $g\colon [m]\to[n]$, is required to distinguish \bnote{removed: with bounded error} whether the image of $g$ is of size at most $\ell$, or $g$ is $\eps$-far away from any such function. 
\end{defn}

It turns out that a junta tester can be used to solve this problem.  The connection is through the following ancillary function.

\begin{defn}[Addressing function]
Assume that $m$ is a power of two, and $g\colon [m]\to[n]$ is a function.  We define the corresponding \emph{addressing function} $f\colon \bool^{n + \log m}\to \sbool$ as follows.
Interpret the input string $x$ of $f$ as a concatenation $yz$ with $y\in\cube$ and $z\in\{0,1\}^{\log m} = [m]$.  Then, $f(x) = (-1)^{y_{g(z)}}$.
The variables in $y$ are called \emph{addressed variables}, and the variables in $z$ are called the \emph{address variables}.
\end{defn}

It is easy to see that a quantum query to $f$ can be simulated by two quantum queries to $g$:  one to compute $g(z)$, and one to uncompute it.

\begin{lem}
\label{lem:lowerGeneral}
For a function $g\colon[m]\to[n]$ with $m$ a power of $2$, let $f\colon \bool^{n + \log m}\to \sbool$ be the corresponding addressing function.
Let $\ell \ge 1$ be an integer and define $k=\ell+\log m$.  
If the size of the image of $g$ does not exceed~$\ell$, then $f$ is a $k$-junta.
Conversely, if $g$ is $\eps$-far from any function with an image of size at most $\ell$,
then $f$ is $\eps'/2$-far from any $k$-junta where $\eps'=\eps - (\log m)/k$. 
\end{lem}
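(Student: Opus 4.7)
For the forward direction, observe that if the image of $g$ is contained in a set $I\subseteq[n]$ with $|I|\le\ell$, then $f(y,z)=(-1)^{y_{g(z)}}$ depends only on the $\log m$ address bits of $z$ together with the at most $\ell$ addressed variables $\{y_i:i\in I\}$, for a total of at most $k$ variables. Hence $f$ is a $k$-junta.

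For the converse we shall argue the contrapositive: assuming $f$ is $\eps'/2$-close to some $k$-junta $h$, we will produce a function $g'\colon[m]\to[n]$ with image of size at most $\ell$ such that $\Pr_z[g(z)\neq g'(z)]\le\eps$. Write the relevant set of $h$ as $J=J_y\sqcup J_z$ with $J_y\subseteq[n]$ (addressed variables) and $J_z\subseteq[\log m]$ (address variables), so $|J_y|+|J_z|\le k$. The first step is the observation that for any $z$ with $g(z)\notin J_y$, the function $y\mapsto h(y,z)$ does not depend on $y_{g(z)}$, while $f(y,z)=(-1)^{y_{g(z)}}$ depends entirely on it; hence $\Pr_y[f(y,z)\neq h(y,z)]=1/2$ for every such $z$. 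Averaging over $z$ yields $\Pr_z[g(z)\notin J_y]\le 2\Pr[f\neq h]\le\eps'$, which, setting $p_i:=\Pr_z[g(z)=i]$, is the same as $\sum_{i\in J_y}p_i\ge 1-\eps'$.

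Next we choose $I\subseteq J_y$ to consist of $\min(\ell,|J_y|)$ indices of $J_y$ with the largest values of $p_i$, and define $g'(z)=g(z)$ if $g(z)\in I$ and $g'(z)=i_0$ otherwise, for some fixed $i_0\in I$, so that $|\mathrm{image}(g')|\le\ell$. The main technical step, and the one we expect to be the crux, is to bound the tail $\sum_{i\in J_y\setminus I}p_i$ by $(\log m)/k$. For this we shall use the elementary fact that the top-$\ell$ average of a finite collection of nonnegative numbers is at least the overall average; when $|J_y|>\ell$ this gives
\[
\sum_{i\in J_y\setminus I}p_i\le\frac{|J_y|-\ell}{|J_y|}\sum_{i\in J_y}p_i\le\frac{|J_y|-\ell}{|J_y|}\le\frac{k-\ell}{k}=\frac{\log m}{k},
\]
where the last step uses that $1-\ell/|J_y|$ is monotone nondecreasing in $|J_y|\in[\ell,k]$; when $|J_y|\le\ell$ the tail sum is empty. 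Combining the two estimates,
\[
\Pr_z[g\neq g']\le\Pr_z[g\notin J_y]+\Pr_z[g\in J_y\setminus I]\le\eps'+\frac{\log m}{k}=\eps,
\]
which establishes the contrapositive and hence the lemma.
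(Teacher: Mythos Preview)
Your proof is correct and follows essentially the same approach as the paper's: both hinge on the observation that whenever $g(z)$ lies outside the set of relevant addressed variables of a $k$-junta, $f$ and the junta disagree on exactly half of the $y$'s, together with the averaging argument that shrinking a size-$|J_y|\le k$ image down to size~$\ell$ costs at most a $(\log m)/k$ fraction of inputs. The only organizational difference is that the paper argues directly (first showing $g$ is $\eps'$-far from any function with image size at most~$k$, then using this for an arbitrary $k$-set~$W$), whereas you argue the contrapositive (starting from a specific $k$-junta~$h$ and constructing~$g'$); the two are logically equivalent and the key steps are the same.
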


\pfstart
The first statement is obvious. 
So assume $g$ is $\eps$-far from any function with an image of size at most $\ell$. 
We claim that $g$ is also $\eps'$-far from any function 
with an image of size at most $k$. Indeed, if $h$ is a function with image 
of size at most $k$, we can reduce its image to be of size at most $\ell$ 
by modifying it on at most a $(\log m)/k$ fraction of inputs corresponding to the ``least popular'' 
outputs. 

In order to show that $f$ is $\eps'/2$-far from any $k$-junta,
take an arbitrary $k$-subset $W\subseteq[n+\log m]$,
and any Boolean function $h$ depending only on the variables in $W$.
We want to show that $f$ is $\eps'/2$-far from $h$. 
Indeed, by the previous claim, at least $\eps'$ fraction of the inputs to $g$ 
map to indices outside $W \cap [n]$. For any such $z \in [m]$, and any $y \in \cube$,
exactly one $x\in\{yz, y^{\oplus g(z)}z\}$ satisfies $f(x)\ne h(x)$ (where $x^{\oplus j}$ stands for $x$ with the $j$th bit flipped).  
Hence, the distance between $f$ and $h$ is at least $\eps'/2$.
\pfend

Let us now state some corollaries of this result.
First, we get an upper bound on the quantum query complexity of testing the support size.

\begin{cor}
If $\log m = o(\ell)$, the image size can be tested in $O\sA[\sqrt{\ell/\eps}\log\ell]$ quantum queries.
\end{cor}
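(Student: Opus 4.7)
The plan is to reduce testing the image size of $g$ to junta testing on the associated addressing function $f$, and then apply the junta tester of \rf(thm:juntaMain). Given oracle access to $g\colon[m]\to[n]$, each query to $f$ can be simulated by two queries to $g$ (compute $g(z)$ into an ancilla, perform the phase flip indexed by $y_{g(z)}$, then uncompute $g(z)$); so it suffices to bound the number of queries to $f$.

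Set $k=\ell+\log m$ and $\eps'=\eps-(\log m)/k$. By the preceding lemma, if the image of $g$ has size at most $\ell$ then $f$ is a $k$-junta, and if $g$ is $\eps$-far from any function with image of size at most $\ell$ then $f$ is $\eps'/2$-far from any $k$-junta. Under the assumption $\log m = o(\ell)$ (together with the standing assumption $\eps=\Omega(1)$), we have $k=(1+o(1))\ell$ and $(\log m)/k=o(1)$, so $\eps'/2 = \eps/2 - o(1) = \Omega(\eps)$. Thus the two cases form a valid instance of $k$-junta testing with distance parameter $\Theta(\eps)$.

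Now invoke \rf(thm:juntaMain) to test whether $f$ is a $k$-junta or $\eps'/2$-far from any $k$-junta. The cost is $O\sA[\sqrt{k/\eps'}\log k]$ queries to $f$, which becomes $O\sA[\sqrt{\ell/\eps}\log\ell]$ queries to $g$ after substituting $k=\Theta(\ell)$, $\eps'=\Theta(\eps)$, and accounting for the factor~$2$ in simulating queries to~$f$. This is the claimed bound, so the only step that requires care is checking that the parameter translation preserves the relevant orders of magnitude, which is handled by the hypothesis $\log m = o(\ell)$.
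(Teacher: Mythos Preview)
Your proof is correct and follows exactly the approach the paper intends: the corollary is stated without proof there, as it is meant to follow immediately by combining \rf(lem:lowerGeneral) with \rf(thm:juntaMain) and the two-query simulation of $f$ by $g$. Your explicit use of the section's standing assumption $\eps=\Omega(1)$ to get $\eps'=\Theta(\eps)$ is the only subtlety, and you handle it correctly.
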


More importantly, however, we get a lower bound on the quantum query complexity of junta testing.  This is based on the following well-known special case of the image size testing problem.

\begin{defn}[Collision Problem~\cite{brassard:collision}]
Let $m$ be an even integer, and $n\in\bN$.  In the \emph{collision problem}, one is given oracle access to a function $g\colon [m]\to[n]$, that is either 1-to-1 or 2-to-1.  The task is decide which is the case.
\end{defn}
Brassard \etal~\cite{brassard:collision} constructed a quantum $O(m^{1/3})$-query algorithm for the collision problem.  Later, Aaronson and Shi~\cite{shi:collisionLower} proved a matching lower bound:

\begin{thm}
\label{thm:collisionLower}
The bounded-error quantum query complexity of the collision problem is $\Omega(m^{1/3})$.
\end{thm}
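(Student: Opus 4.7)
The plan is to apply the polynomial method of Beals, Buhrman, Cleve, Mosca and de Wolf. Encoding the oracle $g\colon[m]\to[n]$ as a Boolean string $x\in\{0,1\}^{mn}$ with $x_{i,j}=1$ iff $g(i)=j$, I would observe that the acceptance probability $P(x)$ of any $T$-query bounded-error quantum algorithm is a multilinear polynomial of total degree at most $2T$ in these variables, while the constraint ``$g$ is a function'' (exactly one $j$ with $x_{i,j}=1$ for each $i$) keeps $x$ on a well-defined set of inputs.

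Next I would symmetrize $P$ under the natural action of $\bS_m\times\bS_n$ on the oracle. Because the collision property is invariant under permutations of the domain $[m]$ and of the range $[n]$, the symmetrized polynomial still has the required constant gap in acceptance probability between 1-to-1 and 2-to-1 inputs, and still has degree at most $2T$. Restricting to the natural family of inputs $g_r$ that are $r$-to-$1$ onto an image of size $m/r$ (for integers $r$ dividing $m$), the symmetrized acceptance probability reduces to a univariate polynomial $q(r)$ of degree at most $2T$ satisfying $0\le q(r)\le 1$ at all integer $r$ in a suitable range, with $|q(1)-q(2)|\ge 1/3$.

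The task then reduces to proving a polynomial lower bound: any such $q$ must have degree $\Omega(m^{1/3})$. This is the crux and also the main obstacle. A direct application of Markov's or Paturi's inequality yields only much weaker bounds (indeed, Aaronson's original argument gave only $\Omega(m^{1/5})$). Shi's improvement, which I would follow, combines the boundedness of $q$ on $\{1,2,\dots,m\}$ with a rational-function change of variables that effectively stretches out the neighborhood of $r=1,2$, followed by an Ehlich--Zeller / Coppersmith--Rivlin style inequality that penalizes polynomials bounded at many integer points but exhibiting a large gap between two nearby ones.

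The first two steps are essentially routine once one has the framework in hand; almost all of the technical work lives in the final polynomial inequality. Since a self-contained treatment of that inequality would essentially reproduce the proof in~\cite{shi:collisionLower}, in the paper itself I would simply cite that reference and take the statement as a black box for the junta-testing lower bound.
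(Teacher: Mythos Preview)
Your proposal is correct and matches the paper's approach: the paper does not prove this theorem at all but simply cites \cite{shi:collisionLower} and uses it as a black box, exactly as you conclude you would do. Your sketch of the Aaronson--Shi argument is accurate, but the paper omits even that much detail.
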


If $g\colon[m]\to[n]$ is 2-to-1, then its image size is $m/2$ and the corresponding addressing function $f$ depends on only $m/2+\log m$ variables.  
On the other hand, if $g$ is 1-to-1, then its image size is $m$ and $f$ depends on $m+\log m$ variables.  
Moreover, it follows from \rf(lem:lowerGeneral) that $f$ is $1/5$-far from any 
$(m/2+\log m)$-junta if $m$ is large enough. 
Combined with \rf(thm:collisionLower), we get
\begin{thm}
Every quantum tester that distinguishes $k$-juntas from functions that are $1/5$-far from any $k$-junta with bounded error, needs to make $\Omega(k^{1/3})$ queries to the function.
\end{thm}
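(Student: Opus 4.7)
The plan is to instantiate the reduction machinery already developed in this section: use \rf(lem:lowerGeneral) to convert the collision problem into a junta-testing instance, and then apply the Aaronson--Shi collision lower bound (\rf(thm:collisionLower)) via the two-queries-to-$g$-per-query-to-$f$ simulation noted right after the definition of the addressing function.

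Concretely, given a target junta size $k$, I would choose $m$ to be a power of two of order $\Theta(k)$ with $m/2 + \log m \le k$ (so, e.g., the largest such power of two, which makes $m = 2k - O(\log k)$). Set $\ell = m/2$, so that $k' := \ell + \log m \le k$, and let $g \colon [m] \to [n]$ be an input to the collision problem (with $n$, say, equal to $m$). If $g$ is $2$-to-$1$, its image has size exactly $\ell$, so by the first part of \rf(lem:lowerGeneral) the associated addressing function $f \colon \{0,1\}^{n + \log m} \to \{\pm 1\}$ is a $k'$-junta, and hence a $k$-junta. If instead $g$ is $1$-to-$1$, a pigeonhole argument shows $g$ is $1/2$-far from every function of image size at most $\ell = m/2$: any such function agrees with $g$ on at most $\ell$ inputs, because the $m$ values $g(1),\ldots,g(m)$ are distinct and at most $\ell$ of them can lie in the candidate image. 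The second part of \rf(lem:lowerGeneral) then gives that $f$ is $\eps'/2$-far from every $k'$-junta with $\eps' = 1/2 - (\log m)/k'$, and in particular from every $k$-junta. Since $\eps'/2 \to 1/4$ as $k \to \infty$, for all sufficiently large $k$ we have $\eps'/2 \ge 1/5$.

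Now suppose we had a bounded-error quantum tester $\mathcal{T}$ distinguishing $k$-juntas from functions $1/5$-far from any $k$-junta using $q(k)$ queries to $f$. Simulating each query to $f$ by two queries to $g$ (one to compute $g(z)$ into an ancilla and one to uncompute it, as noted just before \rf(lem:lowerGeneral)), $\mathcal{T}$ yields a bounded-error quantum algorithm that solves the collision problem on domain size $m = \Theta(k)$ using $2 q(k)$ queries to $g$. By \rf(thm:collisionLower), $2 q(k) = \Omega(m^{1/3}) = \Omega(k^{1/3})$, which gives the desired $q(k) = \Omega(k^{1/3})$ lower bound.

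No step here looks technically hard: the combinatorics of the two reductions is already packaged in \rf(lem:lowerGeneral), the simulation is standard, and \rf(thm:collisionLower) is cited as a black box. The only thing to be careful about is the bookkeeping between $k$ and $k'$ (so that the $k'$-junta conclusion upgrades to a $k$-junta conclusion) and verifying the $\eps'/2 \ge 1/5$ inequality for large enough $k$; both are routine. If one wanted to avoid the ``for large enough $k$'' clause, one could pick the constants in the statement slightly looser, but as stated $1/5 < 1/4$ gives ample room.
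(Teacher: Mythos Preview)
Your overall plan is exactly the paper's: reduce from the collision problem via the addressing function and \rf(lem:lowerGeneral), then invoke \rf(thm:collisionLower). But one step is wrong. You conclude that $f$ is $\eps'/2$-far from every $k'$-junta, ``and in particular from every $k$-junta.'' Since you chose $k' \le k$, the class of $k$-juntas \emph{contains} the class of $k'$-juntas, so the implication goes the wrong way: being far from all $k'$-juntas does not give being far from all $k$-juntas. This is not just a misprint. With your choice of $m$ (the largest power of two with $m/2+\log m \le k$) you only know $k \le m + \log m$, so $k$ can be essentially $m+\log m$; in the $1$-to-$1$ case $f$ depends on exactly $m+\log m$ variables and is then itself a $k$-junta (or extremely close to one). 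Concretely, for $k=100$ your recipe gives $m=128$, $k'=71$, and the addressing function of a $1$-to-$1$ $g$ is only $35/256\approx 0.137$ away from the nearest $100$-junta (drop $35$ addressed variables, each costing $1/(2m)$), not $1/5$-far.

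The repair is easy. The paper simply takes $k = m/2 + \log m$ (i.e., $k=k'$) and proves the bound for those $k$. If you want to handle a general $k$, apply \rf(lem:lowerGeneral) with $\ell = k - \log m$ (so its conclusion is directly about $k$-juntas), and choose $m$ so that a $1$-to-$1$ $g$ remains $\Omega(1)$-far from image size at most $k-\log m$; this forces $m$ close to $2k$ rather than merely $\Theta(k)$, and for some $k$ you may have to drop the power-of-two restriction (use $\lceil\log m\rceil$ address bits) to hit the required range.
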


%The following function is important in their construction, and we will use it as well:
%
%
%
%
%\onote{5/20 removing: 
%The last observation has little to do with 
%the collision function, and can be stated in the following general form.}
%
%
%
%\onote{5/20: old statement:
%Let $\eps>0$ be fixed.
%Assume that there exists a quantum algorithm that $\eps$-tests $k$-juntas in $O(k^c)$ quantum queries for some fixed $c>0$.  Then, for any fixed $\eps'>2\eps$, there exists a quantum image size tester (as defined in \rf(defn:rangeSize)) that $\eps'$-tests whether the image size of a function $g\colon [m]\to[n]$ is at most $\ell$ using $O(\ell^c)$ queries, provided that $m$ is a power of two and $\log m = o(\ell)$.
%}

%\section{Fourier dimensionality}
%\label{sec:other}
%\input{other}

%\section{Quantum Algorithm for Sparsity Testing}
%\input{sparsity}

%\section{Alternative Support Size Tester}
%\label{sec:alternativeSupport}
%\input{sparsity}

\section{Conclusion and open problems}

In this paper we presented quantum algorithms for testing several well known properties of Boolean functions.
Our main result is a quantum algorithm for the $k$-junta testing problem with query complexity $O(\sqrt{k/\eps}\log k)$, and a time-efficient implementation of this based on a new near-linear time implementation of a shallow version of the quantum Fourier transform over the symmetric group.
The query complexity of our tester is almost quadratically better than the best previous quantum tester and also almost quadratically better than the best-possible classical tester. 
%We also presented quantum algorithms for testing Fourier dimensionality, Fourier sparsity and support size.

The topics for future work include:
\begin{enumerate}
\item
{\bf Better lower bound for junta testing.}
The main open question is: what is the actual quantum query complexity of this problem?  

We believe that the true answer is around $\sqrt{k/\eps}$ but it is quite challenging to improve our current lower bound of $\Omega(k^{1/3})$.
Nevertheless, we think that \rf(lem:lowerGeneral) may give a lower bound of $\Omega(k^{1/2-\delta})$ for any $\delta>0$.
In particular, we think that it should be possible to combine the lower bound construction by Raskhodnikova \etal~\cite{raskhodnikova:approximatingSupportSize} with two recent developments in quantum lower bounds: Zhandry's new machinery for the polynomial method~\cite{zhandry:howToConstruct}, which he applied to the collision and the set equality problems~\cite{zhandry:setEquality}, and Belovs's and Rosmanis's tight adversary lower bounds for the same functions~\cite{belovs:setEquality}.

\item
{\bf Better upper bound.}
Regarding the upper bound, we wonder if the $\log k$ factor can be removed.  
This question is essentially equivalent to finding a solution to the adversary bound for the \ggt problem that works \emph{for all values of $d$ simultaneously}.  
By this, we mean a feasible solution to~\rf(eqn:advNew) such that 
\[
\sum_{S\subseteq[n]} X_S\elem[f,f] = 
\begin{cases}
O(\sqrt{k}),& \text{if $f$ is an $\Inter_A$ function with $|A|=k$;}\\
O\s[\frac{\sqrt{k}}d],& \text{if $f$ is an $\Inter_A$ function with $|A|=k+d>k$.}
\end{cases}
\]
Note that our current solution does not satisfy this property because we use different rescaling for each value of $d$.  A different approach may be needed to obtain this property.
\item
{\bf Other applications of \qggt and our QFT.} \bnote{05.07: Updated here.}
Several of our algorithms are based on a quantum algorithm for a group testing problem, \qggt, which we find quite interesting in its own right, as it shows a quartic quantum-over-classical speedup. We think there might be more applications for \qggt waiting to be found.

%Another interesting tool that we developed in this paper is 
%the efficient quantum Fourier transform over part of the symmetric group (\rf(sec:lambda)).
%Does it have other applications?
\end{enumerate}

\subsection*{Acknowledgements.}
We thank Ashley Montanaro for getting some of us interested in quantum junta testers in the first place, and for initial discussions; Eric Blais for some discussions about monotonicity testing, and in particular for his suggestion to study the classical complexity of the GGT problem; Mark Zhandry for answering a question about~\cite{zhandry:setEquality}; Rocco Servedio for sending us a copy of~\cite{stw:adaptivityhelps}; Alexander Russell for helpful discussions about the quantum Fourier transform over the symmetric group; \rnote{6/7: added}Aram Harrow for answering a question about the Schur-Weyl transform; and Jeroen Zuiddam for some helpful comments. \bnote{05.07: Added}  We thank anonymous referees for many helpful suggestions and bringing our attention to references~\cite{bch:prl04, garciasoriano:phd}.

\bibliographystyle{habbrvM}
\bibliography{bib}

\appendix

\section{Proof of \rf(prp:irrelevant)(b)}
\label{app:irrelevant}

The proof of this result is the same as the proof of \rf(thm:composition).  We use the following easy properties of semi-definite matrices.

\begin{prp}
\label{prp:sdm}
Let $A$ and $B$ be positive semi-definite matrices of the same size.  Then, the following matrices are also positive semi-definite:
\itemstart
\item[(a)] obtained from $A$ by simultaneously duplicating one of its rows and the same column;
\item[(b)] the Hadamard (entry-wise) product $A\circ B$.
\itemend
\end{prp}

First, trivially,
\[
\begin{pmatrix}
1&0&0&1\\
0&1&1&0\\
0&1&1&0\\
1&0&0&1\\
\end{pmatrix} \succeq 0.
\]
For $j\in[n]$, use \rf(prp:sdm)(a) to duplicate rows and columns of this matrix, so that its rows correspond to the following four groups of inputs: $z\in\fin$, $z_j=0$; $z\in\fin$, $z_j=1$; $z\in\fip$, $z_j=0$; and $z\in\fip$, $z_j=1$.  We can take the Hadamard product of this matrix and $X_j$, so we may assume without loss of generality that, for all $x,y$ satisfying $F(x)\ne F(y)$, instead of~\rf(eqn:advOrigCondition) we have the following stronger condition:
\begin{equation}
\label{eqn:newCondition}
X_j\elem[x,y]=0\quad\text{if}\quad x_j = y_j\;,\qquad\text{and}\qquad \sum_{j\in [n]} X_j\elem[x,y]=1.
\end{equation}

Let $X^{(j)}_i$ be an optimal solution to~\rf(eqn:advOrig) for $G_j$.  Then, we define a feasible solution $\tX_{ji}$ to~\rf(eqn:advOrig) for the composed function $F\circ(G_1,\dots,G_n)$ in the following way.  

Let 
\(
x=(x_{11},\dots,x_{1m_1},\;\dots\;,x_{n1},\dots,x_{nm_n})
\)
be an input in the domain of $F\circ(G_1,\dots,G_n)$.  Thus, there exists $z$ in the domain of $f$ such that $z_j = G_j(x_{j1},\dots,x_{jm_j})$ for all relevant $j$.  Fix this choice of $z$ to $x$.  Assume that $y=(y_{11},\dots,y_{1m_1},\;\dots\;,y_{n1},\dots,y_{nm_n})$ and $t$ are defined analogously to $x$ and $z$, respectively. Then, define
\[
\tX_{ji}\elem[x, y] = X_j\elem[z,t]X^{(j)}_i\elem[(x_{j1},\dots,x_{jm_j}),(y_{j1},\dots,y_{jm_j})],
\]
where we extend $X^{(j)}_i$ with zeroes for all inputs outside the domain of $G_j$.

First, $\tX_{ji}\succeq 0$ due to \rf(prp:sdm).  Next, if $F(z)\ne F(t)$, then
\[
\sum_{ji\colon x_{ji}\ne y_{ji}} \tX_{ji}\elem[x, y]=
\sum_{j\in [n]} X_j\elem[z,t] \sum_{i\colon x_{ji}\ne y_{ji}} X_i^{(j)}\elem[(x_{j1},\dots,x_{jm_j}),(y_{j1},\dots,y_{jm_j})] = \sum_{j\in [n]} X_j\elem[z,t] = 1.
\]
Here we used that if $X_j\elem[z,t]\ne 0$, then $j$ is relevant for both $z$ and $t$, and $G_j(x_{j1},\ldots,x_{jm_j}) \ne G_j(y_{j1},\ldots,y_{jm_j})$ because of~\rf(eqn:newCondition).
Finally, we have
\[
\sum_{ji} \tX_{ji}\elem[x,x] = \sum_{j\in [n]} X_j\elem[z,z] \sum_{i} X_i^{(j)}\elem[(x_{j1},\dots,x_{jm_j}),(x_{j1},\dots,x_{jm_j})] \le Q\; \max_{j\in[n]} \Adv(G_j).
\]

\end{document}